\documentclass[11pt, a4paper]{amsart}  
\usepackage{amsmath,amssymb,amsfonts,times,hyperref}

\title[SDP, harmonic analysis and coding theory]{Semidefinite programming, harmonic analysis and coding theory}

\author{Christine Bachoc} 
\address{C. Bachoc, Institut de Math\'ematiques de Bordeaux, Universit\'e Bordeaux I, 351,
cours de la Li\-b\'e\-ration, 33405 Talence France}
\email{bachoc@math.u-bordeaux1.fr}

\subjclass{52C17, 90C22} 
\keywords{spherical codes, kissing number, semidefinite programming,
  orthogonal polynomials}

\date{today}
\newtheorem{defi}{Definition}[section]
\newtheorem{definition}[defi]{Definition}
\newtheorem{proposition}[defi]{Proposition}
\newtheorem{theorem}[defi]{Theorem}
\newtheorem{remark}[defi]{Remark}

\newtheorem{lemma}[defi]{Lemma}

\newcommand{\Z}{{\mathbb{Z}}} 

\newcommand{\R}{{\mathbb{R}}} 
\newcommand{\C}{{\mathbb{C}}}
\newcommand{\F}{{\mathbb{F}}} 
\newcommand{\OO}{{\mathbb{O}}} 
\newcommand{\Pp}{{\mathbb{P}}} 

\newcommand{\CC}{{\mathcal{C}}} 
\newcommand{\PP}{{\mathcal{P}}} 
\newcommand{\RR}{{\mathcal{R}}} 
\newcommand{\K}{{\mathcal{K}}}

\newcommand{\I}{{\mathcal{I}}} 
\newcommand{\Ga}{{\Gamma}} 
\newcommand{\Sn}{S^{n-1}} 
\newcommand{\Snm}{S^{n-2}} 
\newcommand{\Pold}{\operatorname{Pol}_{\leq d}}
\newcommand{\Polk}{\operatorname{Pol}_{\leq k}}
\newcommand{\Polki}{\operatorname{Pol}_{\leq k+i}}
\newcommand{\Pk}{P_{\kappa}}
\newcommand{\On}{{\operatorname{O}(\R^n)}}

\newcommand{\Aut}{\operatorname{Aut}}
\newcommand{\End}{\operatorname{End}}
\newcommand{\Trace}{\operatorname{Trace}}
\newcommand{\shape}{\operatorname{shape}}
\newcommand{\Harm}{\operatorname{Harm}}
\newcommand{\1}{\operatorname{\bf 1}}

\newcommand{\card}{\operatorname{card}}
\newcommand{\Stab}{\operatorname{Stab}}
\newcommand{\Res}{\operatorname{Res}}
\newcommand{\Ind}{\operatorname{Ind}}
\newcommand{\im}{\operatorname{im}}

\newcommand{\Gl}{\operatorname{Gl}}
\newcommand{\Id}{\operatorname{Id}}
 
\newcommand{\Gmn}{{\mathcal{G}_{m,n}}}
\newcommand{\qbinom}[2]{\genfrac{[}{]}{0pt}{}{#1}{#2}}

\begin{document}

\begin{abstract} 
These lecture notes where presented as a course of the CIMPA summer
school in Manila, July 20-30,  2009, {\it Semidefinite programming in
  algebraic combinatorics}. This version is an update of June 2010.
\end{abstract}

\maketitle

\tableofcontents

\section{Introduction}

In coding theory, the so-called linear programming method, introduced 
by Phi\-lip\-pe Delsarte in the seventies \cite{Del1} as proved to be a
very powerful method to solve extremal problems. It was initially
developed in the framework of association schemes and then extended 
to the family of $2$-point homogeneous spaces, including the compact
real manifolds having this property (see \cite{DGS},
\cite{KL}, \cite[Chapter 9]{CS}). Let us recall that a $2$-point
homogeneous space is a  metric space on which a group  $G$ acts
transitively,
leaving the distance $d$ invariant, and such that, for $(x,y)\in X^2$,
there exists $g\in G$ such that $(gx,gy)=(x',y')$ if and only if $d(x,y)=d(x',y')$.
The Hamming space $H_n$ and the unit sphere of the Euclidean space
$S^{n-1}$
are core examples of such spaces which play a major role in coding theory. To such a space
is associated a sequence of orthogonal polynomials $(P_k)_{k\geq 0}$
such that, for all $C\subset X$,
\begin{equation*}
\sum_{(c,c')\in C^2} P_k(d(c,c'))\geq 0.
\end{equation*}
These inequalities can be understood as linear constraints on the
distance distribution of a code and are at the heart of the LP method.

 The applications of
this method  to the study of codes and designs are numerous:
very good upper bounds for the number of elements of a code with given
minimal distance can be obtained with this method, including a number
of
cases where this upper bound is tight and leads to a proof of
optimality and uniqueness of certain codes, as well as to the best known 
asymptotic bounds (see \cite{Del1}, \cite{MRRW}, \cite{KL}, \cite[Chapter 9]{CS}, \cite{L}).

In recent years, the development of the theory of error correcting
codes has introduced many other spaces with interesting applications.
To cite a few, 
codes over various alphabets associated to various weights, quantum
codes,
codes for the multi antenna systems of communications involving more 
complicated manifolds like  the Grassmann spaces, have successively
focused attention. For these spaces there was a need for a
generalization of the classical framework of the linear programming
method. This generalization was developed for some of these spaces,
see \cite{TAG}, \cite{T}, \cite{B1}, \cite{Roy2}. 
It turns out  that in each of these cases, a certain sequence of orthogonal polynomials enters into play
but unlike the classical cases, these polynomials are multivariate.

Another step was taken when A. Schrijver in \cite{Schrijver} succeeded to improve the
classical LP bounds for binary codes with the help of semidefinite
programming. To that end he exploited {\em SDP constraints on triples
  of points} rather than on pairs, arising from the analysis of the
Terwilliger algebra of the Hamming scheme.
His method was then adapted to the unit sphere \cite{BV1} in the
framework of the representations of the orthogonal group. The heart of
the method is to evidence 
matrices $Z_k(x,y,z)$  such that 
for all $C\subset X$,
\begin{equation*}
\sum_{(c,c',c'')\in C^3} Z_k(c,c',c'')\succeq  0.
\end{equation*}

Another motivation for the study of SDP constraints on $k$-tuples of
points can be found in coding theory. It appears that not only
functions on pairs of points such as a distance function $d(x,y)$ are
of interest, but also functions on $k$-tuples have relevant meaning,
e.g. in connection with the notion of  list decoding.

In these lecture notes we want to develop a general framework based on 
harmonic analysis of compact groups for these methods. In view of the
effective applications to coding theory, we give detailed computations
in many cases. Special attention will be paid to the cases of the
Hamming space and of the unit sphere. 

Section 2 develops the basic tools needed in the theory of
representations of finite groups, section 3 is concerned with the
representations of compact groups and Peter Weyl theorem. Section 4
discusses the needed notions of harmonic analysis: the zonal matrices
are introduced and the invariant positive definite functions are
characterized with Bochner theorem. Section 5 is devoted to explicit
computations of the zonal matrices. Section 6 shows how the
determination of the invariant positive definite functions leads to an
upper bound for codes with given minimal distance. Section 7 explains the  connection with the
so-called Lov\'asz theta number. Section 8 shows how SDP bounds can be
used to strengthen the classical LP bounds, with the example of the
Hamming space.

\subsection{Notations:} for a matrix $A$ with complex coefficients,
$A^*$ stands for the transposed conjugate matrix. A squared matrix is
said to be hermitian if $A^*=A$ and positive semidefinite if it is
hermitian and all its eigenvalues are non negative. This property is
denoted $A\succeq 0$. 
We follow standard notations
for sets of matrices:
the set of $n\times m$ matrices with coefficients in a field $K$ is
denoted $K^{n\times m}$; the group of $n\times n$ invertible matrices by
$\Gl(K^n)$;  the group $U(\C^n)$ of unitary matrices, respectively
$O(\R^n)$ of
orthogonal matrices is the set of
matrices $A\in \Gl(C^{n})$, respectively $A\in \Gl(\R^{n})$ such that
$A^*=A^{-1}$.
The space $\C^{n\times m}$ is endowed with the standard inner product 
$\langle A, B\rangle=\Trace(AB^*)=\sum_{i,j} A_{i,j} \overline{B_{i,j}}$. 
The number of elements of a finite set $X$ is
denoted $\card(X)$ of $|X|$. 

\section{Linear representations of finite groups}

In this section we shortly review the basic notions of group
representation theory that will be needed later. There are many good
references for this theory e.g. \cite{Serre}, or  \cite{Sagan} which 
is mainly devoted to the symmetric group.

\subsection{Definitions}
Let $G$ be a finite group. A (complex linear) representation of
$G$
is a finite dimensional complex vector space $V$ together with a
homomorphism $\rho$:
\begin{equation*}
\rho: G \to \Gl(V)
\end{equation*}
where $\Gl(V)$ is the general linear group of $V$, i.e. the set of
linear invertible transformations of $V$. The degree of the
representation $(\rho,V)$ is by definition equal to the dimension of
$V$. 

Two representations of $G$ say $(\rho, V)$ and $(\rho',V')$ are said
to be equivalent or isomorphic if there exists and isomorphism $u: V\to V'$ such
that, for all $g\in G$,
\begin{equation*}
\rho'(g)=u\rho(g)u^{-1}.
\end{equation*}
For example, the choice of a basis of $V$ leads to a 
representation equivalent to $(\rho,V)$  given by $(\rho', \C^d)$ where $d=\dim(V)$ and
$\rho'(g)$
is the matrix of $\rho(g)$ in the chosen basis. In general, a
representation of $G$ such that $V=\C^d$ is called a matrix
representation.

The notion of a $G$-module is  equivalent to the above
notion of representation and turns out to be very convenient.
A $G$-module, or a $G$-space,
is a finite dimensional complex vector space $V$ such that for all
$g\in G$, $v\in V$, $gv\in V$ is well defined and satisfies the obvious
properties: $1v=v$, $g(hv)=(gh)v$,  $g(v+w)=gv+gw$, $g(\lambda
v)=\lambda(gv)$
for $g,h\in G$, $v,w\in V$, $\lambda\in \C$. In other words, $V$ is
endowed with a structure of $\C[G]$-module.
One goes from one notion to the other by the identification
$gv=\rho(g)(v)$. The notion of equivalent representations corresponds
to the notion of isomorphic $G$-modules, an isomorphism of
$G$-modules
being an isomorphism of vector spaces $u:V\to V'$ such that
$u(gv)=gu(v)$.
Note that here the operations of $G$ on $V$ and $V'$ are denoted
alike,
which may cause some confusion.

\subsection{Examples}

\begin{itemize}
\item The trivial representation $\1$: $V=\C$ and $gv=v$.
\item Permutation representations: let $X$ be a finite set on which $G$
  acts (on the left).
Let $V_X:=\oplus_{x\in X} \C e_x$. A natural action of $G$ on $V_X$ is
given by $ge_x=e_{gx}$, and defines a representation of $G$, of
degree $|X|$. The matrices of this representation (in the basis
$\{e_x\}$)
are permutation matrices.
\begin{itemize}
\item The symmetric group $S_n$ acts on $X=\{1,2,\dots,n\}$. This
  action defines a representation of degree $n$ of $S_n$.
\item For all $w$, $1\leq w\leq n$, $S_n$ acts on the set $X_w$ of subsets
  of $\{1,2,\dots,n\}$ of cardinal $w$. In coding theory an element of
  $X_w$
is more likely viewed as a binary word of length $n$ and Hamming
weight $w$. The spaces $X_w$ are called the Johnson spaces and denoted
$J_n^w$.
\end{itemize}
\item The regular representation is obtained with the special case 
$X=G$ with the action of $G$ by left multiplication. 
In the case $G=S_n$ it has degree $n!$.. It turns out that the 
regular representation contains all building blocks of all
representations of $G$.
\item Permutation representations again: if $G$ acts transitively on
  $X$,
this action can be identified with the left action of $G$ on the left
cosets
$G/H=\{gH: g \in G\}$ where $H=\Stab(x_0)$ is the stabilizer of a
base point.
\begin{itemize}
\item The symmetric group $S_n$ acts transitively on
  $X=\{1,2,\dots,n\}$
and the stabilizer of one point (say $n$) can be identified with the symmetric group
$S_{n-1}$ acting on 
$\{1,\dots,n-1\}$.
\item The action of $S_n$ on $J_n^w$ is also transitive and the
  stabilizer of one point (say $1^w0^{n-w}$) is the subgroup
$S_{\{1,\dots,w\}}\times S_{\{w+1,\dots,n\}}$ isomorphic to
  $S_{w}\times S_{n-w}$.
\item The Hamming space $H_n=\{0,1\}^n=\F_2^n$ affords the transitive action 
of $G=T\rtimes S_n$ where $T$ is the group of translations 
$T=\{t_u: u\in H_n\}$, $t_u(v)=u+v$ and $S_n$ permutes the
coordinates. 
The stabilizer of $0^n$ is the group of permutations  $S_n$. 
\end{itemize}
\item Another way to see the permutation representations is the following: let 
\begin{equation*}
\CC(X):=\{f:X\to\C\}
\end{equation*}
 be the space of functions from $X$ to $\C$. 
The action of $G$ on $X$ extends to a structure of $G$-module on 
$\CC(X)$ given by:
\begin{equation*}
gf(x):=f(g^{-1}x).
\end{equation*}
For the Dirac functions $\delta_y$ ($\delta_y(x)=1$ if $x=y$, $0$
otherwise), the action of $G$ is given by $g\delta_y=\delta_{gy}$ 
thus this representation is isomorphic to the permutation
representation defined by $X$. This apparently more complicated presentation of
permutation representations has the advantage to allow generalization to infinite
groups acting on infinite spaces as we shall encounter later.

\end{itemize}

\subsection{Irreducibility}

Let $V$ be a $G$-module (respectively a representation $(\rho,V)$ of
$G$). A subspace $W\subset V$ is said to be $G$-invariant (or
$G$-stable, or a $G$-submodule, or a subrepresentation of
$(\rho,V)$),
if $gw\in W$ (respectively $\rho(g)(w)\in
W$) for all $g\in G$, $w\in W$ . 

\paragraph{\bf Example:} $V=V_{G}$ and $W=\C e_{G}$ with
$e_{G}=\sum_{g\in G} e_g$. The restriction of the action of $G$
to $W$ is the trivial representation.

A $G$-module $V$  is said to
be irreducible if it does not contain any subspace $W$, $W\neq \{0\},
V$, invariant under $G$. Otherwise it is called reducible.

\paragraph{\bf Example:} The $G$-invariant subspaces of dimension $1$
are necessarily irreducible. If $G$ is abelian, a $G$-module of
dimension greater than $1$ cannot be irreducible, because
endomorphisms that pairwise commute afford a common basis of
eigenvectors.

\smallskip
The main result is then the decomposition of a $G$-module into the
direct sum of irreducible submodules:

\begin{theorem}[Maschke's theorem]\label{t1}
Any $G$-module $V\neq \{0\}$ is the direct sum of irreducible
$G$-submodules $W_1,\dots, W_k$:
\begin{equation}\label{Maschke}
V=W_1\oplus W_2\oplus\dots\oplus W_k.
\end{equation}
\end{theorem}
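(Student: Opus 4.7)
The plan is to prove the theorem by induction on $\dim V$, the inductive step reducing to showing that every proper $G$-invariant subspace admits a $G$-invariant complement. The base case, $\dim V = 1$, is trivial since a one-dimensional $G$-module is automatically irreducible.

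For the inductive step, suppose $V$ is reducible, so there is a $G$-submodule $W$ with $\{0\} \neq W \neq V$. The key is to produce a $G$-stable complement $W'$ with $V = W \oplus W'$; once this is done, both $W$ and $W'$ have strictly smaller dimension than $V$, so the induction hypothesis decomposes each of them into irreducible $G$-submodules, and concatenating these decompositions gives the desired expression for $V$.

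To construct the invariant complement I would use the averaging (Weyl unitary) trick. Start from any positive definite Hermitian inner product $\langle \cdot, \cdot\rangle_0$ on $V$ (for instance, the one obtained by fixing a basis and declaring it orthonormal), and define
\begin{equation*}
\langle v, w\rangle := \frac{1}{|G|} \sum_{g \in G} \langle gv, gw\rangle_0.
\end{equation*}
This is still a positive definite Hermitian form (positivity and Hermitian symmetry are preserved by averaging positives), and by a change of variable $g \mapsto hg$ in the sum one checks $\langle hv, hw\rangle = \langle v, w\rangle$ for all $h \in G$, so each $\rho(h)$ is unitary with respect to $\langle \cdot, \cdot\rangle$. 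Then set $W' := W^\perp$, the orthogonal complement with respect to $\langle \cdot, \cdot\rangle$. Since $W$ is $G$-invariant and the form is $G$-invariant, for $v \in W^\perp$, $h \in G$ and any $w \in W$ we have $\langle hv, w\rangle = \langle v, h^{-1}w\rangle = 0$ because $h^{-1}w \in W$; thus $hv \in W^\perp$, proving $W^\perp$ is $G$-invariant. The standard Hermitian decomposition $V = W \oplus W^\perp$ then provides the required $G$-stable complement.

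The only genuine subtlety is the availability of the averaging operator $\frac{1}{|G|}\sum_{g\in G}$, which requires that $|G|$ be invertible in the base field; this is free here because we work over $\C$ and $G$ is finite. With that in hand, the argument reduces to elementary linear algebra and straightforward induction, so no step should present a serious obstacle.
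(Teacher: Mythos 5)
Your proof is correct and follows essentially the same route as the paper: reduce by induction to producing a $G$-invariant complement, construct a $G$-invariant Hermitian inner product by averaging an arbitrary one over $G$, and take $W^\perp$ as the complement. The only cosmetic difference is the $\frac{1}{|G|}$ normalization of the averaged form, which the paper omits and which changes nothing.
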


\begin{proof} By induction, it is enough to prove that any
  $G$-submodule $W$ of $V$ affords a supplementary subspace which is
  also $G$-invariant. The main idea is to construct a $G$-invariant
  inner product and then prove that the orthogonal of $W$ for this
  inner product makes the job. 

We start with an inner product $\langle x,y\rangle$ defined on
$V$. There are plenty of them since $V$ is a finite dimensional
complex vector space. For example we can choose an arbitrary basis of
$V$ and declare it to be orthonormal. Then we average this inner
product on $G$, defining:
\begin{equation*}
\langle x,y\rangle':=\sum_{g\in G} \langle gx,  gy \rangle.
\end{equation*}
It is not difficult to check that we have defined a inner product
which is $G$-invariant. It is also easy to see that 
\begin{equation*}
W^{\perp}:= \{v\in V : \langle v,w\rangle'=0\text{ for all }w\in W\}
\end{equation*}
is $G$-invariant, thus we have the decomposition of $G$-modules:
\begin{equation*}
V=W\oplus W^{\perp}
\end{equation*}
\end{proof}

It is worth to notice that the above decomposition may not be
unique. It is clear if one thinks of the extreme case $G=\{1\}$ for
which the irreducible subspaces are simply the one dimensional
subspaces of $V$. The decomposition of $V$ into the direct sum of 
subspaces of dimension $1$ is certainly not unique (if $\dim(V)>1$ of
course).
But uniqueness is fully satisfied by the decomposition into isotypic
subspaces.
In order to define them we take the following notation: let 
$\RR$ be a complete set of pairwise non isomorphic
irreducible representations of $G$. We have seen that any $G$-module
affords a $G$-invariant inner product so the action of $G$ on $R$ is
expressed by unitary matrices in a given orthonormal matrix of
$R$. According to the context we view $R$ either as a $G$-module or as
a homomorphism $g\mapsto R(g)\in U(\C^n)$. It will turn out that there
is only a finite number of them but we have not proved it yet. 
The isotypic subspace $\I_R$ of $V$  associated to $R\in \RR$ is
defined, with the notations of  \eqref{Maschke}, by:
\begin{equation}\label{IR}
\I_R:=\oplus_{W_i\simeq R} W_i.
\end{equation}

\begin{theorem}\label{Isotypic}
Let $R\in \RR$. The isotypic spaces $\I_R$ do not depend on the decomposition of $V$
as the direct sum of $G$-irreducible subspaces. We have the canonical
decomposition
\begin{equation*}
V=\oplus_{R\in \RR} \I_R.
\end{equation*}
Any $G$-subspace $W\subset V$ such that $W\simeq R$ is
contained in $\I_R$ and any $G$-irreducible subspace of $\I_R$ is isomorphic to $R$.
A decomposition into irreducible subspaces of $\I_R$ has the form
\begin{equation*}
\I_R=W_1\oplus\dots \oplus W_{m_R}
\end{equation*}
with $W_i\simeq R$. Such a  decomposition is not unique in general but the number
$m_R$ does not depend on the decomposition and is called the
multiplicity of $R$ in $V$.

Moreover, if $V$ is endowed with a $G$-invariant inner product, then
the isotypic spaces are pairwise orthogonal.
\end{theorem}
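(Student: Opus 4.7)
The plan is to invoke Schur's lemma as the essential tool and reduce every claim to a routine argument with $G$-equivariant projections. Schur's lemma, which follows immediately from the fact that the kernel and image of a $G$-homomorphism are $G$-invariant subspaces, asserts that a $G$-homomorphism between two irreducible $G$-modules is either zero or an isomorphism.

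First I would fix a decomposition $V=W_1\oplus\cdots\oplus W_k$ provided by Theorem~\ref{t1} and establish the intrinsic characterization
\begin{equation*}
\I_R \;=\; \sum\{\,W\subseteq V : W \text{ is }G\text{-irreducible and }W\simeq R\,\}.
\end{equation*}
Only one inclusion is non-trivial. Given an irreducible $W\simeq R$ in $V$, I would restrict each direct-sum projection $\pi_i:V\to W_i$ to $W$; since $\pi_i$ is $G$-equivariant, $\pi_i|_W$ is a $G$-homomorphism between irreducibles, and Schur forces $\pi_i|_W=0$ whenever $W_i\not\simeq R$. Hence $W\subseteq\bigoplus_{W_j\simeq R} W_j=\I_R$ as in~\eqref{IR}. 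Because the right-hand side of the displayed equation refers to no particular decomposition, $\I_R$ is intrinsically defined, and the canonical decomposition $V=\oplus_{R\in\RR}\I_R$ then follows from the existence of \emph{some} decomposition.

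Next, any $G$-irreducible $U\subseteq\I_R$ is isomorphic to $R$: by the previous step, if $U\simeq R'$ then $U\subseteq\I_{R'}$; unwinding the definition via the fixed decomposition $V=\oplus W_i$ one checks $\I_R\cap\I_{R'}=\{0\}$ whenever $R\not\simeq R'$, forcing $R'=R$. Consequently any decomposition of $\I_R$ into irreducibles uses only copies of $R$, and the multiplicity is determined by the dimension count $m_R=\dim(\I_R)/\dim(R)$, which depends only on $\I_R$.

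For the orthogonality statement, I would use that a $G$-invariant inner product makes $\I_R^\perp$ automatically $G$-stable, so the orthogonal projection $p:V\to\I_R$ is $G$-equivariant. Restricting $p$ to an irreducible $U\subseteq\I_{R'}$ with $R'\not\simeq R$ and post-composing with the projection onto each irreducible summand of $\I_R$ produces, by Schur, only zero maps; therefore $p(U)=0$. Summing over the irreducible summands of $\I_{R'}$ gives $\I_{R'}\perp\I_R$. The only genuine obstacle is Schur's lemma itself; after that, every part of the theorem is organized bookkeeping with $G$-equivariant projections.
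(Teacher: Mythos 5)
Your proof is correct and follows essentially the same route as the paper's: Schur's lemma applied to the direct-sum projections $\pi_i$ yields the intrinsic characterization $\I_R=\sum\{W : W\simeq R\}$, the multiplicity comes from a dimension count, and orthogonality of isotypic components again comes from Schur applied to orthogonal projections. The minor reorganizations you make (proving $\I_R\cap\I_{R'}=\{0\}$ explicitly, and routing the orthogonality argument through the projection onto the whole of $\I_R$ rather than onto a single irreducible summand) are cosmetic rather than substantive differences.
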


\begin{proof}
We start with a lemma which points out a very important property of
irreducible $G$-modules.

\begin{lemma}[Schur Lemma]\label{Schur}
Let $R_1$ and $R_2$ two irreducible $G$-modules and  let 
$\varphi : R_1\to R_2$ be a $G$-homomorphism. Then either $\varphi=0$ or $\varphi$ is an
isomorphism of $G$-modules. 
\end{lemma}
\begin{proof} The subspaces $\ker \varphi$ and $\im \varphi$ are $G$-submodules
  of respectively $R_1$ and $R_2$ thus they are equal to either
  $\{0\}$ or $R_i$.
\end{proof}

We go back to the proof of the theorem. We start with the
decomposition  \eqref{Maschke}
of $V$ and the definition \eqref{IR} of $\I_R$, a priori depending on
the decomposition. Let $W\subset V$, a $G$-submodule isomorphic to
$R$.
We apply Lemma \ref{Schur} to the projections $p_{W_i}$ and conclude
that either $p_{W_i}(W)=\{0\}$ or $p_{W_i}(W)=W_i$ and this last case
can only happen if $W\simeq W_i$. It proves that $W\subset \I_R$ and
that a $G$-irreducible subspace of $\I_R$ can only be isomorphic to $R$. 
It also proves that 
\begin{equation*}
\I_R=\sum_{W\subset V, W\simeq R} W
\end{equation*}
hence giving a characterization of $\I_R$ independent of the initial
decomposition.
The number $m_R$ must satisfy $\dim(\I_R)=m_R\dim(R)$ so it is
independent
of the decomposition of $\I_R$.

If $V$ is equipped with a $G$-invariant inner product, we consider
orthogonal projections. Schur Lemma shows that $P_W(W')=\{0\}$ or $=W$
if $W$ and $W'$ are irreducible. Thus if they are not $G$-isomorphic,
$W$ and $W'$ must be orthogonal.

\end{proof}

\subsection{The algebra of $G$-endomorphisms} Let $V$ be a
$G$-module. The set of $G$-endomorphisms of $V$ is an  algebra
(for the laws of addition and composition)
denoted $\End_{G}(V)$. The next theorem describes the structure of
this algebra.

\begin{theorem}\label{End}
If $V\simeq \oplus_{R\in \RR} R^{m_R}$, then
\begin{equation*}
\End_{G}(V)\simeq \prod_{R\in \RR} \C^{m_R\times m_R}.
\end{equation*}
\end{theorem}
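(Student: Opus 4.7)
The plan is to reduce the statement to a single isotypic component and then invoke Schur's lemma over the algebraically closed field $\C$ to identify $\End_G(R) = \C\cdot\Id_R$ for every $R\in\RR$.

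First I would show that every $\varphi\in\End_G(V)$ respects the canonical isotypic decomposition of Theorem \ref{Isotypic}, i.e.\ $\varphi(\I_R)\subset \I_R$ for each $R\in\RR$. Indeed, if $W\subset \I_R$ is an irreducible $G$-submodule, then by Lemma \ref{Schur} (applied to $\varphi_{|W}:W\to V$ composed with the projection onto any isotypic component) the image $\varphi(W)$ is either zero or $G$-isomorphic to $R$, and the characterization $\I_R=\sum_{W'\simeq R}W'$ forces $\varphi(W)\subset \I_R$. Consequently the restriction map yields an algebra isomorphism
\begin{equation*}
\End_G(V)\;\simeq\;\prod_{R\in\RR}\End_G(\I_R),
\end{equation*}
and it suffices to identify each factor with $\C^{m_R\times m_R}$.

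Next, fix $R\in\RR$ and a decomposition $\I_R=W_1\oplus\dots\oplus W_{m_R}$ together with $G$-isomorphisms $u_i:R\to W_i$ and projections $p_i:\I_R\to W_i$ along the complementary summands. For $\varphi\in\End_G(\I_R)$ set $\varphi_{ij}:=u_i^{-1}\circ p_i\circ \varphi\circ u_j\in \End_G(R)$. The assignment $\varphi\mapsto(\varphi_{ij})$ is $\C$-linear with inverse $(\psi_{ij})\mapsto \sum_{i,j}u_i\circ \psi_{ij}\circ u_j^{-1}\circ p_j$, and a direct bookkeeping shows that composition in $\End_G(\I_R)$ corresponds to the usual matrix product, so we obtain an algebra isomorphism $\End_G(\I_R)\simeq \End_G(R)^{m_R\times m_R}$.

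The crucial remaining input is that $\End_G(R)=\C\cdot\Id_R$ whenever $R$ is irreducible. Any $\psi\in\End_G(R)$ admits an eigenvalue $\lambda\in\C$; then $\psi-\lambda\Id_R$ is a $G$-endomorphism of $R$ with nonzero kernel, hence vanishes by Schur's lemma, so $\psi=\lambda\Id_R$. Substituting into the previous step gives $\End_G(\I_R)\simeq \C^{m_R\times m_R}$ and concludes the proof.

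The main obstacle is really this last step: Lemma \ref{Schur} only asserts the dichotomy ``zero or isomorphism'' and one has to supplement it with the existence of a complex eigenvalue to conclude that $\End_G(R)$ is one-dimensional. A secondary, more bureaucratic difficulty is verifying that the map $\varphi\mapsto(\varphi_{ij})$ is multiplicative; this relies on the orthogonality of the projections, $p_i\circ u_j=\delta_{ij}u_j$, and is where the choice of the $u_i$ enters (the resulting isomorphism is canonical only up to conjugation by an element of $\Gl(\C^{m_R})$, but this is immaterial for the statement).
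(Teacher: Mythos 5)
Your proof is correct and follows essentially the same route as the paper's: reduce to the isotypic components, encode a $G$-endomorphism of $\I_R$ as an $m_R\times m_R$ matrix using a fixed system of $G$-isomorphisms between the copies of $R$, and close the argument with the eigenvalue-plus-Schur observation that $\End_G(R)=\C\cdot\Id$. The only cosmetic difference is the order of the three steps (you go general $\to$ isotypic $\to$ irreducible, the paper goes irreducible $\to$ $R^m$ $\to$ general) and your choice to use isomorphisms $u_i:R\to W_i$ from the abstract model $R$ rather than the paper's $u_{j,i}:W_i\to W_j$, which is an equivalent bookkeeping device.
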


\begin{proof}
The proof is in three steps: we shall assume first $V=R$ is
irreducible, then $V\simeq R^m$, then the general case. Schur Lemma \ref{Schur} is
the main tool here.

If $V$ is irreducible, let $\varphi\in \End_{G}(V)$. Since $V$ is a
complex vector space, $\varphi$ has got an eigenvalue $\lambda$. Then
$\varphi-\lambda\Id$
is a $G$-endomorphism with a non trivial kernel so from Schur Lemma
$\varphi-\lambda\Id=0$. We have proved that 
\begin{equation*}
\End_{G}(V)=\{\lambda\Id, \lambda\in \C\}\simeq \C.
\end{equation*}
We assume now that $V\simeq R^m$ and we fix a decomposition
$V=W_1\oplus\dots \oplus W_m$. For all $1\leq i\leq j\leq m$, let
$u_{j,i}:W_i\to W_j$ an isomorphism of $G$-modules such that the
relations
\begin{equation*}
u_{k,j}\circ u_{j,i}=u_{k,i}\text{ and } u_{i,i}=\Id
\end{equation*}
hold for all $i,j,k$. Let $\varphi\in \End_{G}(V)$; we associate to
$\varphi$ an element of
$\C^{m\times m}$ in the following way. From previous discussion of the
irreducible case it follows that for all $i,j$ there exists
$a_{i,j}\in \C$ such that, for all $v\in W_i$,
\begin{equation*}
p_{W_j}\circ \varphi(v)=a_{j,i}u_{j,i}(v).
\end{equation*}
The matrix $A=(a_{i,j})$ is the matrix associated to $\varphi$.
The proof that the mapping $\varphi\mapsto A$ is an
isomorphism of algebras carries without difficulties and is left to the reader.

In the general case, $V=\oplus_{R\in \RR} \I_R$. Let $\varphi\in 
\End_{G}(V)$. It is clear that $\varphi(\I_R)\subset \I_R$ thus
\begin{equation*}
\End_{G}(V)=\oplus_{R\in \RR} \End_{G}(\I_R)
\end{equation*}
and we are done.
\end{proof}

It is worth to notice that $\End_{G}(V)$ is a commutative algebra if
and only if all the multiplicities $m_R$ are equal to either $0$ or
$1$.
In this case we say that $V$ is multiplicity free. It is also the
unique case when the decomposition into irreducible subspaces
\eqref{Maschke} is unique.

\subsection{Orthogonality relations}
Another important result which is a consequence  of Schur lemma is the orthogonality
relations between the matrix coefficients of the elements of $\RR$:

\begin{theorem}\label{orthog rel}
For $R\in \RR$, let $d_R:=\dim(R)$. For all $R,S\in \RR$, $i,j,k,l$, 
\begin{equation*}
\langle R_{i,j}, S_{k,l}\rangle =
\frac{1}{d_R}\delta_{R,S}\delta_{i,k}\delta_{j,l}.
\end{equation*}
\end{theorem}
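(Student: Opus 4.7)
The plan is to use the classical averaging trick combined with Schur's lemma (Lemma \ref{Schur}). First I would fix representatives $R,S\in \RR$ realized as unitary matrix representations; such realizations exist by (the proof of) Maschke's theorem, since every $G$-module carries a $G$-invariant Hermitian inner product with respect to which the representing matrices become unitary. The relevant inner product on functions $G\to \C$ is the averaged one, $\langle f_1,f_2\rangle = \frac{1}{|G|}\sum_{g\in G} f_1(g)\overline{f_2(g)}$.

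The key construction is as follows. For any $\varphi\in \C^{d_R\times d_S}$, form the averaged operator
\begin{equation*}
T_\varphi := \frac{1}{|G|}\sum_{g\in G} R(g)\,\varphi\, S(g)^{*}.
\end{equation*}
A direct substitution $g\mapsto h^{-1}g$, together with $S(h^{-1})^*=S(h)$ (which is exactly the unitarity of $S$), gives $R(h)T_\varphi = T_\varphi S(h)$ for every $h\in G$, so $T_\varphi$ is a $G$-module homomorphism from $V_S$ to $V_R$. Now take $\varphi$ to be the elementary matrix $E_{j,l}$ (a single $1$ in position $(j,l)$, zero elsewhere); expanding the matrix product yields
\begin{equation*}
(T_{E_{j,l}})_{i,k} = \frac{1}{|G|}\sum_{g\in G} R(g)_{i,j}\,\overline{S(g)_{k,l}} = \langle R_{i,j},S_{k,l}\rangle,
\end{equation*}
so the theorem reduces to computing $T_{E_{j,l}}$ explicitly.

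At this stage I would invoke Schur's lemma. If $R\neq S$ in $\RR$, they are non-isomorphic irreducibles, so $T_{E_{j,l}}=0$ and each entry vanishes, giving the orthogonality between distinct representations. If $R=S$, then exactly as in the first step of the proof of Theorem \ref{End}, $T_{E_{j,l}}=\lambda\,\Id$ for some $\lambda\in\C$; to pin down $\lambda$ I would take traces of both sides of the defining formula and use cyclicity of the trace:
\begin{equation*}
\lambda\, d_R = \Trace(T_{E_{j,l}}) = \frac{1}{|G|}\sum_{g\in G}\Trace\bigl(R(g)\,E_{j,l}\,R(g)^{*}\bigr) = \Trace(E_{j,l}) = \delta_{j,l}.
\end{equation*}
Reading off the $(i,k)$ entry of the relation $T_{E_{j,l}}=(\delta_{j,l}/d_R)\,\Id$ then yields $\langle R_{i,j},R_{k,l}\rangle = \delta_{i,k}\delta_{j,l}/d_R$, completing the proof.

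No step looks like a real obstacle; the only delicate point is the precise form of the averaging operator. Inserting $S(g)^{*}$ (and not $S(g)$ or $S(g)^{-1}$ alone) is exactly what converts the $G$-equivariance identity into a statement about the Hermitian inner product $\langle\cdot,\cdot\rangle$, and this is what makes the appeal to unitary realizations (granted by Maschke's theorem) essential from the outset.
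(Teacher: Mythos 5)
Your proof is correct and is essentially the same argument as the paper's: average $R(g)\,\varphi\,S(g)^{*}$ over $G$ to produce a $G$-module homomorphism, apply Schur's lemma to conclude it is $0$ when $R\not\simeq S$ and a scalar when $R=S$, pin the scalar down by a trace computation, and finally specialize $\varphi$ to an elementary matrix to read off the inner product of matrix coefficients. The only cosmetic difference is that the paper writes $A\,S(g)^{-1}$ and then invokes $S(g)^{-1}=S(g)^{*}$ at the end, whereas you insert $S(g)^{*}$ from the outset; the substance is identical.
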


\begin{proof} 
For $A\in \C^{d_R\times d_S}$, let
\begin{equation*}
A'=\frac{1}{|G|}\sum_{g\in G} R(g) A S(g)^{-1}.
\end{equation*}
This matrix satisfies $R(g)A'=A'S(g)$ for all $g\in G$.
In other words it defines an homomorphism of $G$-modules from 
$(\C^{d_S}, S)$ to $(\C^{d_R}, R)$. Schur lemma shows that 
if $S\neq R$, $A'=0$ and if $S=R$, $A'=\lambda \Id$. Computing the
trace of $A'$ shows that $\lambda=\Trace(A)/{d_R}$. Taking $A=E_{i,j}$
the elementary matrices,  with the property that $S(g)^{-1}=S(g)^*$,
leads to the announced formula.
\end{proof}

\subsection{Characters}
The character of a representation $(\rho,V)$ of $G$ is the function 
$\chi_{\rho}:G\to \C$ defined by
\begin{equation*}
\chi_{\rho}(g)=\Trace(\rho(g)).
\end{equation*}
As a consequence of the standard property of traces of matrices
$\Trace(AB)=\Trace(BA)$,
the character of a representation only depends on its equivalence
class, and  it is a complex valued function on $G$ which
is constant on the conjugacy classes of $G$ (such a function is
called a class function). The inner product of any two $\chi$, $\psi\in \CC(G)$
is defined by
\begin{equation*}
\langle \chi,\psi\rangle:=\frac{1}{|G|}\sum_{g\in G}
\chi(g)\overline{\psi(g)}.
\end{equation*}
We have the very important orthogonality relations between characters:

\begin{theorem}[Orthogonality relations of the first kind]\label{orth rel}
Let $\chi$ and $\chi'$ be respectively the characters of two
irreducible representations $(\rho,V)$ and $(\rho',V')$ of $G$. Then
\begin{equation*}
\langle \chi,\chi'\rangle=\begin{cases}
1 \quad \text{ if }\rho\simeq \rho'\\
0 \quad \text{ otherwise.}
\end{cases}
\end{equation*}
\end{theorem}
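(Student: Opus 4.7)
The plan is to deduce this directly from the orthogonality relations for matrix coefficients (Theorem \ref{orthog rel}), which has already done the real work. Since characters are invariant under equivalence of representations, I may replace $(\rho,V)$ and $(\rho',V')$ by the (unique) elements $R, S \in \RR$ to which they are respectively isomorphic. By Maschke's theorem (or rather by the averaging argument in its proof), every finite-dimensional complex representation of $G$ admits a $G$-invariant inner product, so $R$ and $S$ can be realized as unitary matrix representations $g \mapsto R(g) \in U(\C^{d_R})$ and $g \mapsto S(g) \in U(\C^{d_S})$, which is precisely the setting of Theorem \ref{orthog rel}.

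Next I would expand the characters in terms of matrix coefficients: writing $\chi(g) = \Trace(R(g)) = \sum_{i=1}^{d_R} R_{i,i}(g)$ and similarly $\chi'(g) = \sum_{j=1}^{d_S} S_{j,j}(g)$. Using bilinearity (conjugate-linearity in the second slot) of the inner product on $\CC(G)$ and then invoking Theorem \ref{orthog rel}, I obtain
\begin{equation*}
\langle \chi,\chi'\rangle
= \sum_{i=1}^{d_R}\sum_{j=1}^{d_S} \langle R_{i,i}, S_{j,j}\rangle
= \sum_{i,j} \frac{1}{d_R}\delta_{R,S}\delta_{i,j}\delta_{i,j}
= \frac{1}{d_R}\delta_{R,S}\sum_{i=1}^{d_R} 1
= \delta_{R,S}.
\end{equation*}

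This yields exactly the two cases of the statement: if $\rho \simeq \rho'$ then $R = S$ and the inner product equals $1$; if they are inequivalent then $R \neq S$ as elements of $\RR$ and the inner product vanishes.

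There is no genuine obstacle here: the only mild care is to verify that one is entitled to pass from an arbitrary irreducible representation to a unitary one, which is immediate from the averaging construction used in the proof of Theorem \ref{t1}, and to keep track that Theorem \ref{orthog rel} is stated for matrix coefficients of representations in $\RR$ (so one must invoke the fact that equivalent representations have identical characters before applying it). Everything else is a one-line linear-algebra computation.
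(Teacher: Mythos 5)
Your proof is correct and follows exactly the route the paper takes: reduce to representatives in $\RR$, expand the character as the sum of diagonal matrix coefficients, and apply Theorem~\ref{orthog rel}. The paper simply states this as a one-line "straightforward consequence"; you have filled in the same computation.
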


\begin{proof} It is a straight forward consequence of Theorem
  \ref{orthog rel}, since the trace of a representation is the sum of the diagonal
  elements of any equivalent matrix representation.
\end{proof}

A straightforward consequence of the above theorem is that 
$\langle \chi_{\rho}, \chi_R\rangle=m_R$ for all $R\in \RR$.
This property is a very convenient tool to study the irreducible
decomposition
of a given representation $(\rho,V)$ of $G$; in particular it
shows that a representation is irreducible if and only if its
character $\chi$ satisfies $\langle \chi,\chi\rangle=1$.
In the case 
of the regular representation it leads to the following very important result:

\begin{theorem}\label{Dec Reg Rep}[Decomposition of the regular representation]
\begin{equation*}
\CC(G)\simeq \oplus_{R\in \RR} R^{\dim(R)}
\end{equation*}
\end{theorem}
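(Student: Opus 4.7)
The plan is to compute the multiplicity $m_R$ of each $R\in\RR$ in the regular representation $\CC(G)$ via the character formula $m_R=\langle \chi_{\rho},\chi_R\rangle$ noted in the text as a consequence of Theorem \ref{orth rel}. This reduces the theorem to a one-line character calculation.

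First I determine the character of the regular representation. Using the basis $\{\delta_h\}_{h\in G}$ of $\CC(G)$, on which $G$ acts by $g\delta_h=\delta_{gh}$, the operator $\rho_{\mathrm{reg}}(g)$ is represented by a permutation matrix. Its $(h,h)$ diagonal entry equals $1$ exactly when $gh=h$, which in a group forces $g=e$. Hence
\begin{equation*}
\chi_{\mathrm{reg}}(g)=\begin{cases}|G| & \text{if } g=e,\\ 0 & \text{if } g\neq e.\end{cases}
\end{equation*}

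Second, I insert this into the inner product against an arbitrary irreducible character $\chi_R$. Every term with $g\neq e$ vanishes, and $\chi_R(e)=\dim(R)$ is a real positive integer, so
\begin{equation*}
m_R=\langle\chi_{\mathrm{reg}},\chi_R\rangle=\frac{1}{|G|}\cdot |G|\cdot\overline{\chi_R(e)}=\dim(R).
\end{equation*}
In particular $m_R>0$ for every $R\in\RR$, so each irreducible genuinely occurs in $\CC(G)$. Combining this with the canonical isotypic decomposition of Theorem \ref{Isotypic} gives $\CC(G)\simeq\bigoplus_{R\in\RR}R^{\dim(R)}$.

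There is no real obstacle, since the argument is a direct application of the orthogonality relations to the trivially computed $\chi_{\mathrm{reg}}$. The only conceptual point worth flagging is the use of the positivity $m_R=\dim(R)>0$ to certify that no irreducible is missing from the decomposition; as a byproduct, equating dimensions on both sides yields the classical identity $|G|=\sum_{R\in\RR}\dim(R)^2$, which also shows a posteriori that $\RR$ is finite.
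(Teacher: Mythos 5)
Your proof is correct and is exactly the paper's intended argument: the paper's proof is the one-line instruction ``Compute the character of the regular representation,'' and you carry out that computation and apply the orthogonality relations to obtain $m_R=\dim(R)$. Nothing to add.
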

\begin{proof} Compute the character of the regular representation.
\end{proof}
A consequence of the above theorem is the finiteness of the
number of irreducible representations of a given finite group,
together
with the formula
\begin{equation*}
|G|=\sum_{R\in \RR} (\dim(R))^2
\end{equation*}
which shows e.g. completeness of a given set of
irreducible $G$-modules.

A second consequence of the orthogonality relations is that a representation 
of $G$ is uniquely characterized up to isomorphism by its character.
\begin{theorem}
\begin{equation*}
(\rho,V)\simeq (\rho',V') \Longleftrightarrow
  \chi_{\rho}=\chi_{\rho'}.
\end{equation*}
\end{theorem}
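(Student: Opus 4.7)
The forward direction is immediate from the invariance of trace under conjugation: if $u:V\to V'$ intertwines $\rho$ and $\rho'$, then $\rho'(g)=u\rho(g)u^{-1}$ for all $g\in G$, so $\chi_{\rho'}(g)=\Trace(u\rho(g)u^{-1})=\Trace(\rho(g))=\chi_\rho(g)$. This was already observed when the character was introduced.

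For the nontrivial direction, my plan is to decompose both representations into isotypic components and read off the multiplicities from the characters. By Theorem~\ref{t1} and Theorem~\ref{Isotypic}, I can write
\begin{equation*}
V\simeq \bigoplus_{R\in \RR} R^{m_R}, \qquad V'\simeq \bigoplus_{R\in \RR} R^{m'_R},
\end{equation*}
for uniquely determined multiplicities $m_R, m'_R\in \N$. Since trace is additive on direct sums, the characters expand as
\begin{equation*}
\chi_\rho = \sum_{R\in \RR} m_R\, \chi_R, \qquad \chi_{\rho'} = \sum_{R\in \RR} m'_R\, \chi_R.
\end{equation*}

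The key step is then to invoke the first orthogonality relations (Theorem~\ref{orth rel}): the family $\{\chi_R : R\in \RR\}$ is orthonormal in $\CC(G)$, hence linearly independent. Therefore the assumption $\chi_\rho=\chi_{\rho'}$ forces $m_R=m'_R$ for every $R\in \RR$; concretely one simply computes $\langle \chi_\rho,\chi_R\rangle = m_R$ and $\langle \chi_{\rho'},\chi_R\rangle = m'_R$. Once the multiplicities match, Theorem~\ref{Isotypic} supplies isomorphisms of each isotypic component $\I_R(V)\simeq R^{m_R}=R^{m'_R}\simeq \I_R(V')$, and taking the direct sum yields an isomorphism $V\simeq V'$ of $G$-modules.

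There is no real obstacle here, since all the heavy machinery (existence of the isotypic decomposition, orthonormality of irreducible characters) is already in hand. The only point worth checking carefully is that the two decompositions are indexed by the \emph{same} set $\RR$ of irreducibles, which is fine because $\RR$ was fixed at the outset as a complete set of representatives of isomorphism classes; thus the expansions of $\chi_\rho$ and $\chi_{\rho'}$ live in the same linear span, and comparing coefficients in that basis is legitimate.
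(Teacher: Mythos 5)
Your argument is correct and is the same as the paper's: the paper's proof simply reads off the multiplicities $m_R=\langle\chi_\rho,\chi_R\rangle$ from the orthogonality relations and concludes $V\simeq_G V'$. You have just written out the details (isotypic decomposition, orthonormality of irreducible characters) that the paper compresses into one line.
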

\begin{proof}
If $\chi_{\rho}=\chi_{\rho'}$, the multiplicities of an irreducible
representation
of $G$ are the same in $V$ and $V'$, hence $V\simeq_{G} V'$.
\end{proof}

Let us denote by $R(G)$ the subspace of elements of $\CC(G)$ which are
constant on the conjugacy classes $C_1,\dots, C_s$ of $G$.  The
dimension of $R(G)$ is obviously the number $s$ of conjugacy classes
of $G$. We have seen that the characters $\chi_R$ of the irreducible
representations of $G$ belong to $R(G)$ and form an orthonormal
family. It turns out that they in fact form a basis of $R(G)$, which
in other words means that the number of irreducible representations of
$G$ is exactly equal to its number of conjugacy classes.

\begin{theorem}
\begin{equation*}
R(G)=\oplus_{R\in \RR} \C \chi_R.
\end{equation*}
\end{theorem}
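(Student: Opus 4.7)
The plan is to use the orthogonality relations (Theorem \ref{orth rel}) to see that the family $\{\chi_R : R\in\RR\}$ is orthonormal in $R(G)$, hence linearly independent. It remains to show that this family \emph{spans} $R(G)$, which is equivalent to showing that its orthogonal complement in $R(G)$ is trivial.

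So let $f\in R(G)$ be such that $\langle \chi_R, f\rangle = 0$ for every $R\in\RR$. For each irreducible representation $(\rho,V)$ with $\rho\in\RR$, I would introduce the operator
\begin{equation*}
T_\rho := \sum_{g\in G} \overline{f(g)}\, \rho(g) \ \in\ \End(V).
\end{equation*}
The first key step is to observe that because $f$ (and hence $\overline{f}$) is a class function, a change of variable $g\mapsto h^{-1}gh$ inside the sum shows that $\rho(h) T_\rho \rho(h)^{-1} = T_\rho$ for every $h\in G$. Thus $T_\rho \in \End_G(V)$. By Schur's Lemma \ref{Schur} (or equivalently Theorem \ref{End} in the irreducible case), $T_\rho = \lambda_\rho \Id$ for some scalar $\lambda_\rho\in\C$.

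The second step is to compute the scalar by taking traces:
\begin{equation*}
\lambda_\rho \dim(\rho) = \Trace(T_\rho) = \sum_{g\in G} \overline{f(g)}\, \chi_\rho(g) = |G|\,\langle \chi_\rho, f\rangle = 0,
\end{equation*}
so $T_\rho = 0$ for every irreducible $\rho$. By Theorem \ref{Dec Reg Rep}, the regular representation $\CC(G)$ decomposes as a sum of irreducibles, so the analogous operator $T := \sum_{g\in G}\overline{f(g)}\, g$ acts as zero on $\CC(G)$. Evaluating at the Dirac function $\delta_e$ (with $e$ the identity of $G$) and using $g\delta_e = \delta_g$, we obtain
\begin{equation*}
0 = T\delta_e = \sum_{g\in G} \overline{f(g)}\, \delta_g = \overline{f},
\end{equation*}
so $f=0$, as required.

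I do not expect any serious obstacle: the entire argument is a standard Schur-plus-regular-representation manipulation, and every ingredient (Schur's lemma, orthogonality of matrix coefficients, decomposition of $\CC(G)$) has already been established in the excerpt. The only point that needs a little care is the bookkeeping of complex conjugates in the Hermitian inner product, which is why I put $\overline{f(g)}$ rather than $f(g)$ in the definition of $T_\rho$, ensuring that the trace computation picks up exactly $\langle \chi_\rho, f\rangle$ and vanishes.
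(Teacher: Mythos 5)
Your proof is correct, but it takes a genuinely different route from the paper's. The paper argues by a dimension count: it identifies $\End_G(\CC(G))$ with the group algebra $\C[G]$ (via $\CC(G)=\C[G]\delta_e$), computes the center of $\C[G]$ to have dimension equal to the number $s$ of conjugacy classes, and on the other hand uses Theorems \ref{Dec Reg Rep} and \ref{End} to identify $\End_G(\CC(G))$ with $\prod_{R\in\RR}\C^{d_R\times d_R}$, whose center has dimension $|\RR|$. Equating the two gives $|\RR|=s=\dim R(G)$, and since the $\chi_R$ are orthonormal, they must be a basis. Your argument instead shows completeness head-on: any class function orthogonal to all $\chi_R$ is annihilated by building the averaging operator $T_\rho=\sum_g\overline{f(g)}\rho(g)$, observing it is an intertwiner because $f$ is a class function, invoking Schur to write it as a scalar, computing the scalar to be $\frac{|G|}{d_\rho}\langle\chi_\rho,f\rangle=0$, and then propagating $T_\rho=0$ through the regular representation to evaluate $T\delta_e=\overline{f}=0$. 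Both are standard; the paper's approach gives the bonus of a structural identification of the number of irreducibles with the number of conjugacy classes (via the isomorphism of the two centers), whereas yours is self-contained, avoids the center of $\C[G]$ entirely, and is closer in spirit to the earlier use of Schur-averaging in Theorem \ref{orthog rel}. Your bookkeeping of conjugates against the Hermitian inner product $\langle\chi,\psi\rangle=\frac{1}{|G|}\sum_g\chi(g)\overline{\psi(g)}$ is also correct.
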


\begin{proof}
It is clear that $\CC(G)=\C[G]\delta_e$. Thus $\End_G(\CC(G))\simeq
\C[G]$. In particular, the center of $\End_G(\CC(G))$ is isomorphic to
the center $Z(\C[G])$ of $\C[G]$. It is easy to verify that the center of $\C[G]$
is the vector space spanned by the elements $\lambda_i:=\sum_{g\in
  C_i} g$ associated to each conjugacy class $C_i$ of $G$, thus
$Z(\C[G])$ is of dimension $s$ the number of conjugacy classes of $G$.
On the other hand, as a consequence of Theorem \ref{Dec Reg Rep} and
Theorem \ref{End},
we have $\End_G(\CC(G))\simeq \prod_{R\in \RR} \C^{d_R\times d_R}$
where $d_R=\dim(R)$. Thus the center of $\End_G(\CC(G))$ is isomorphic
to $\C^{|\RR|}$ and we have proved that the number of $G$-irreducible
modules is equal to the number of conjugacy classes of $G$.
\end{proof}

\begin{remark} There is not in general a natural bijection between
  the set of conjugacy classes of $G$ and the set of its irreducible
  representations. However, in the special case of the symmetric group
  $S_n$, such a correspondance exists. The conjugacy classes are
  naturally indexed by the partitions $\lambda$ of $n$ and to every
  partition $\lambda$ of $n$ is associated an irreducible module
  $S^{\lambda}$ also called a Specht module (see \cite{Sagan}).
\end{remark}

\subsection{Induced representation and Frobenius reciprocity}
Induction is a way to construct representations of a group $G$ from
representations of its subgroups. Looking at the irreducible subspaces
of representations that are induced from subgroups is a   very convenient way to
find new irreducible representations of a group $G$. Induction is an
operation on representations which is dual to the easier to understand
restriction. If $V$ is a $G$-module and $H$ is a subgroup of $G$, the
restriction $\Res^G_H (V)$ is simply the space $V$ considered as a
$\C[H]$-module. If $V$ is an $H$-module, we define $\Ind_H^G(V)$ to be
the $\C[G]$-module 
\begin{equation*}
\Ind_H^G(V):= \C[G]\otimes_{\C[H]} V.
\end{equation*}
Here we
exploit the bi-module structure of $\C[G]$ (the tensor product over
$\C[H]$ means that $\lambda\mu\otimes v=\lambda\otimes \mu v$ when
$\mu\in \C[H]$). A more explicit (but less intrisic) formulation for
this construction is the following: let $\{x_1,\dots,x_t\}$ be a
complete system of representatives of $G/H$, so that $G=x_1H\cup\dots\cup x_t H$. Then 
\begin{equation*}
\Ind_H^G(V)=\oplus_{i=1}^t x_iV
\end{equation*}
where the left action of $G$ is as follows: for all $i$, there is $j$ and
$h\in H$ both depending on $g$ such that $gx_i=x_j h$. Then $gx_iv:=x_j(hv)$
where $hv\in V$. 
A third construction of $\Ind_H^G(V)$ is the following:
\begin{equation*}
\Ind_H^G(V)=\{f:G\to V \text{ such that } f(gh)=h^{-1}f(g)\}.
\end{equation*}
The equivalence of these three formulations is a recommended exercise !

\smallskip
\paragraph{\bf Example:} The permutation representation of $G$ on
$X=G/H$ is nothing else than the induction of the trivial
representation of $H$. In short, $\CC(X)=\Ind_H^G \1$.

\smallskip
Since the induction of two isomorphic $H$-modules leads to isomorphic
$G$-modules and similarly for the restriction, these
operations act on the characters thus we denote similarly $\Res_H^G
\chi$, $\Ind_H^G \chi$ the characters of the corresponding modules.  

\begin{lemma} Let $\chi$ be a character of $H$. The induced character $\Ind_H^G \chi$ is
  given by the formula:
\begin{equation*}
\Ind_H^G \chi (g)=\frac{1}{|H|} \sum_{\substack{ x\in    G\\x^{-1}gx\in H}} \chi(x^{-1}gx).
\end{equation*}
\end{lemma}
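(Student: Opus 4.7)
My plan is to use the concrete model $\Ind_H^G(V)=\oplus_{i=1}^t x_iV$ from the excerpt, where $\{x_1,\dots,x_t\}$ is a system of coset representatives of $G/H$, and compute the trace of the action of $g$ directly in block form, then symmetrize the resulting sum over all of $G$ to get the stated formula.

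First I would pick any basis $(v_\alpha)$ of $V$ and use the basis $(x_i v_\alpha)_{i,\alpha}$ of $\Ind_H^G(V)$. For fixed $i$, writing $gx_i=x_{j(i)}h_i$ with $h_i\in H$ uniquely determined, the action is $g\cdot x_iv_\alpha = x_{j(i)}(h_i v_\alpha)$. Thus the matrix of $g$ decomposes into $t\times t$ blocks, and the $(j,i)$ block is zero unless $j=j(i)$. In particular, the diagonal block contributes to the trace only when $j(i)=i$, i.e.\ when $gx_i=x_ih_i$, equivalently $x_i^{-1}gx_i=h_i\in H$. When this holds, the diagonal block is exactly the matrix of $h_i$ acting on $V$, whose trace is $\chi(h_i)=\chi(x_i^{-1}gx_i)$. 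Therefore
\begin{equation*}
\Ind_H^G\chi(g)=\sum_{\substack{1\le i\le t\\ x_i^{-1}gx_i\in H}}\chi(x_i^{-1}gx_i).
\end{equation*}

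Next I would rewrite the sum over representatives as a sum over all of $G$ at the cost of a factor $|H|$. Every $x\in G$ factors uniquely as $x=x_ih$ with $h\in H$; then $x^{-1}gx=h^{-1}(x_i^{-1}gx_i)h$. Since $H$ is a subgroup, $x^{-1}gx\in H$ if and only if $x_i^{-1}gx_i\in H$, and when this is the case $\chi(x^{-1}gx)=\chi(x_i^{-1}gx_i)$ because $\chi$ is a class function on $H$ (a character is constant on conjugacy classes of its group). Hence each index $i$ appearing in the previous sum contributes the same value $|H|$ times when we expand to all $x\in G$ with $x^{-1}gx\in H$, and dividing by $|H|$ yields the announced formula.

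The only subtle step is checking that the formula is independent of the chosen system $\{x_i\}$ of coset representatives; this is automatic from the final expression since it sums over all $x\in G$, but it is also visible in the intermediate sum because replacing $x_i$ by $x_ih'$ conjugates $x_i^{-1}gx_i$ by $h'\in H$, leaving both the condition ``lies in $H$'' and the value of $\chi$ unchanged. I do not expect any serious obstacle: the argument is essentially bookkeeping of the block matrix, and the class-function invariance of $\chi$ on $H$ is what makes the symmetrization legitimate.
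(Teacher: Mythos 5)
Your proof is correct and follows essentially the same route as the paper: decomposing $\Ind_H^G(V)$ as $\oplus_i x_iV$, reading off the trace from the diagonal blocks where $x_i^{-1}gx_i\in H$, and then replacing the sum over coset representatives by a sum over $G$ divided by $|H|$. You simply spell out more carefully than the paper does why the last step is legitimate (namely, that $\chi$ is a class function on $H$), which is a sound addition but not a different argument.
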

\begin{proof}
We take a decomposition $\Ind_H^G(V)=x_1V\oplus\dots \oplus x_tV$ where
$\{x_1,\dots,x_t\}$ are representatives of $G/H$. Since $gx_iv=x_jhv$
with the notations above, $g(x_iV)\subset x_jV$ and the block $x_iV$ will contribute to the
trace of $x\mapsto gx$ only when $j=i$, which corresponds to the case
when $x_i^{-1}gx_i\in H$. Then, the multiplication by $g$ on
$x_iV$ acts like the multiplication by $h=x_i^{-1}gx_i$ on V. Thus we
have
\begin{align*}
\Ind_H^G \chi (g)&=\sum_{\substack{ 1\leq i\leq t\\x_i^{-1}gx_i\in H}} \chi(x_i^{-1}gx_i)\\
&=\frac{1}{|H|} \sum_{\substack{ x\in    G\\x^{-1}gx\in H}}
\chi(x^{-1}gx).
\end{align*}
\end{proof}

The duality between the operations of restriction and induction is
expressed in the following important theorem:

\begin{theorem}[Frobenius reciprocity]
Let $H$ be a subgroup of $G$ and let $\chi$ and $\psi$ be respectively
a character of $H$ and a character of $G$. Then
\begin{equation*}
\langle \Ind_H^G \chi, \psi\rangle=\langle \chi, \Res_H^G \psi\rangle.
\end{equation*}
\end{theorem}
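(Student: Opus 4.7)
The plan is to prove the identity by direct computation, using the explicit formula for the induced character established in the preceding lemma together with the fact that $\psi$, being a character of $G$, is constant on conjugacy classes. There is no need to invoke modules or the tensor product construction at this stage; everything reduces to a clean change of variables in a double sum.

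First I would expand the left-hand side using the definition of the inner product on $\CC(G)$ and the induced character formula:
\begin{equation*}
\langle \Ind_H^G \chi, \psi\rangle
= \frac{1}{|G|\cdot|H|}\sum_{g\in G}\ \sum_{\substack{x\in G\\ x^{-1}gx\in H}} \chi(x^{-1}gx)\,\overline{\psi(g)}.
\end{equation*}
The key step is then to invoke the class-function property of $\psi$, namely $\overline{\psi(g)}=\overline{\psi(x^{-1}gx)}$, which turns the summand into $\chi(x^{-1}gx)\,\overline{\psi(x^{-1}gx)}$, i.e.\ a quantity depending only on $h:=x^{-1}gx$.

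Next I would interchange the order of summation and, for each fixed $x\in G$, perform the substitution $h=x^{-1}gx$. Conjugation by $x$ is a bijection of $G$, so as $g$ ranges over those elements of $G$ for which $x^{-1}gx\in H$, the variable $h$ ranges freely over $H$. Hence the inner sum becomes $\sum_{h\in H} \chi(h)\overline{\psi(h)}$, independent of $x$, and the outer summation over $x\in G$ contributes the factor $|G|$:
\begin{equation*}
\langle \Ind_H^G \chi, \psi\rangle
=\frac{1}{|G|\cdot|H|}\cdot|G|\cdot\sum_{h\in H}\chi(h)\overline{\psi(h)}
=\frac{1}{|H|}\sum_{h\in H}\chi(h)\overline{\psi(h)}.
\end{equation*}
The right-hand side is, by definition, $\langle \chi,\Res_H^G\psi\rangle$, which concludes the argument.

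The only conceptual point to check is the change of variables, which is the reason for restricting the outer sum to those pairs $(g,x)$ with $x^{-1}gx\in H$; once this is set up, the computation is entirely mechanical. There is no real obstacle: the work has already been done in proving the formula for $\Ind_H^G\chi(g)$, and Frobenius reciprocity is essentially a repackaging of that formula combined with the invariance of characters under conjugation.
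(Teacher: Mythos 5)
Your proof is correct and follows essentially the same route as the paper's: expand the inner product via the induced-character formula, interchange the order of summation, change variables $h = x^{-1}gx$ using the class-function property of $\psi$, and observe that the inner sum is independent of $x$. The only cosmetic difference is that the paper introduces the auxiliary function $\tilde{\chi}$ (the extension of $\chi$ by zero to all of $G$) to avoid carrying the constraint $x^{-1}gx \in H$ through the manipulations, whereas you track that constraint directly; the substance of the argument is identical.
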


\begin{proof} Let $\tilde{\chi}:G\to \C$ be defined by:
  $\tilde{\chi}(g)=\chi(g)$ if $g\in H$ and   $\tilde{\chi}(g)=0$ if
  $g\notin H$ (of course $\tilde{\chi}$ is not
  a character of $G$). We compute $\langle \Ind_H^G \chi, \psi\rangle$:
\begin{align*}
\langle \Ind_H^G \chi, \psi\rangle &=\frac{1}{|G|}\sum_{x\in G}\Ind_H^G \chi(g) \overline{\psi(g)}\\
  &=\frac{1}{|G||H|} \sum_{g\in G} \big(\sum_{x\in G} \tilde{\chi}(x^{-1}gx)\big)\overline{\psi(g)}\\
  &=\frac{1}{|G||H|} \sum_{x\in G} \big(\sum_{g\in G}\tilde{\chi}(x^{-1}gx)\overline{\psi(g)}\big)\\
  &=\frac{1}{|G||H|} \sum_{x\in G} \big(\sum_{g'\in G}\tilde{\chi}(g')\overline{\psi(xg'x^{-1})}\big)\\
  &=\frac{1}{|G||H|} \sum_{x\in G} \big(\sum_{g'\in  G}\tilde{\chi}(g')\overline{\psi(g')}\big)\\
  &=\frac{1}{|H|} \sum_{h\in H}\chi(h)\overline{\psi(h)}= \langle \chi, \Res_H^G \psi\rangle.
\end{align*}
\end{proof}

\subsection{Examples from coding theory} 
In coding theory we are mostly interested in the decomposition of $\CC(X)$ under the
action of $G=\Aut(X)$ for various spaces $X$. We recall that the
action of $G$ on $f\in \CC(X)$ is given by $(gf)(x)=f(g^{-1}x)$.
The space $\CC(X)$ is endowed with the inner product
\begin{equation*}
\langle f, f'\rangle =\frac{1}{|X|}\sum_{x\in X} f(x)\overline{f'(x)}.
\end{equation*}
which is $G$-invariant.

\subsubsection{\bf The binary Hamming space $H_n$:}\label{Hamming 1} recall that $G=T\rtimes S_n$.
Let, for $y\in H_n$, $\chi_y\in \CC(H_n)$ be defined by $\chi_y(x)=(-1)^{x\cdot y}$. 
The set  $\{\chi_y,y\in H_n\}$ is exactly the set of
irreducible characters of the additive group $\F_2^n$, and form an
orthonormal
basis of $\CC(H_n)$.
The computation of the action of $G$ on $\chi_y$ shows that
for $\sigma\in S_n$, $\sigma\chi_y=\chi_{\sigma(y)}$
and for $t_u\in T$, $t_u\chi_y=(-1)^{u\cdot y} \chi_y$.
Let, for $0\leq k\leq n$,
\begin{equation*}
P_k:=\perp_{y , wt(y)=k} \C \chi_y
\end{equation*}
Thus $P_k$ is a $G$-invariant subspace of $\CC(H_n)$ of dimension
$\binom{n}{k}$ and we have the decomposition
\begin{equation}\label{dec Hamming}
\CC(H_n)=P_0\perp P_1\perp\dots \perp P_n.
\end{equation}
The computation  $\langle \chi_{P_k},\chi_{P_k}\rangle=1$ where
$\chi_{P_k}$
is the character of the $G$-module $P_k$ shows that
these modules are $G$-irreducible.

Now we introduce the Krawtchouk polynomials. The element 
$Z_k:=\sum_{wt(y)=k} \chi_y$ of $\CC(H_n)$ is $S_n$-invariant. In other
words,
$Z_k(x)$ only depends on $wt(x)$. We define the Krawtchouk polynomial
$K_k$ for $0\leq k\leq n$
by 
\begin{align}\label{Krawtchouk}
K_k(w):&=Z_k(x)=\sum_{wt(y)=k} (-1)^{x\cdot y}\text { where }
wt(x)=w\\*
&=\sum_{i=0}^w (-1)^i\binom{w}{i}\binom{n-w}{k-i}.
\end{align}

We review some properties of these polynomials:
\begin{enumerate}
\item $\deg(K_k)=k$
\item $K_k(0)=\binom{n}{k}$
\item Orthogonality relations: for all $0\leq k\leq l\leq n$
\begin{equation*}
\frac{1}{2^n}\sum_{w=0}^n \binom{n}{w}K_k(w)K_l(w)=\delta_{k,l}\binom{n}{k}
\end{equation*}
\end{enumerate}
The last property is just a reformulation of the orthogonality of the
$Z_k\in P_k$, since, if $f,f'\in \CC(H_n)$ are $S_n$-invariant, and
$\tilde{f}(w):=f(x)$, $wt(x)=w$,
\begin{align*}
\langle f,f'\rangle&=\frac{1}{2^n}\sum_{x\in H_n} f(x)f'(x)\\
&=\frac{1}{2^n}\sum_{w=0}^n \binom{n}{w}\tilde{f}(w)\tilde{f'}(w).
\end{align*}
The above three properties characterize uniquely the Krawtchouk
polynomials. 

Let $C\subset H_n$ be a binary code. Let $\1_C$ be the characteristic
function of $C$. The obvious inequalities hold:
\begin{equation}\label{pos1}
0\leq k\leq n,\quad \sum_{wt(y)=k} \langle \1_C,\chi_y\rangle^2\geq 0.
\end{equation}
Since the decomposition of $\1_C$ over the basis $\chi_y$
reads
\begin{equation*}
\1_C=\sum_{y\in H_n} \langle \1_C,\chi_y\rangle \chi_y.
\end{equation*}
the above inequalities are indeed  reformulations  of  the non negativity of the
squared norm of the projections $p_{P_k}(\1_C)$. They express in terms
of the Krawtchouk polynomials:
\begin{equation}\label{pos2}
0\leq k\leq n,\quad \frac{1}{2^{2n}}\sum_{(x,x')\in C^2} K_k(d_H(x,x'))\geq 0
\end{equation}
or equivalently in terms of the distance distribution of the code $C$:
if
\begin{equation*}
A_w(C):=\frac{1}{|C|}|\{(x,x')\in C^2 : d_H(x,x')=w\}|
\end{equation*}
then 
\begin{equation*}\label{pos3}
0\leq k\leq n,\quad \frac{|C|}{2^{2n}}\sum_{w=0}^n A_w(C)K_k(w)\geq 0.
\end{equation*}
These inequalities are the basic inequalities involved in Delsarte
linear programming method. We shall encounter similar inequalities in
a very general setting.

In the special case when $C$ is linear, we have 
\begin{equation*}
\langle \1_C,\chi_y\rangle=\frac{|C|}{2^n}\1_{C^{\perp}}(y)
\end{equation*}
so that we recognise the identity 
\begin{equation*}
\sum_{wt(y)=k} \langle \1_C,\chi_y\rangle^2=\frac{|C|}{2^{2n}}\sum_{w=0}^n A_w(C)K_k(w)
\end{equation*}
to be the Mac Williams identity
\begin{equation*}
A_k(C^{\perp})=\frac{1}{|C|}\sum_{w=0}^n A_w(C)K_k(w).
\end{equation*}

The more general $q$-ary Hamming space affords similar results; it is
treated in \ref{qHamming 1}.

\subsubsection{\bf The Johnson spaces $J_n^w$:}\label{Johnson 1}  the
group is $G=S_n$.
Here, we shall see at work a standard way to evidence $G$-submodules as kernels of
$G$-endomorphisms. For details we refer to \cite{Del2} where the
$q$-Johnson spaces are given a uniform treatment. We introduce the
applications
\begin{align*}
\delta : \CC(J_n^w)&\to \CC(J_n^{w-1})\\
            f&\mapsto \delta(f) : \delta(f)(x):=\sum_{y\in J_n^w,\ 
              x\subset y} f(y)
\end{align*}
and 
\begin{align*}
\psi : \CC(J_n^{w-1})&\to \CC(J_n^{w})\\
            f&\mapsto \psi(f) : \psi(f)(x):=\sum_{y\in J_n^{w-1},\ 
              y\subset x} f(y)
\end{align*}
Both of these applications commute with the action of $G$. They
satisfy the following properties: 
$\langle f, \psi(f')\rangle = \langle \delta(f), f'\rangle$, $\psi$ is
injective  and $\delta$ is surjective. 
Therefore the subspace of $\CC(J_n^w)$:
\begin{equation*}
H_w:=\ker \delta
\end{equation*}
is a $G$-submodule of dimension $\binom{n}{w}-\binom{n}{w-1}$ 
and we have the orthogonal decomposition
\begin{equation*}
\CC(J_n^w)=H_w\perp \psi(\CC(J_n^{w-1}))\simeq H_w\perp \CC(J_n^{w-1}).
\end{equation*}
By induction we obtain a decomposition
\begin{equation*}
\CC(J_n^w)\simeq H_w\perp H_{w-1}\perp\dots \perp H_0
\end{equation*}
which can be proved to be the irreducible decomposition of
$\CC(J_n^w)$ (see \ref{Johnson 2}).

\section{Linear representations of compact groups}

In this section we enlarge the discussion to the representation theory
of compact groups. For this section we refer to \cite{Bump}.

\subsection{Finite dimensional representations}

The theory of finite dimensional representations of finite groups extends nicely and
straightforwardly to compact groups. 
A finite dimensional representation of a compact group $G$ is a
continuous homomorphism $\rho: G\to \Gl(V)$ where $V$ is a complex
vector space of finite dimension. 

A compact group $G$
affords a Haar measure, which is a regular left and right
invariant measure. We assume this measure to be normalized, i.e. the
group $G$ has measure $1$. With this measure the finite sums
over elements of a finite group can be replaced with integrals; so 
the crucial construction of a $G$-invariant inner product in the
proof of Maschke theorem
extends to compact groups with the formula
\begin{equation*}
\langle x,y\rangle':=\int_G \langle gx,gy\rangle dg.
\end{equation*}
Hence Maschke theorem remains valid for  finite dimensional
representations.
We keep the notation $\RR$ for a set of representatives of the
finite dimensional irreducible representations of $G$, chosen to be
representations with unitary matrices. 
A main
difference with the finite case is that $\RR$ is not finite anymore.

\subsection{Peter Weyl theorem}

Infinite dimensional representations  immediately occur 
with the generalization of permutation representations. 
Indeed, if $G$ acts continuously on a space $X$, it is natural to
consider the action of $G$ on the space $\CC(X)$ of complex valued
continuous functions on $X$ given by $(gf)(x)=f(g^{-1}x)$ to be  
a natural generalization of permutation representations. 
A typical example of great interest in
coding theory is the action of $G=O(\R^n)$ on the 
unit sphere of the Euclidean space:
\begin{equation*}
S^{n-1}:=\{x\in \R^n \ :\ x\cdot x=1\}.
\end{equation*}
The regular representation, which is the special case $\CC(G)$, with
the left action of $G$ on itself, can be
expected to play an important role similar to the finite case.
It is endowed with the inner product 
\begin{equation*}
\langle f,f'\rangle:=\int_{G} f(g)\overline{f'(g)}dg.
\end{equation*}
For  $R\in \RR$, the matrix coefficients  $g\to
R_{i,j}(g)$ belong to unitary matrices.
Theorem \ref{orthog rel} establishing the orthogonality relations
between the matrix coefficients of the elements of $\RR$ remains
valid;
thus they form an orthogonal system in $\CC(G)$. The celebrated Peter Weyl theorem asserts that these elements span a vector space
which is dense in $\CC(G)$ for the topology of uniform convergence.

\begin{theorem}\label{PWT}[Peter Weyl theorem]
The finite  linear combinations of the functions $R_{i,j}$ are dense
in $\CC(G)$ for the topology of uniform convergence.
\end{theorem}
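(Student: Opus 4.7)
The strategy is to combine a spectral decomposition of $L^2(G)$ with a convolution smoothing argument. Let
\[
\mathcal{E}=\operatorname{span}_{\C}\{R_{i,j}\ :\ R\in\RR,\ 1\le i,j\le \dim R\}\subset\CC(G).
\]
I would first establish that $\mathcal{E}$ is dense in $L^2(G,dg)$ and then transfer this to density in $\CC(G)$ for the uniform norm.

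For the $L^2$-density, I would introduce, for each continuous function $\phi$ on $G$ satisfying $\phi(g^{-1})=\overline{\phi(g)}$, the convolution operator $T_{\phi}$ on $L^2(G,dg)$ defined by
\[
T_\phi f(x)=\int_G \phi(y)f(y^{-1}x)\,dy.
\]
Uniform continuity of $\phi$ on the compact group $G$ makes $T_\phi$ a Hilbert--Schmidt (hence compact) operator; self-adjointness comes from the symmetry condition on $\phi$ together with the inversion-invariance of the Haar measure, and a direct change of variables shows that $T_\phi$ commutes with the right regular representation. The spectral theorem for compact self-adjoint operators then decomposes $L^2(G)$ as a Hilbert direct sum of finite-dimensional eigenspaces $V_\lambda$ of $T_\phi$ with $\lambda\neq 0$, together with the kernel of $T_\phi$; each $V_\lambda$ is stable under right translation and thus, by Maschke's theorem extended to compact groups (the averaging over $G$ being replaced by integration against the Haar measure), decomposes as a finite direct sum of irreducible right $G$-subrepresentations, all of whose matrix coefficients lie in $\mathcal{E}$. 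If $f\in L^2(G)$ were orthogonal to $\mathcal{E}$, then $f$ would be orthogonal to every $V_\lambda$ and hence $T_\phi f=0$ for every admissible $\phi$; letting $\phi$ run through an approximate identity forces $f=0$, so $\mathcal{E}$ is dense in $L^2(G)$.

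For the second step, fix $f\in\CC(G)$ and $\varepsilon>0$. By uniform continuity of $f$, choose a continuous non-negative symmetric $\phi$ with $\int_G\phi\,dg=1$, supported in a neighborhood of the identity small enough that $\|\phi*f-f\|_\infty<\varepsilon/2$. By the $L^2$-density just proved, pick $g_n\in\mathcal{E}$ with $g_n\to f$ in $L^2$. A short computation using $R(y^{-1}x)=R(y^{-1})R(x)$ gives
\[
\phi*R_{i,j}=\sum_{k} c_{i,k}\,R_{k,j},\qquad c_{i,k}=\int_G \phi(y) R_{i,k}(y^{-1})\,dy,
\]
so $\phi*g_n\in\mathcal{E}$ for every $n$. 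The Cauchy--Schwarz bound $\|\phi*h\|_\infty\le \|\phi\|_{L^2}\|h\|_{L^2}$, applied to $h=g_n-f$, yields $\phi*g_n\to \phi*f$ uniformly; combined with $\|\phi*f-f\|_\infty<\varepsilon/2$, this places $f$ within $\varepsilon$ of $\mathcal{E}$ in sup norm.

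The main obstacle is the spectral step: one must check carefully that $T_\phi$ is compact, that its non-zero eigenspaces are genuinely right-invariant and finite-dimensional, and that the Maschke-type decomposition together with the orthogonality relations of Theorem~\ref{orthog rel} indeed places each eigenspace inside $\mathcal{E}$. Once this analytic foundation is in place, the convolution-smoothing argument transferring $L^2$-density into uniform density is routine.
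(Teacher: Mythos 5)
Your proof is correct and rests on the same analytic core as the paper's: the convolution operator $T_\phi$, its compactness and self-adjointness, its commutation with the right regular representation, the spectral theorem producing finite-dimensional right-translation-invariant eigenspaces whose elements are linear combinations of the $R_{i,j}$, and a Cauchy--Schwarz bound of the form $\|\phi*h\|_\infty\le\|\phi\|_{L^2}\|h\|_{L^2}$ to upgrade $L^2$-closeness to uniform closeness. The organizational difference is that you first establish $L^2$-density of $\mathcal{E}$ in the whole of $L^2(G)$ (via the ``orthogonal to all $V_\lambda$ for all $\phi$ implies zero'' argument) and then transfer it to uniform density, whereas the paper works directly with the spectral decomposition of the particular $f$ being approximated (truncating to $f_t=\sum_{|\lambda|>t}f_\lambda$ and applying $T_\phi$); your explicit verification that $\phi*R_{i,j}=\sum_k c_{i,k}R_{k,j}$, so that $\phi*\mathcal{E}\subset\mathcal{E}$, is a clean way to avoid re-invoking the eigenspace machinery in the transfer step. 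One small wording slip: what you need in the $L^2$ step is that the \emph{functions belonging to} each irreducible right-invariant summand of $V_\lambda$ lie in $\mathcal{E}$ (this is the paper's step (1): if $W\simeq R$, a basis $e_1,\dots,e_{d_R}$ transforming by $R$ satisfies $e_j=\sum_i e_i(1)R_{i,j}$), not that ``the matrix coefficients of $W$ lie in $\mathcal{E}$''; the intended argument is clearly the right one, but the phrasing should be tightened.
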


\begin{proof} We give a sketch of the proof:
\begin{enumerate}
\item If $V$ is a finite dimensional  subspace of $\CC(V)$
which is stable by right translation (i.e. by $gf(x)=f(xg)$) 
and $f\in V$, then $f$ is a linear
combination of a finite number of the $R_{i,j}$: according to previous 
discussion, there is a decomposition $V=W_1\oplus\dots\oplus W_n$ such
that
$W_k$ is irreducible. If $W_k\simeq R$, there exists a basis 
$e_1,\dots, e_{d_R}$ of $W_k$ in which the action of $G$ has
matrices $R$.  Explicitly,
\begin{equation*}
e_j(hg)=\sum_{i=1}^{d_R} R_{i,j}(g) e_i(h).
\end{equation*}
Taking $h=1$, we obtain $e_j=\sum_{i=1}^{d_R} e_i(1) R_{i,j}$.
\item 
The idea is to approximate $f\in \CC(G)$ by elements of
such subspaces,  constructed from the eigenspaces of a 
compact selfadjoint operator. We introduce the convolution operators: 
let $\phi\in\CC(G)$,
\begin{equation*}
T_{\phi} (f)(g)=(\phi * f)(g)=\int_{G} \phi(gh^{-1})f(h) dh.
\end{equation*}
\item Since $G$ is compact, $f$ is  uniformly  continuous; this
  property allows to choose $\phi$ 
  such that $\| f-T_{\phi}(f)\|_{\infty}$ is arbitrary small. 
\item The operator $T_{\phi}$ is compact and can be assumed to be
  selfadjoint. The spectral theorem for such operators 
on Hilbert spaces (here $L^2(G)$) asserts that the eigen\-spaces $V_{\lambda}:=\{f:
T_{\phi}f=\lambda f\}$ for $\lambda\neq 0$ are finite dimensional 
and that the space is the direct Hilbert sum $\oplus_{\lambda} V_{\lambda}$.
For $t>0$, the subspaces
$V_t:=\oplus V_{\lambda,\ |\lambda| > t}$
have finite dimension (i.e. there is only a finite
number of eigenvalues $\lambda$ with $|\lambda|>t>0$). 
\item The operator $T_{\phi}$ commutes with the action of $G$ by
  right translation thus the subspaces $V_{\lambda}$ are stable under  this action.
\item Let $f_{\lambda}$ be the projection of $f$ on $V_{\lambda}$. The
  finite sums $f_t:=\sum_{|\lambda|> t} f_{\lambda}$ 
converge to $f-f_0$
  for the $L^2$-norm when $t\to 0$.
\item Moreover, for all $f\in \CC(V)$, $\| T_{\phi}(f)\|_{\infty} \leq
  \|\phi\|_{\infty}\|f\|_2$.
Thus, $T_{\phi}(f_t)$ converges {\it uniformly} to
$T_{\phi}(f-f_0)=T_{\phi}(f)$. Finally, $T_{\phi}(f_t)\in V_t$ and $V_t$
  is finite dimensional and invariant under the action of $G$ by right translations, thus by
  (1) $T_{\phi}(f_t)$ is a linear
  combinations of the $R_{i,j}$.
\end{enumerate}

\end{proof}

If $d_R=\dim(R)$, the vector space spanned by $\{\overline{R_{i,j}},
i=1,\dots,d_R\}$ is $G$-invariant and isomorphic to
$R$. So Peter-Weyl theorem means that the decomposition of the regular
decomposition is
\begin{equation*}
\CC(G)=\perp_{R\in \RR} \I_R 
\end{equation*}
where $\I_R\simeq R^{d_R}$,
generalizing Theorem \ref{Dec Reg Rep} (one has a better understanding
of this decomposition with the action of $G\times G$ on $G$
given by $(g,g')h=ghg'^{-1}$. For this action $\CC(G)=\oplus_{R\in
  \RR} R\otimes R^*$ where $R^*$ is the contragredient representation,
and $R\otimes R^*$ is $G\times G$-irreducible).

Since uniform convergence is stronger than $L^2$ convergence, we also
have as a consequence of Peter Weyl theorem that the matrix
coefficients $R_{i,j}$ (suitably rescaled) form an orthonormal basis
of $L^2(G)$  in the sense
of Hilbert spaces.

A slightly more general version of
Peter Weyl theorem deals with the decomposition of $\CC(X)$ where $X$
is a compact space on which $G$ acts homogeneously. If $G_{x_0}$ is
the stabilizer of a  base point $x_0\in X$, then $X$ can be
identified with the quotient space $G/G_{x_0}$. The Haar measure
on $G$ gives rise to a
$G$-invariant regular measure $\mu$ on $X$ and   $\CC(X)$ 
is endowed with the inner product 
\begin{equation*}
\langle f,f'\rangle:=\frac{1}{\mu(X)}\int_X f(x)\overline{f'(x)}d\mu(x).
\end{equation*}
The space $\CC(X)$ can be identified with the space
$\CC(G)^{G_{x_0}}$
of $G_{x_0}$-invariant (for the right translation) functions thus $\CC(X)$ affords a
decomposition of the form
\begin{equation*}
\CC(X)\simeq \perp_{R\in \RR} R^{m_R}
\end{equation*}
for some integers $m_R$, $0\leq m_R\leq d_R$, in the sense of uniform as well as  $L^2$
convergence.

A more serious generalization of the above theorem deals with the
unitary representations of $G$. These are the continuous
homomorphisms from $G$ to the unitary group of a Hilbert space.

\begin{theorem}\label{Hilbert}
Let $\pi:G\to U(H)$ be a continuous homomorphism from $G$ to the
unitary group of a Hilbert space $H$. Then $H$ is a direct Hilbert sum
of finite dimensional irreducible $G$-modules.
\end{theorem}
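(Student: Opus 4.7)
The plan is to deduce Theorem~\ref{Hilbert} from the regular-representation Peter--Weyl theorem (Theorem~\ref{PWT}), by transporting the isotypic decomposition of $L^2(G)$ into $H$ through a family of $G$-equivariant maps, and then closing up by a Zorn's lemma argument. The idea is that although $H$ may be infinite-dimensional, every vector $v\in H$ can be approximated by vectors lying in finite-dimensional $G$-invariant subspaces, which we manufacture as images of the (finite-dimensional!) isotypic components $\I_R\subset L^2(G)$.

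Concretely, for each $v\in H$ introduce the bounded operator
\[
\Phi_v : L^2(G) \longrightarrow H, \qquad \Phi_v(f) := \int_G f(g)\,\pi(g) v\, dg,
\]
where the integral is a Bochner integral (the integrand is continuous and bounded in norm by $|f(g)|\,\|v\|$, yielding $\|\Phi_v(f)\|\leq\|f\|_2\,\|v\|$ by Cauchy--Schwarz on the probability space $G$). The substitution $g\mapsto hg$ shows that $\Phi_v$ intertwines the left regular representation on $L^2(G)$ with $\pi$ on $H$. Now Theorem~\ref{PWT}, in its Hilbert-space form recorded after its proof, gives $L^2(G)=\perp_{R\in \RR}\I_R$ with $\I_R\simeq R^{d_R}$, so in particular $\dim \I_R=d_R^2$; hence each $\Phi_v(\I_R)$ is a $G$-invariant subspace of $H$ of dimension at most $d_R^2$, and by Theorem~\ref{t1} and Lemma~\ref{Schur} decomposes as an orthogonal sum of finitely many irreducible copies of $R$.

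For density, choose a net $(f_\alpha)$ of continuous, non-negative, symmetric functions on $G$ with $\int_G f_\alpha=1$ and supports shrinking to $\{e\}$. Compactness of $G$ makes $g\mapsto\pi(g)v$ uniformly continuous, so
\[
\|\Phi_v(f_\alpha)-v\|\leq \int_G f_\alpha(g)\,\|\pi(g)v-v\|\,dg\longrightarrow 0,
\]
which, together with continuity of $\Phi_v$ and the density of $\bigoplus_R \I_R$ in $L^2(G)$, puts $v$ into $\overline{\sum_R\Phi_v(\I_R)}$. Finally, pick by Zorn a maximal pairwise-orthogonal family $\mathcal{F}$ of finite-dimensional irreducible $G$-invariant subspaces of $H$ and set $K:=\overline{\bigoplus_{W\in\mathcal{F}}W}$. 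Unitarity of $\pi$ makes $K^\perp$ $G$-invariant; were $K^\perp\neq\{0\}$, any $v\in K^\perp\setminus\{0\}$ would satisfy $\pi(g)v\in K^\perp$ for all $g$, hence $\Phi_v(\I_R)\subset K^\perp$ for every $R$, and the density statement would produce a nonzero finite-dimensional irreducible $G$-submodule of $K^\perp$, contradicting maximality. Therefore $H=K$, which is the stated Hilbert-sum decomposition.

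The main obstacle, and the place where Peter--Weyl does the genuine work, is ensuring that each $\Phi_v(\I_R)$ is truly finite-dimensional with a controlled irreducible decomposition: the entire argument leans on the prior fact that $L^2(G)$ decomposes into finite-dimensional isotypic pieces. Everything else --- Bochner integrability, approximate-identity convergence, the Zorn step, unitary preservation of orthogonal complements --- is standard Hilbert-space book-keeping that adapts the finite-group discussion of the preceding section.
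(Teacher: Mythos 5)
Your proof is correct and follows essentially the same strategy as the paper's: construct a $G$-equivariant integral operator from (a piece of) $L^2(G)$ into $H$, invoke Peter--Weyl to ensure that a nonzero finite-dimensional $G$-invariant image can be found, and then iterate on orthogonal complements. You simply package the paper's ad-hoc operator $T:V\to H$ as the restriction of a single map $\Phi_v:L^2(G)\to H$ to the isotypic blocks $\I_R$, and you fill in the details the paper leaves implicit (the approximate-identity argument showing some $\Phi_v(\I_R)\neq 0$, and the Zorn's-lemma step replacing ``iterate'').
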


\begin{proof}
The idea is to construct in H a $G$-subspace of
finite dimension and then to iterate with the orthogonal complement of
this subspace.
To that end, for a fixed $v\in H$, one chooses $f\in \CC(G)$ such that
$\int_{G} f(g) (\pi(g)v) dg\neq 0$. From Peter Weyl theorem, $f$ can be
assumed to be a finite linear combination of the $R_{i,j}$. In other
words,
there exists a finite dimensional unitary  representation $(\rho,V)$
and $e_1,e_2\in V$ such
that $f(g)=\langle \rho(g^{-1})e_1,e_2\rangle_V$.
The operator $T:V\to H$ defined by
\begin{equation*}
T(x)=\int_G \langle \rho(g^{-1})x,e_2\rangle_V  (\pi(g)v )dg
\end{equation*}
commutes with the actions of $G$ and is non zero. Thus its image is
a non zero $G$-subspace of finite dimension of $H$.

\end{proof}

\subsection{Examples}

\subsubsection{\bf The unit sphere $S^{n-1}$:}\label{sphere 1} it is the basic
example. The orthogonal group 
$G=O(\R^n)$ acts homogeneously  on $S^{n-1}$. The stabilizer $G_{x_0}$ of $x_0$ can be
identified with $O(x_0^{\perp})\simeq O(\R^{n-1})$.
Here $\mu=\omega $ is the Lebesgue measure on $S^{n-1}$.
We set $\omega_n:=\omega(S^{n-1})$. The irreducible decomposition of
$\CC(S^{n-1})$ is as follows:
\begin{equation*}\label{dec 1}
\CC(S^{n-1})=H^n_0\perp H^n_1\perp\dots H^n_k\perp \dots 
\end{equation*}
where $H^n_k$ is isomorphic to the space $\Harm_k^n$ of harmonic
polynomials:
\begin{equation*}
\Harm^n_k:=\{P\in \C[X_1,\dots, X_n]_k : \Delta P=0, \Delta=\sum_{i=1}^n
\frac{\partial^2}{\partial x_i^2}\}
\end{equation*}
The space $\Harm_k^n$ is a $O(\R^n)$-module because the Laplace operator
$\Delta$
commutes with the action of the orthogonal group and it is moreover irreducible.
Its dimension equals $h_k^n:=\binom{n+k-1}{k}-\binom{n+k-3}{k-2}$. The embedding of
$\Harm_k^n$ into $\CC(S^{n-1})$ is the obvious one, to the
corresponding polynomial
function in the $n$ 
coordinates.

\subsubsection{\bf The action of stabilizers of many points:}\label{sphere 3} 
for our purposes we are interested in the decomposition of some
  spaces $\CC(X)$, $X$ homogeneous for $G$,  for the action of a
  subgroup $H$ of $G$, typically $H=G_{x_1,\dots,x_s}$ the
  stabilizer of $s$ points. In order to describe it, it is enough to 
study the decomposition of the $G$-irreducible submodules of
$\CC(X)$ under the action of $H$; thus we have to decompose only
finite dimensional spaces. However, because the same irreducible
representation of $H$ may occur in infinitely many of the $G$-isotypic
subspaces, it happens that the $H$-isotypic subspaces are not of
finite dimension.  A typical example is given by $X=S^{n-1}$,
$G=O(\R^n)$ and $H=G_e\simeq O(\R^{n-1})$. 
It is a classical result that for the
restricted action to $H$ the decomposition of $\Harm_k^n$ into
$H$-irreducible subspaces is given by:
\begin{equation}\label{dec res}
\Harm_k^n\simeq \bigoplus_{i=0}^k \Harm_i^{n-1}.
\end{equation}
Hence, each of the $H_k^n$ in \eqref{dec 1} decomposes likewise:
\begin{equation*}
H^n_k=H_{0,k}^n\perp H^n_{1,k}\perp\ldots \perp H^n_{k,k}
\end{equation*}
where $H^n_{i,k}\simeq \Harm_i^{n-1}$. We have  the following picture,
where the $H$-isotypic components appear to be the rows of the second decomposition.

\[
\begin{array}{cccccccccc}
\CC(S^{n-1}) & =_G & H^n_0         & \perp & H^n_1         & \perp & \ldots &
\perp & H^n_k&\perp \ldots \\
         & =_H & H^n_{0,0} & \perp & H^n_{0,1} & \perp & \ldots & \perp &  H^n_{0,k}&\perp \ldots\\
         &   &               & \perp & H^n_{1,1} & \perp & \ldots & \perp & H^n_{1,k}&\perp \ldots\\
&   &               &       &               & \multicolumn{5}{r}{\cdots\cdots\cdots\cdots\cdots\cdots\cdots\cdots}\\
         &   &               &       &               &        &  &
         \perp & H^n_{k,k}&\perp \ldots\\
\end{array}
\]

\section{Harmonic analysis of compact spaces}\label{Harmonic analysis}

We take notations for the rest of the lecture notes. 
$X$ is a compact space (possibly finite) on which a compact group
(possibly finite) $G$ acts continuously. Moreover,
$X$ is endowed with
a $G$-invariant Borel regular measure $\mu$ for which $\mu(X)$ is finite. If $X$ itself is finite, the topology is the discrete topology and
the measure is the counting measure. In the previous sections we 
have discussed the decomposition of the permutation representation
$\CC(X)$. In order to lighten the notations, we assume that $G$ has a
countable number of finite dimensional irreducible representations
(it is the case if $G$ is a group of matrices over the reals since
then $L^2(G)$ is a separable Hilbert space),
and we let $\RR=\{R_k,k\geq 0\}$, where $R_0$ is the trivial
representation. We let $d_k:=\dim(R_k)$.
From Theorem \ref{Hilbert}, we have a decomposition
\begin{equation}\label{DEC}
\CC(X)\subset L^2(X)=\oplus_{k\geq 0, 1\leq i\leq m_k} H_{k,i}
\end{equation}
where $H_{k,i}\subset \CC(X)$, $H_{k,i}\simeq R_k$, $0\leq m_k\leq +\infty$ (the case $m_k=0$
means that $R_k$ does not occur, the case $m_k=+\infty$ may occur if
$G$ is not transitive on $X$). The isotypic subspaces are pairwise
orthogonal and denoted
$\I_k$:
\begin{equation*}
\I_k=\oplus_{i=1}^{m_k} H_{k,i}
\end{equation*}
We take the subspaces $H_{k,i}$ to be  also pairwise orthogonal.
For all $k,i$, we choose an orthonormal basis
$e_{k,i,1},\dots,e_{k,i,d_k}$ of $H_{k,i}$ such that in this basis the
  action of $g\in G$ is expressed by the unitary matrix  $R_k(g)$.
The set $\{e_{k,i,s}\}$ is an orthonormal basis in the Hilbert sense.

\subsection{Commuting endomorphisms and zonal matrices.}
In this subsection we want to give more information on the algebra
$\End_G(\CC(X))$ of
 commuting continuous endomorphisms of $\CC(X)$. 
We introduce, for $K\in \CC(X^2)$, the operators $T_K$, called
Hilbert-Schmidt operators:
\begin{equation*}
T_K(f)(x)=\frac{1}{\mu(X)} \int_X K(x,y)f(y)d\mu(y).
\end{equation*}
It is easy to verify that $T_K\in \End_G(\CC(X))$ if $K$ is $G$-invariant,
i.e. if $K(gx,gy)=K(x,y)$ for all $g\in G$, $(x,y)\in X^2$. A
continuous function $K(x,y)$ with this property 
is also called a zonal function. It is also easy,
but worth to notice that $T_K\circ T_{K'}=T_{K*K'}$ where $K*K'$ is
the convolution of $K$ and $K'$:
\begin{equation*}
(K*K')(x,y):=\int_X K(x,z)K'(z,y)d\mu(z).
\end{equation*}
Let
\begin{equation*}
\K:=\{K\in \CC(X^2) : K(gx,gy)=K(x,y)\text{ for all }g\in G,
(x,y)\in X^2\}.
\end{equation*}
The triple $(\K,+,*)$ is a $\C$-algebra (indeed a $\C^*$-algebra, with
$K^*(x,y):=\overline{K(y,x)}$). Thus we have an embedding
$\K\to \End_G(\CC(X))$.

Assume $V\subset \CC(X)$ is a finite dimensional
 $G$-subspace such that $V=W_1\perp \dots \perp W_m$ with $W_i\simeq R\in
 \RR$. By the same proof as the one of  Theorem \ref{End},
 $\End_G(V)\simeq \C^{m\times m}$. More precisely, we have seen that,
 if $u_{j,i}:W_i\to W_j$ are $G$-isomorphisms, such that $u_{k,j}\circ
 u_{j,i}=u_{k,i}$, then an element
 $\phi\in \End_G(V)$ is associated to a matrix $A=(a_{i,j})\in
 \C^{m\times m}$ such that, for all $f\in V$, with $p_{W_i}(f)=f_i$,
\begin{equation*}
\phi(f)=\sum_{i,j=1}^m a_{j,i}u_{j,i}(f_i).
\end{equation*}
For all $1\leq i\leq m$, let $(e_{i,1},\dots ,e_{i,d})$, $d=\dim(R)$,  be an
 orthonormal basis of $W_i$ such that in this basis the
  action of $g\in G$ is expressed by the unitary matrix  $R(g)$.
We define
\begin{equation*}
E_{i,j}(x,y):=\sum_{s=1}^{d}e_{i,s}(x)\overline{e_{j,s}(y)}.
\end{equation*}
Then we have:
\begin{lemma}\label{E1} The above defined functions $E_{i,j}$ satisfy:
\begin{enumerate}
\item $E_{i,j}$ is zonal: $E_{i,j}(gx,gy)=E_{i,j}(x,y)$. 
\item Let $T_{i,j}:=T_{E_{i,j}}$. Then $T_{j,i}(W_i)=W_j$ and $T_{j,i}(W_k)=0$ for $k\neq i$.
\item $T_{i,j}\circ T_{j,k}=T_{i,k}$.
\end{enumerate}
\end{lemma}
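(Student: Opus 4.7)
The plan is to check each of the three assertions by direct computation, relying only on the orthonormality of the chosen bases and the unitarity of the matrices $R(g)$.

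For (1), I would first translate the defining property ``the action of $g$ in the basis $(e_{i,1},\dots,e_{i,d})$ has matrix $R(g)$'' into a pointwise identity on $X$. Using $(gf)(x) = f(g^{-1}x)$ and the unitarity $R(g^{-1}) = R(g)^*$, this becomes
\begin{equation*}
e_{i,s}(gx) = \sum_t \overline{R(g)_{s,t}}\, e_{i,t}(x),
\end{equation*}
and similarly for $e_{j,s}$. Substituting into $E_{i,j}(gx,gy)$ and interchanging sums, the coefficient of $e_{i,t}(x)\overline{e_{j,u}(y)}$ is $\sum_s \overline{R(g)_{s,t}} R(g)_{s,u} = (R(g)^* R(g))_{t,u} = \delta_{t,u}$, so the double sum collapses back to $E_{i,j}(x,y)$.

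For (2), given $f = \sum_t c_t e_{k,t} \in W_k$, I would substitute the definition of $E_{j,i}$ into the integral defining $T_{j,i}(f)(x)$ and pull the $x$-dependent factor out of the integral:
\begin{equation*}
T_{j,i}(f)(x) = \sum_s e_{j,s}(x) \sum_t c_t \langle e_{k,t}, e_{i,s}\rangle.
\end{equation*}
The inner product equals $\delta_{k,i}\delta_{t,s}$: when $k\neq i$ the subspaces $W_k$ and $W_i$ are orthogonal by hypothesis, and when $k=i$ this is just the orthonormality of the basis of $W_i$. Hence $T_{j,i}$ vanishes on $W_k$ for $k\neq i$, and it sends $\sum_s c_s e_{i,s} \in W_i$ to $\sum_s c_s e_{j,s} \in W_j$, which is a bijection $W_i \to W_j$.

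For (3), the identity $T_K\circ T_{K'}=T_{K*K'}$ reduces the claim to $E_{i,j}*E_{j,k}=E_{i,k}$. Expanding the convolution,
\begin{equation*}
(E_{i,j}*E_{j,k})(x,y) = \sum_{s,t} e_{i,s}(x)\,\overline{e_{k,t}(y)}\int_X \overline{e_{j,s}(z)}\,e_{j,t}(z)\,d\mu(z),
\end{equation*}
and (up to the $\mu(X)$ normalization built into the inner product on $\CC(X)$) the integral equals $\langle e_{j,t}, e_{j,s}\rangle = \delta_{s,t}$ by orthonormality of the basis of $W_j$. The remaining sum collapses to $\sum_s e_{i,s}(x)\overline{e_{k,s}(y)} = E_{i,k}(x,y)$. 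The main obstacle is really the bookkeeping in part (1): lining up the left/right action conventions with the correct transpose and complex conjugation on the matrix entries $R(g)_{t,s}$. Once that pointwise identity for $e_{i,s}(gx)$ is in place, the computations for (2) and (3) are nothing more than applications of the orthonormality relations among the $\{e_{a,s}\}$.
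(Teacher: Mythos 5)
Your proof is correct and follows essentially the same route as the paper: in each part you derive and then exploit the pointwise identity $e_{i,s}(gx)=\sum_t \overline{R(g)_{s,t}}\,e_{i,t}(x)$, the orthonormality $\langle e_{k,t},e_{i,s}\rangle=\delta_{k,i}\delta_{t,s}$, and the reduction of (3) to a convolution identity, exactly as the paper does. You even flag the implicit $\mu(X)$ normalization discrepancy between the definitions of $T_K$ and $K*K'$, which the paper passes over silently.
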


\begin{proof} 

\begin{enumerate}
\item From the construction, we have
\begin{equation*}
e_{i,s}(gx)=\sum_{t=1}^d \overline{R_{s,t}(g)}e_{i,t}(x)
\end{equation*}
thus
\begin{align*}
E_{i,j}(gx,gy) &= \sum_{s=1}^de_{i,s}(gx)\overline{e_{j,s}(gy)}\\
&=\sum_{s=1}^d \sum_{k,l=1}^d \overline{R_{s,k}(g)}R_{s,l}(g)e_{i,k}(x)\overline{e_{j,l}(y)}\\
&=\sum_{k,l=1}^d\Big(\sum_{s=1}^d
\overline{R_{s,k}(g)}R_{s,l}(g)\Big)e_{i,k}(x)\overline{e_{j,l}(y)}\\
&=\sum_{k}^de_{i,k}(x)\overline{e_{j,k}(y)}=E_{i,j}(x,y)\\
\end{align*}
where the second last equality holds because $R(g)$ is a
unitary matrix.
\item We compute $T_{j,i}(e_{k,t})$:
\begin{align*}
T_{j,i}(e_{k,t})(x) &= \frac{1}{\mu(X)} \int_X\big(\sum_{s=1}^de_{j,s}(x)\overline{e_{i,s}(y)}\big)e_{k,t}(y)d\mu(y)\\
&=\frac{1}{\mu(X)}\sum_{s=1}^de_{j,s}(x)\int_X\overline{e_{i,s}(y)}e_{k,t}(y)d\mu(y)\\
&= \sum_{s=1}^de_{j,s}(x)\langle e_{k,t}, e_{i,s} \rangle\\
&= \sum_{s=1}^de_{j,s}(x)\delta_{k,i}\delta_{t,s}=\delta_{k,i}e_{j,t}(x).
\end{align*}
\item Similarly one computes that
\begin{equation*}
E_{i,j}*E_{l,k}=\delta_{j,l} E_{i,k}.
\end{equation*}
\end{enumerate}
\end{proof}

The $E_{i,j}(x,y)$ put together form a
matrix $E=E(x,y)$, that we call the zonal matrix associated to the
$G$-subspace $V$:
\begin{equation}\label{EE}
E(x,y):=\big( E_{i,j}(x,y) \big)_{1\leq i,j\leq m}.
\end{equation}
At this stage is is natural to discuss the dependence of this matrix 
on the various ingredients needed for its definition.
\begin{lemma}\label{E2} We have
\begin{enumerate}
\item $E(x,y)$ is unchanged if another orthonormal basis of $W_i$
  is chosen (i.e. if another unitary representative of the irreducible
  representation $R$ is chosen).
\item $E(x,y)$ is changed to $AE(x,y)A^*$ for some matrix $A\in \Gl(\C^m)$
  if another decomposition (not necessarily with orthogonal spaces) $V=W'_1\oplus\dots \oplus W'_m$ is chosen.
\end{enumerate}
\end{lemma}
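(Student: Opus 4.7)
For both parts the strategy is to use Schur's lemma to pin down the relationship between the old and new bases, and then expand the defining sum $E_{i,j}(x,y)=\sum_{s}e_{i,s}(x)\overline{e_{j,s}(y)}$ directly.

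For (1), a change of unitary representative means replacing $R(g)$ by $R'(g)=UR(g)U^{-1}$ for some unitary $U\in U(\C^d)$. The compatible new bases are $e'_{i,s}=\sum_{t}\overline{U_{s,t}}\,e_{i,t}$, one for each $W_i$ (the same $U$ for every $i$: any two choices differ by a map commuting with $R(g)$, hence a scalar by Schur, and absorbing this scalar into the overall unitary gives the common $U$). Plugging into the definition,
\[
E'_{i,j}(x,y)=\sum_{t,u}\Big(\sum_{s} U_{s,u}\overline{U_{s,t}}\Big)e_{i,t}(x)\overline{e_{j,u}(y)}=\sum_{t,u}(U^{*}U)_{u,t}\,e_{i,t}(x)\overline{e_{j,u}(y)},
\]
which collapses to $E_{i,j}(x,y)$ because $U^{*}U=I_d$.

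For (2), let $V=W'_1\oplus\cdots\oplus W'_m$ be a new decomposition with orthonormal bases $(e'_{i,s})_s$ of $W'_i$ in which $G$ still acts by $R(g)$. Writing $e'_{i,s}=\sum_{k,t}c^{(i,k)}_{t,s}\,e_{k,t}$ and equating $g\cdot e'_{i,s}=\sum_{s'}R_{s',s}(g)\,e'_{i,s'}$ with its expansion in the old basis yields, for every $i,k$,
\[
R(g)\,c^{(i,k)}=c^{(i,k)}\,R(g),\qquad g\in G.
\]
Since $R$ is irreducible, Lemma \ref{Schur} forces $c^{(i,k)}=A_{i,k}\,I_d$ for some scalar $A_{i,k}\in\C$, so $e'_{i,s}=\sum_k A_{i,k}\,e_{k,s}$. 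The change of basis matrix between the two bases of $V$ has block form $A\otimes I_d$; since this is invertible, $A=(A_{i,k})\in\Gl(\C^m)$. Substituting and using that the two sums over $s$ share the same index,
\[
E'_{i,j}(x,y)=\sum_{k,l}A_{i,k}\overline{A_{j,l}}\,E_{k,l}(x,y)=\sum_{k,l}A_{i,k}\,E_{k,l}(x,y)\,(A^{*})_{l,j}=(AEA^{*})_{i,j}.
\]

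The only substantive step is the Schur argument in (2) identifying the blocks $c^{(i,k)}$ as scalar matrices; everything else is straight bookkeeping. The one delicate point is the index convention: $A$ must be defined so that its $(i,k)$ entry is the coefficient of $e_{k,s}$ in $e'_{i,s}$ (new index on the left), which is what makes $A$ (rather than its transpose or conjugate) appear in the final formula.
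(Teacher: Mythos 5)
Your proof is correct and follows essentially the same approach as the paper: both parts boil down to Schur's lemma pinning down the change-of-basis matrix (a single unitary $U$ in part (1), a block-scalar matrix $A\otimes I_d$ in part (2)), followed by direct expansion of $E_{i,j}(x,y)=\sum_s e_{i,s}(x)\overline{e_{j,s}(y)}$. The only cosmetic differences are that in part (2) the paper obtains the relation $e'_{i,s}=\sum_j a_{j,i}e_{j,s}$ by invoking the isomorphism $\End_G(V)\simeq\C^{m\times m}$ already established in Theorem \ref{End} (applied to $\phi(e_{i,s})=e'_{i,s}$) rather than re-running Schur's lemma from scratch, and it uses the transposed index convention for $A$, landing on the equivalent form $E'=A^tE\overline{A}$.
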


\begin{proof}
\begin{enumerate}
\item Let $(e'_{i,1},\dots, e'_{i,d})$ be another orthonormal basis of
  $W_i$ and let  $U_i$ be unitary $d\times d$ matrices  such that 
\begin{equation*}
(e'_{i,1},\dots, e'_{i,d})=(e_{i,1},\dots, e_{i,d})U_i.
\end{equation*}
Since we want the representation $R$ to be realized by the same
matrices in the basis $(e'_{i,1},\dots, e'_{i,d})$ when $i$ varies, we
have $U_i=U_j=U$. Then, with obvious notations,
\begin{align*}
E'_{i,j}(x,y)=&(e'_{i,1}(x),\dots, e'_{i,d}(x))(e'_{i,1}(y),\dots, e'_{i,d}(y))^*\\
=&(e_{i,1}(x),\dots, e_{i,d}(x))UU^*(e_{i,1}(y),\dots, e_{i,d}(y))^*\\
=&(e_{i,1}(x),\dots, e_{i,d}(x))(e_{i,1}(y),\dots, e_{i,d}(y))^*\\
=&E_{i,j}(x,y).
\end{align*}

\item If $V=W_1\perp \dots \perp W_m=W'_1\perp \dots \perp W'_m$ with
  basis
$(e_{i,1},\dots,e_{i,d})$ of $W_i$ and $(e'_{i,1},\dots,e'_{i,d})$ of
  $W'_i$ 
in which the action of $G$ is by the same matrices $R(g)$, let
$\phi\in \End(V)$ be defined by $\phi(e_{i,s})=e'_{i,s}$. Clearly
$\phi$ commutes with the action of $G$; if $u_{j,i}$ is defined by
$u_{j,i}(e_{i,s})=e_{j,s}$ then we have seen that, for some matrix
$A=(a_{i,j})$,
$e'_{i,s}=\phi(e_{i,s})=\sum_{j=1}^m a_{j,i} e_{j,s}$. Moreover $A$ is
invertible. It is unitary if the spaces $W'_i$ are pairwise orthogonal. With the notations $E(x):=(e_{i,s}(x))$, we have
\begin{equation*}
E(x,y)=E(x)E(y)^*\text{ and } E'(x)=A^t E(x)
\end{equation*}
thus
\begin{equation*}
E'(x,y)=A^t E(x,y)\overline{A}.
\end{equation*}

\end{enumerate}
\end{proof}

Going back to $\phi\in \End_G(V)$, from Lemma \ref{E1} we can take
$u_{j,i}=T_{j,i}$ and we have the expression
\begin{equation*}
\phi=\sum_{i,j=1} ^m a_{j,i}T_{j,i} =T_{\langle A, \overline{E}\rangle}.
\end{equation*}
We take the following notation: the space of linear combinations of
elements of the form $f(x)\overline{g(y)}$ for $(f,g)\in V^2$ is
denoted $V^{(2)}$. We have proved the following:
\begin{proposition}\label{End2}
Let 
\begin{equation*}
\K_V:=\{K\in V^{(2)} : K(gx,gy)=K(x,y) \text{ for all }g\in G,
(x,y)\in X^2 \}.
\end{equation*}
The following are
isomorphisms of $\C$-algebras:
\begin{equation*}
\begin{array}{rlcrl}
\K_V & \to \End_G(V)&\qquad &\C^{m\times m} & \to \End_G(V)\\
K &\mapsto T_K &\qquad & A& \mapsto T_{\langle A, \overline{E}\rangle}.
\end{array}
\end{equation*}
Moreover, $\End_G(\CC(X))$ is commutative iff $\K$ is commutative iff
$m_k=0,1$ for all $k\geq 0$.
\end{proposition}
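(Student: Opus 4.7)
The strategy is to combine Theorem \ref{End} (which gives $\End_G(V)\cong\C^{m\times m}$ abstractly) with Lemma \ref{E1}, which realizes the abstract intertwiners $u_{j,i}$ concretely as the Hilbert-Schmidt operators $T_{E_{i,j}}$, and then to identify $\K_V$ with the span of the zonal kernels $E_{i,j}$.

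For the map $A \mapsto T_{\langle A, \overline{E}\rangle}$, I would first unfold the notation: $\langle A, \overline{E}\rangle = \sum_{i,j} a_{i,j} E_{i,j}$ by the definition of the Frobenius inner product, so by linearity and Lemma \ref{E1}(2) this operator acts on $V$ as a linear combination of the $u_{j,i}$, realizing the Theorem \ref{End} isomorphism up to a harmless relabelling of indices. To promote this linear bijection to an algebra isomorphism I would invoke $T_K \circ T_{K'} = T_{K*K'}$ together with the convolution identity $E_{i,j} * E_{l,k} = \delta_{j,l} E_{i,k}$ from the proof of Lemma \ref{E1}, which is exactly the matrix multiplication rule for elementary matrices; bilinear extension then yields $T_{\langle AA', \overline{E}\rangle} = T_{\langle A, \overline{E}\rangle} \circ T_{\langle A', \overline{E}\rangle}$.

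For $\K_V \to \End_G(V)$, the key step is to show that $\K_V$ is linearly spanned by the $E_{i,j}$. Any $K \in V^{(2)}$ expands uniquely in the tensor basis $\{e_{i,s}(x)\overline{e_{j,t}(y)}\}$, and averaging $K(gx,gy)$ over $G$ with respect to the normalized Haar measure collapses the sum to combinations of the $E_{i,j}$ by the Schur orthogonality of matrix coefficients (Theorem \ref{orthog rel}); if $K$ is already zonal then the average equals $K$, so $K \in \operatorname{span}\{E_{i,j}\}$. Since $K \mapsto T_K$ is manifestly a $\C$-algebra homomorphism (linearity is obvious, and $T_K \circ T_{K'} = T_{K*K'}$ has already been noted), and it factors through the isomorphism $\C^{m\times m} \to \End_G(V)$ just established, it too is an algebra isomorphism. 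I expect the main obstacle to be purely notational: one has to keep the conventions (row versus column, transpose versus conjugate) consistent between the two maps, since the identification $E_{i,j} \leftrightarrow u_{j,i}$ swaps indices.

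For the commutativity statement, the orthogonal direct sum decomposition \eqref{DEC} together with continuity of $G$-endomorphisms preserving each isotypic component yields $\End_G(\CC(X)) \cong \prod_k \End_G(\I_k) \cong \prod_k \C^{m_k \times m_k}$, and $\K$ decomposes compatibly as the corresponding product of the $\K_{\I_k}$. Such a product of matrix algebras is commutative precisely when each factor $\C^{m_k \times m_k}$ is, that is, precisely when $m_k \in \{0,1\}$ for every $k$; this establishes the three equivalences simultaneously.
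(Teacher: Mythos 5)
Your proposal is correct and follows essentially the same route as the paper: the paper's one-line proof defers entirely to the "previous discussion" (Theorem \ref{End}, the definition of $E(x,y)$, Lemma \ref{E1}, and the identity $\phi=\sum a_{j,i}T_{j,i}=T_{\langle A,\overline{E}\rangle}$), which is precisely the material you reassemble, and the commutativity argument via $\End_G(\CC(X))=\prod_k\End_G(\I_k)$ is the one the paper gives. Your explicit verification that $\K_V=\operatorname{span}\{E_{i,j}\}$ by $G$-averaging and Schur orthogonality is a detail the paper leaves implicit, but it is the natural way to complete that step and does not change the method.
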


\begin{proof} The isomorphisms are clear from previous discussion. For
  the last assertion, it is enough to point out that
\begin{equation*}
\End_G(\CC(X))=\prod_{k\geq 0 } \End_G(\I_k).
\end{equation*}
\end{proof}

\begin{remark}\label{Rem} Proposition \ref{End2} shows in particular that $\K_V$
  and $\End_G(V)$ have the same dimension. It is sometimes easy to
  calculate the dimension of $\K_V$; for example if $X$ is a finite
  set and $V=\CC(X)$, then $\dim(\K_V)$ is exactly equal to the number of orbits of $G$ acting on
  $X^2$. On the other hand, in this case, the dimension of $\End_G(V)$
  is
the sum of the squares of the multiplicities in $\CC(X)$ of the irreducible
representations of $G$. For the binary Hamming space
treated in \ref{Hamming 1}, the orbits of $G$ acting on $X^2$ are in one to one
correspondance with the values taken by the Hamming distance, i.e. there
are $(n+1)$ such orbits. Thus, once we have obtained the decomposition
$\CC(H_n)=P_0\perp\dots \perp P_n$, because this decomposition involves allready
$(n+1)$ subspaces, we can conclude readily that these subspaces are irreducible.
This reasoning applies also to the Johnson space \ref{Johnson 1} and to the more
general $q$-Hamming space \ref{qHamming 1}.
A variant of this method is as follows: if we suspect $V\subset
\CC(X)$ to be irreducible, then it is enough to prove that $\K_V$ has
dimension $1$. See in \ref{Johnson 2} for an illustration.
\end{remark}

\subsection{Examples: $G$-symmetric spaces.}

\begin{definition}
We say that $X$ is
$G$-symmetric if for all $(x,y)\in X^2$, there exists $g\in G$ such
that $gx=y$ and $gy=x$. In other words, $(x,y)$ and $(y,x)$ belong to
the same orbit of $G$ acting on $X^2$.
\end{definition}
A first consequence of Proposition \ref{End2} is that $G$-symmetric
spaces have multiplicity free decompositions. 

\begin{proposition}\label{commute}
If $X$ is $G$-symmetric then $m_k=0,1$ for all $k\geq 0$ and
$E_k(x,y)$ is real symmetric.
\end{proposition}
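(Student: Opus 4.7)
The plan is to exploit $G$-symmetry to show that every zonal kernel is symmetric in its two arguments, which will make the convolution algebra $\K$ commutative and force all multiplicities $m_k$ to be at most $1$ via Proposition \ref{End2}; then the reality of $E_k$ will follow from combining this symmetry with the Hermitian property that $E_k$ already possesses by construction.

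First I would observe that if $K \in \K$ and $X$ is $G$-symmetric, then for any $(x,y) \in X^2$ we may pick $g \in G$ with $gx = y$, $gy = x$, and the zonality $K(gx,gy) = K(x,y)$ immediately gives $K(y,x) = K(x,y)$. So every zonal kernel is a symmetric function on $X^2$.

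Second, I would use this symmetry to check commutativity of the convolution product. For $K, K' \in \K$,
\begin{equation*}
(K * K')(x,y) = \int_X K(x,z) K'(z,y)\, d\mu(z) = \int_X K(z,x) K'(y,z)\, d\mu(z) = (K' * K)(y,x),
\end{equation*}
where the second equality uses symmetry of $K$ and $K'$. Since $K' * K$ is again zonal, it too is symmetric, so $(K'*K)(y,x) = (K'*K)(x,y)$. Hence $\K$ is commutative, and by the last assertion of Proposition \ref{End2} we conclude $m_k \in \{0,1\}$ for all $k \geq 0$.

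Third, for the assertion that $E_k(x,y)$ is real symmetric: by Lemma \ref{E1}(1), $E_k$ is zonal, and since $m_k \leq 1$ it is a single function (the unique $1 \times 1$ entry of the zonal matrix \eqref{EE}). Directly from the defining formula $E_k(x,y) = \sum_{s=1}^{d_k} e_{k,s}(x)\overline{e_{k,s}(y)}$ one reads off that $E_k(x,y) = \overline{E_k(y,x)}$, i.e. $E_k$ is Hermitian as a kernel. But the first step of this proof applied to the zonal function $E_k$ gives $E_k(x,y) = E_k(y,x)$. Combining the two identities yields $E_k(x,y) = \overline{E_k(x,y)}$, so $E_k$ takes real values; together with the symmetry $E_k(x,y) = E_k(y,x)$ this is precisely the required real-symmetric property.

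The only subtle point is the appeal to Proposition \ref{End2}: one has to be comfortable that the algebra isomorphism there truly translates commutativity of $\K$ into the multiplicity-freeness statement $m_k \leq 1$, which is the content of its final clause and already handled in the preceding development. Once that is in hand, everything reduces to the elementary observation that $G$-symmetry identifies $(x,y)$ and $(y,x)$ in the same $G$-orbit, making every invariant kernel symmetric.
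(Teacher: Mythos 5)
Your proof is correct and follows essentially the same route as the paper: use $G$-symmetry to show that every zonal kernel is a symmetric function, deduce commutativity of $(\K,*)$ by the convolution computation, invoke Proposition \ref{End2} for multiplicity-freeness, and combine the Hermitian identity $E_k(x,y)=\overline{E_k(y,x)}$ with the symmetry $E_k(x,y)=E_k(y,x)$ to conclude $E_k$ is real-valued. The only difference is cosmetic: you spell out the final reality argument that the paper compresses into the one-line chain $\overline{E_k(x,y)}=E_k(x,y)=E_k(y,x)$.
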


\begin{proof} For all $K\in \K$, $K(x,y)=K(y,x)$. Thus $\K$ is
  commutative:
indeed,
\begin{align*}
(K'* K)(x,y) &=\frac{1}{\mu(X)}\int_X K'(x,z)K(z,y) d\mu(z) \\
&=\frac{1}{\mu(X)}\int_X K'(z,x)K(y,z) d\mu(z) \\
&=(K*K')(y,x)=(K*K')(x,y).
\end{align*}
Moreover $\overline{E_k(x,y)}=E_k(x,y)=E_k(y,x)$.
\end{proof}

\subsubsection{\bf $2$-point homogeneous spaces:}\label{2-homogeneous 1}
these spaces are prominent  examples of
$G$-symmetric spaces.
\begin{definition}
A metric spaces $(X,d)$  is said to be $2$-point homogeneous  for the
action of $G$ if $G$ is transitive on $X$, leaves the distance $d$
invariant, and if, for $(x,y)\in X^2$,
\begin{equation*}
\text{there exists }g\in G \text{ such that }(gx,gy)=(x',y')
\Longleftrightarrow d(x,y)=d(x',y').
\end{equation*}
\end{definition}
Examples of such spaces of interest in coding theory are numerous:
the Hamming and Johnson spaces,  endowed with the Hamming distance,
for the action of respectively $T\rtimes S_n$ and $S_n$;
the unit sphere $S^{n-1}$ for the angular distance $\theta(x,y)$ and
the action of the orthogonal group.
It is a classical result that, apart from $S^{n-1}$, the projective
spaces $\Pp^n(K)$ for $K=\R, \C, \mathbb{H}$, and $\Pp^2(\OO)$, are
  the only real compact $2$-point homogeneous spaces.

There are more examples of finite $2$-point homogeneous spaces, we can
mention among them the $q$-Johnson spaces. The $q$-Johnson space
$J_n^w(q)$ is the set of linear
subspaces of $\F_q^n$ of fixed dimension $w$, with the action of the
group
$\Gl(\F_q^n)$ and the distance $d(x,y)=\dim(x+y)-\dim(x\cap y)$.
We come back to this space in the next section.

\smallskip
There are other symmetric spaces occurring in coding theory: 

\subsubsection{\bf The Grassmann spaces:}\label{Grassmann 1} $X=\Gmn(K)$, $K=\R,\C$, i.e. the set of
  $m$-dimensional linear subspaces of $K^n$, with the
  homogeneous action of
  $G=O(\R^n)$ (respectively $U(\C^n)$). This space is
  $G$-symmetric but not $2$-point homogeneous (if $m\geq 2$). The orbits of $G$ acting on
  pairs
$(p,q)\in X^2$ are characterized by their principal angles \cite{G}. The
  principal angles of $(p,q)$ are $m$ angles $(\theta_1,\dots,
  \theta_m)\in [0,\pi/2]^m$ constructed as follows: one iteratively
  constructs an orthonormal basis $(e_1,\dots, e_m)$ of $p$ and 
an orthonormal basis $(f_1,\dots, f_m)$ of $q$ such that, for $1\leq
i\leq m$, 
\begin{equation*}
\begin{array}{lll}
\cos\theta_i&=\max \{|(e,f)|\ :\  &e\in p,\ f\in q,\\
&&(e,e)=(f,f)=1, \\
&& (e,e_j)=(f,f_j)=0 \text{ for }1\leq j\leq i-1\}\\
&=|(e_i,f_i)|&
\end{array}
\end{equation*}
The we have (see \cite{G}):
\begin{equation*}
\begin{array}{c}
\text{there exists }g\in G \text{ such that }(gp,gq)=(p',q')\\
\Longleftrightarrow\\
(\theta_1(p,q),\dots,\theta_m(p,q))=(\theta_1(p',q'),\dots,\theta_m(p',q')).
\end{array}
\end{equation*}

 \subsubsection{\bf The ordered Hamming space:}\label{ordered Hamming
   1} $X=(\F_2^r)^n$ (for the sake of simplicity we restrict here to
 the binary case). 
Let $x=(x_1,\dots,
  x_n)\in X$ with $x_i\in \F_2^r$. For $y\in \F_2^r$, the ordered weight of
  $y$, denoted $w_r(y)$, is the right most non zero coordinate of
  $y$. The ordered weight of $x\in X$ is $w_r(x):=\sum_{i=1}^n
  w_r(x_i)$ and the ordered distance of two elements $(x,y)\in X^2$ is
  $d_r(x,y)=w_r(x-y)$. Moreover we define the shape of $(x,y)$:
\begin{equation*}
\shape(x,y):=(e_0,e_1,\dots,e_r) \text{ where } 
\begin{cases}
1\leq i\leq r, e_i:=\card\{j\ :\ w_r(x_j)=i\}\\
e_0:=n-(e_1+\dots+e_r).
\end{cases}
\end{equation*}
Another expression of $w_r(x)$ is $w_r(x)=\sum_{i} i e_i$.

If $B$ is the group of upper triangular matrices in $\Gl(\F_2^r)$, and
$B_{\text{aff}}$ the group of affine transformations of $\F_2^r$ combining the
translations by elements of $\F_2^r$ with $B$, the group
$G:=B_{\text{aff}}^n\rtimes S_n$
acts transitively on $X$. Since $B$ acting on $\F_2^r$ leaves $w_r$
invariant, it is clear that the action of $G$ on $X$ leaves the shape $\shape(x,y)$
invariant. More precisely, 
the orbits of $B$ on $\F_2^r$ are the sets $\{y\in \F_2^r :
w_r(x)=i\}$ and, consequently, the orbits of $G$ acting on $X^2$ are characterized by
the so-called shape of $(x,y)$. Since obviously
$\shape(x,y)=\shape(y,x)$ it is a symmetric space.
This space shares many common features with the Grassmann spaces,
especially from the point of view of the linear programming method
(see \cite{B1}, \cite{Barg}, \cite{MS}).

\subsubsection{\bf The space $X=\Ga$ under the action of $G=\Ga\times
  \Ga$:} the action of $G$ is by $(\gamma,\gamma')x=\gamma
x\gamma'^{-1}$. Then two pairs $(x,y)$ and $(x', y')$ are in the same
orbit under the action of $G$ iff $xy^{-1}$ and $x'y'^{-1}$ are in
the same conjugacy class of $\Ga$. 
Obviously $(x,y)$ and $(y^{-1}, x^{-1})$ are in the same $G$-orbit. We
are not quite in the case of a $G$-symmetric space however  the proof of
the commutativity of $\K$ of Proposition \ref{commute} remains valid because the variable change
$x\to x^{-1}$ leaves the Haar measure invariant. 

\subsection{Positive definite functions and Bochner
  theorem}\label{subsection Bochner pdf}

\begin{definition}\label{pdf}
A positive definite continuous function on $X$ is a function $F\in
\CC(X^2)$ such that $F(x,y)=\overline{F(y,x)}$ and one of the
following equivalent properties hold:
\begin{enumerate}
\item For all $n$, for all $(x_1,\dots, x_n)\in X^n$, for all
  $(\alpha_1,\dots,\alpha_n)\in \C^n$,
\begin{equation*}
\sum_{i,j=1}^n \alpha_i F(x_i,x_j)\overline{\alpha_j} \geq 0.
\end{equation*}
\item For all $\alpha\in \CC(X)$,
\begin{equation*}
\int_{X^2} \alpha(x)F(x,y)\overline{\alpha(y)} d\mu(x,y) \geq 0.
\end{equation*}
\end{enumerate}
This property will be denoted $F\succeq 0$.
\end{definition}

The first property means in other words that, for all choice of a
finite set of points $(x_1,\dots, x_n)\in X^n$, the matrix
$(F(x_i,x_j))_{1\leq i,j\leq n}$ is hermitian positive semidefinite.
The equivalence of the two properties results from compactness of
$X$. Note that, if $X$ is finite, $F$ is positive definite iff the
matrix indexed by $X$, with coefficients $F(x,y)$, is positive semidefinite.

We want to characterize those functions which are  $G$-invariant. This
characterization is provided by Bochner in \cite{Bo} in the case when
the space $X$ is $G$-homogeneous.
It is clear that the construction of previous subsection provides
positive definite functions. Indeed, 

\begin{lemma}\label{AE}
if $A\succeq 0$, then $\langle A, \overline{E}
\rangle$
is a $G$-invariant positive definite function.
\end{lemma}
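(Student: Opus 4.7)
Write out the function explicitly: by the convention $\langle A,B\rangle = \Trace(AB^*)$ of the introduction and the definition of $E$, we have
\begin{equation*}
F(x,y) := \langle A,\overline{E}(x,y)\rangle = \sum_{i,j=1}^{m} A_{i,j}\,E_{i,j}(x,y) = \sum_{i,j=1}^{m}\sum_{s=1}^{d} A_{i,j}\,e_{i,s}(x)\,\overline{e_{j,s}(y)}.
\end{equation*}
The $G$-invariance of $F$ is then immediate from Lemma~\ref{E1}(1), which says each $E_{i,j}(x,y)$ is zonal, so any linear combination of the $E_{i,j}$ is zonal too. The Hermitian symmetry $F(x,y)=\overline{F(y,x)}$ follows from $A^*=A$ (i.e.\ $A_{i,j}=\overline{A_{j,i}}$) by swapping the summation indices $i$ and $j$ after conjugation.

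For positive definiteness the clean move is to factor the hermitian positive semidefinite matrix $A$ as $A=B^*B$ for some $B=(B_{k,i})\in\C^{m\times m}$, and introduce the auxiliary functions
\begin{equation*}
h_{k,s}(x) := \sum_{i=1}^{m} \overline{B_{k,i}}\,e_{i,s}(x).
\end{equation*}
Substituting $A_{i,j}=\sum_{k}\overline{B_{k,i}}B_{k,j}$ into the expression for $F$ and regrouping gives
\begin{equation*}
F(x,y) = \sum_{k=1}^{m}\sum_{s=1}^{d} h_{k,s}(x)\,\overline{h_{k,s}(y)},
\end{equation*}
which exhibits $F$ as a finite sum of rank-one positive definite kernels of the form $h(x)\overline{h(y)}$. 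Positive definiteness is then obvious from Definition~\ref{pdf}(1): for any $(x_1,\dots,x_n)\in X^n$ and $(\alpha_1,\dots,\alpha_n)\in\C^n$,
\begin{equation*}
\sum_{p,q}\alpha_p\overline{\alpha_q}\,F(x_p,x_q) = \sum_{k,s}\Bigl|\sum_{p}\alpha_p\,h_{k,s}(x_p)\Bigr|^{2}\geq 0.
\end{equation*}

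The only thing to verify along the way is the index manipulation going from $A=B^*B$ to the rank-one decomposition of $F$; this is a routine reshuffling of three nested sums, and I do not anticipate any real obstacle. (Alternatively, one can skip the factorization of $A$ and directly substitute $(x_1,\dots,x_n)$ and $(\alpha_1,\dots,\alpha_n)$ into $F$: setting $c_{i,s}:=\sum_{p}\alpha_p\,e_{i,s}(x_p)$ turns the quadratic form $\sum_{p,q}\alpha_p\overline{\alpha_q}F(x_p,x_q)$ into $\sum_{s}\sum_{i,j}A_{i,j}c_{i,s}\overline{c_{j,s}}$, which is a sum of quadratic forms in $A$ and hence nonnegative.) Both arguments use only $A\succeq 0$, so the lemma follows.
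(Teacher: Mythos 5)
Your proof is correct and is essentially the same argument as the paper's, just packaged slightly differently. The paper verifies positive definiteness via property (2) of Definition~\ref{pdf}: it plugs $F=\langle A,\overline{E}\rangle$ into the double integral $\int_{X^2}\alpha(x)F(x,y)\overline{\alpha(y)}\,d\mu(x,y)$, regroups the sums, and lands on $\sum_{s}\sum_{i,j}\alpha_{i,s}A_{i,j}\overline{\alpha_{j,s}}\geq 0$ with $\alpha_{i,s}=\int_X\alpha(x)e_{i,s}(x)\,d\mu(x)$. Your parenthetical "alternative" argument with $c_{i,s}=\sum_p\alpha_p e_{i,s}(x_p)$ is exactly this computation transposed to property (1), and your main argument via the factorization $A=B^*B$ and the rank-one kernels $h_{k,s}(x)\overline{h_{k,s}(y)}$ is the same quadratic-form calculation done one step earlier, at the level of $F$ itself rather than at the level of its quadratic form. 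You additionally spell out the $G$-invariance (from Lemma~\ref{E1}(1)) and the Hermitian symmetry $F(x,y)=\overline{F(y,x)}$, which the paper leaves implicit; that is welcome since Definition~\ref{pdf} does require the symmetry. No gaps.
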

\begin{proof} Let $\alpha(x)\in \CC(X)$. We compute
\begin{align*}
\int_{X^2} \alpha(x)\langle A, \overline{E}
\rangle \overline{\alpha(y)} d\mu(x,y) 
&=
\int_{X^2} \sum_{i,j=1}^m A_{i,j}
\alpha(x)E_{i,j}(x,y)\overline{\alpha(y)} d\mu(x,y) \\
&=
\sum_{i,j=1}^m A_{i,j} \int_{X^2}  
\alpha(x)E_{i,j}(x,y)\overline{\alpha(y)} d\mu(x,y) \\
&=
\sum_{i,j=1}^m \sum_{s=1}^d A_{i,j} \int_{X^2}  
\alpha(x)e_{i,s}(x)\overline{e_{j,s}(y)}\overline{\alpha(y)} d\mu(x,y) \\
&=
\sum_{i,j=1}^m \sum_{s=1}^d A_{i,j} \alpha_{i,s}\overline{\alpha_{j,s}}\\
&=
\sum_{s=1}^d \sum_{i,j=1}^m \alpha_{i,s} A_{i,j}
\overline{\alpha_{j,s}}\geq 0\\
\end{align*}
where $\alpha_{i,s}:=\int_X \alpha(x)e_{i,s}(x)d\mu(x)$.
\end{proof}

\begin{remark}\label{remark} The following properties are equivalent, for a $m
  \times m$ matrix  function $E(x,y)$:
\begin{enumerate}
\item For all $A\succeq 0$, $\langle A, \overline{E(x,y)} \rangle
  \succeq 0$
\item For all $(x_1,\dots, x_n)\in X^n$, $(\alpha_1,\dots,\alpha_n)\in
  \C^n$,
$\sum_{i,j} \alpha_i E(x_i,x_j)\overline{\alpha_j} \succeq 0$.
\end{enumerate}
The proof is left to the reader as an exercise (hint: use the fact that the cone of
positive semidefinite matrices is self dual).
\end{remark}

To start with, we extend the notations of the previous
subsection. We 
define matrices $E_k=E_k(x,y)$ associated to each isotypic component
$\I_k$, of size $m_k\times m_k$ (thus possibly of infinite size)
with coefficients $E_{k,i,j}(x,y)$ defined by:
\begin{equation*}
E_{k,i,j}(x,y):=\sum_{s=1}^{d_k} e_{k,i,s}(x)\overline{e_{k,j,s}(y)}.
\end{equation*}
If $F_k=(f_{k,i,j})_{1\leq i,j\leq m_k}$ is hermitian, and if
$\sum_{i,j}|f_{k,i,j}|^2<+\infty$, the sum
\begin{equation*}
\langle F_k, \overline{E_k} \rangle:=\sum_{i,j} f_{k,i,j}E_{k,i,j}
\end{equation*}
is $L^2$-convergent since the elements $e_{k,i,s}(x)\overline{e_{l,j,t}(y)}$ form a complete
system of orthonormal elements of $\CC(X^2)$. We say $F_k$ is
positive semidefinite ($F_k\succeq 0$) if $\sum_{i,j} \overline{\lambda_i} f_{k,i,j}
\lambda_j \geq 0$ for all $(\lambda_i)_{1\leq i\leq m_k}$ such that 
$\sum |\lambda_i|^2<+\infty$. Then, with the same proof as the one of
Lemma \ref{AE}, the function $\langle F_k, \overline{E_k} \rangle$ is
positive definite if $F_k\succeq 0$.
The following theorem provides a converse statement (see \cite{Bo}).

\begin{theorem}\label{Bochner pdf} $F\in\CC(X^2)$
is a $G$-invariant positive definite function if and
  only if 
\begin{equation}\label{Bochner pdf expression}
F(x,y)=\sum_{k\geq 0} \langle F_k, \overline{E_k(x,y)} \rangle
\end{equation}
where, for all $k\geq 0$, 
\begin{equation*}
F_k=\frac{1}{d_k\mu(X^2)}\int_{X^2} F(x,y)\overline{E_k(x,y)}d\mu(x,y)
\succeq 0, \end{equation*}
and the sum converges to $F$ for the $L^2$ topology. 
If moreover $G$ acts homogeneously on $X$, the sum 
\eqref{Bochner pdf expression} itself converges uniformly.

\end{theorem}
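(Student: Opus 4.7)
The ``if'' direction extends Lemma \ref{AE}: for each finite $N$ the partial sum $\sum_{k\leq N}\langle F_k,\overline{E_k}\rangle$ is positive definite by that lemma applied blockwise, and positive definiteness is preserved under $L^2$ and uniform limits. The substantive content is the converse, which I would establish in three steps.

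\textbf{Step 1 (Fourier expansion via $G$-invariance).} The functions $\{e_{k,i,s}(x)\overline{e_{l,j,t}(y)}\}$ form a complete orthonormal Hilbert basis of $L^2(X^2)$, so $F$ admits an $L^2$ expansion
$$F(x,y)=\sum_{k,l,i,j,s,t} c^{l,j,t}_{k,i,s}\, e_{k,i,s}(x)\overline{e_{l,j,t}(y)}.$$
Averaging the identity $F(x,y)=F(gx,gy)$ over $g\in G$, using the transformation law $e_{k,i,s}(gx)=\sum_{s'}\overline{R_{k,s,s'}(g)}\,e_{k,i,s'}(x)$ and the Schur orthogonality relations of Theorem \ref{orthog rel}, kills all terms except those with $k=l$, $s=t$, $s'=t'$, and replaces each surviving block by $1/d_k$ times $\sum_s c^{k,j,s}_{k,i,s}$. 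Regrouping produces
$$F(x,y)=\sum_{k\geq 0}\sum_{i,j}F_{k,i,j}\,E_{k,i,j}(x,y)=\sum_{k\geq 0}\langle F_k,\overline{E_k(x,y)}\rangle,$$
with $F_{k,i,j}=\frac{1}{d_k}\sum_s c^{k,j,s}_{k,i,s}$. Plugging in the Fourier formula for $c^{k,j,s}_{k,i,s}$ and recognising the sum over $s$ as $\overline{E_{k,i,j}(x,y)}$ yields the claimed integral expression $F_k=\frac{1}{d_k\,\mu(X^2)}\int_{X^2} F\,\overline{E_k}\,d\mu$.

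\textbf{Step 2 (Positivity).} For any finitely supported sequence $(\lambda_i)$, set $\phi_s:=\overline{\sum_j\lambda_j\,e_{k,j,s}}\in\CC(X)$. A direct computation gives
$$\sum_{i,j}\overline{\lambda_i}\,F_{k,i,j}\,\lambda_j=\frac{1}{d_k\,\mu(X)^2}\sum_s \int_{X^2} F(x,y)\,\phi_s(x)\,\overline{\phi_s(y)}\,d\mu(x,y),$$
and each integral is $\geq 0$ by the positive-definiteness of $F$ (Definition \ref{pdf}(2)); density then extends $F_k\succeq 0$ from finitely supported to square-summable vectors.

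\textbf{Step 3 (Uniform convergence in the homogeneous case).} When $G$ acts homogeneously on $X$, fix a base point $x_0$ with stabilizer $H:=\Stab(x_0)$; by $G$-invariance of $F$, the continuous function $\phi_F(g):=F(x_0,gx_0)$ on $G$ is bi-$H$-invariant and determines $F$ completely via the identification $X^2\simeq(G/H)^2$. Peter--Weyl (Theorem \ref{PWT}) expands $\phi_F$ uniformly into matrix coefficients of the $R_k$; projecting this expansion onto the bi-$H$-invariant part of each $R_k\otimes R_k^*$ and pulling it back to $X^2$ gives \eqref{Bochner pdf expression} with uniform convergence. I expect this last step to be the main obstacle: matching the Peter--Weyl series for $\phi_F$ with the zonal matrices $E_k$ requires carefully identifying bi-$H$-invariant matrix coefficients on $G$ with the entries $E_{k,i,j}$ on $X\times X$, whereas Steps 1 and 2 are routine Fourier analysis combined with Schur orthogonality.
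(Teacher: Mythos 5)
Your Steps 1 and 2 reproduce the paper's proof almost verbatim: the paper also expands $F$ in the orthonormal basis $\{e_{k,i,s}(x)\overline{e_{l,j,t}(y)}\}$, averages the $G$-invariance relation over $G$, and applies the Schur orthogonality relations of Theorem \ref{orthog rel} to kill the cross terms and force $f_{k,i,s,l,j,t}=\delta_{k,l}\delta_{s,t}f_{k,i,j}$; positivity of $F_k$ is then obtained by plugging test functions of the form $\alpha=\sum_i\alpha_i\overline{e_{k,i,s}}$ into Definition \ref{pdf}(2), exactly as in your Step 2 (the paper fixes a single $s$ rather than averaging over $s$, but this is cosmetic). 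For Step 3 the paper simply cites Bochner's original article \cite{Bo} for the uniform convergence in the homogeneous case, whereas you sketch a Peter--Weyl argument; your sketch is in the right spirit but, as you yourself flag, the identification of the bi-$H$-invariant blocks of $R_k\otimes R_k^*$ with the zonal matrix entries $E_{k,i,j}$ is the nontrivial part and is left unverified, so this step does not rise to the level of a complete proof (neither does the paper's, which defers to \cite{Bo}).
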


\begin{proof}
The elements $e_{k,i,s}(x)\overline{e_{l,j,t}(y)}$ form a complete
system of orthonormal elements of $\CC(X^2)$. Hence $F$ has a
decomposition 
\begin{equation*}
F(x,y)=\sum_{k,i,s,l,j,t} f_{k,i,s,l,j,t}e_{k,i,s}(x)\overline{e_{l,j,t}(y)}
\end{equation*}
where the convergence of the sum is $L^2$.
The condition $F(gx,gy)=F(x,y)$ translates to:
\begin{equation*}
f_{k,i,u,l,j,v}=\sum_{s,t} f_{k,i,s,l,j,t} R_{k,u,s}(g)\overline{R_{l,v,t}(g)}.
\end{equation*}
Integrating on $g\in G$ and applying the orthogonality relations of
Theorem \ref{orthog rel} shows that $f_{k,i,u,l,j,v}=0$ if $k\neq l$ or
$u\neq v$. Moreover it shows that $f_{k,i,u,k,j,u}$ does not depend
on $u$.  The resulting expression of $F$ reads:
\begin{equation*}
F(x,y)=\sum_{k\geq 0}\Big(\sum_{i,j} f_{k,i,j} E_{k,i,j}(x,y)\Big)
\end{equation*}
and 
\begin{equation*}
d_kf_{k,i,j}= \frac{1}{\mu(X^2)}\int_{X^2}
F(x,y)\overline{E_{k,i,j}(x,y)}d\mu(x,y),
\end{equation*}
which is the wanted expression, with $F_k:=(f_{k,i,j})_{1\leq i,j\leq m_k}$.

Now we show that $F_k\succeq 0$. Let, for $k,s$ fixed,
$\alpha(x)=\sum_{i} \alpha_i \overline{e_{k,i,s}(x)}$, with $\sum_i
|\alpha_i|^2<+\infty$. By density,  property (2) of Definition
\ref{pdf} holds for $\alpha\in L^2(X)$. We compute like in the proof
of Lemma \ref{AE}
\begin{equation*}
\int_{X^2} \alpha(x)F(x,y)\overline{\alpha(y)}
d\mu(x,y)=\sum_{i,j=1}^{m_k} \alpha_i f_{k,i,j}\overline{\alpha_j}
\end{equation*}
thus $F_k\succeq 0$.

In the case of $X$ being $G$-homogeneous, the uniform convergence of
the sum in \eqref{Bochner pdf expression} is proved in \cite{Bo}.

\end{proof}

In order to reduce linear programs involving $G$-invariant positive
definite functions to finite dimensional semidefinite programs, we
need to be able to approximate such functions uniformly  with finite sums of the
type \eqref{Bochner pdf expression}, in other words by functions built
form finite dimensional subspaces of $\CC(X)$. A necessary condition
is thus that all continuous functions on $X$ are uniformly
approximated by elements of some sequence of finite dimensional subspaces of
$\CC(X)$. Such subspaces are usually provided by the polynomial
functions of bounded degree, when it makes sense. More generally, let
us assume that there exists a sequence $(V_d)_{d\geq 0}$ of finite dimensional
$G$-subspaces of $\CC(X)$ such that $V_d\subset V_{d+1}$, and
$\cup_{d\geq 0} V_d$ is dense in $\CC(X)$ for the topology of uniform
convergence. For example, Peter-Weyl theorem provides such subspaces
when $X$ is $\Gamma$-homogeneous, for a compact group $\Gamma$
containing $G$. Then we have the following result:

\begin{theorem}\label{th uniform approx}
Under the above assumptions, if moreover $X$ is homogeneous under a
larger compact group $\Gamma$, and
if the irreducible  subspaces $H_{k,i}$ are chosen so that 
$H_{k,i}\subset V_d$ for all $1\leq i\leq m_{d,k}$ where $m_{d,k}$ is
the multiplicity of $R_k$ in $V_d$, then a $G$-invariant positive
definite function $F\in \CC(X^2)$
is the uniform limit of a sequence of positive definite functions
$F_d\in V_d\otimes V_d$ thus of the form
\begin{equation}\label{finite pdf}
F_d(x,y)=\sum_{k\geq 0} \langle F_{d,k}, \overline{E_k(x,y)} \rangle
\end{equation}
where $F_{d,k}$ is a matrix of size  $m_{d,k}$ (and thus the sum has a
finite number of non zero terms).
\end{theorem}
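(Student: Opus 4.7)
My plan is to construct $F_d$ by truncating the Bochner expansion of $F$ to the irreducible copies sitting inside $V_d$, which gives the structural properties immediately, and then to handle the uniform convergence (the main analytic obstacle) by means of an approximate identity argument on the larger compact group $\Gamma$.

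\textbf{Construction and structural properties.} Theorem \ref{Bochner pdf} yields
\[
F(x,y)=\sum_{k\ge 0}\langle F_k,\overline{E_k(x,y)}\rangle
\]
with each $F_k\succeq 0$ and convergence in $L^2(X^2)$. Since $V_d$ is finite dimensional, the multiplicities $m_{d,k}$ are all finite and vanish for all but finitely many $k$. Define
\[
F_d(x,y):=\sum_{k\ge 0}\langle F_{d,k},\overline{E^{(d)}_k(x,y)}\rangle,
\]
where $F_{d,k}$ is the $m_{d,k}\times m_{d,k}$ principal submatrix of $F_k$ and $E^{(d)}_k$ is the matrix with entries $E_{k,i,j}$ for $1\le i,j\le m_{d,k}$. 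The hypothesis $H_{k,i}\subset V_d$ for $i\le m_{d,k}$ forces each $E_{k,i,j}$ in the truncated sum to lie in $H_{k,i}\otimes\overline{H_{k,j}}\subset V_d\otimes V_d$, so $F_d\in V_d\otimes V_d$. The $G$-invariance of $F_d$ follows from Lemma \ref{E1}. Positive definiteness follows from $F_{d,k}\succeq 0$ (a principal submatrix of a positive semidefinite matrix is positive semidefinite) together with the infinite-matrix version of Lemma \ref{AE} used in the proof of Theorem \ref{Bochner pdf}; the Bochner form \eqref{finite pdf} is built in.

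\textbf{Uniform convergence, the main obstacle.} Since $X$ is not assumed $G$-homogeneous, Theorem \ref{Bochner pdf} gives only $L^2$-convergence, so the naive truncations above converge to $F$ only in $L^2$. Here the $\Gamma$-homogeneity enters. Pull $F$ back to a continuous $G$-invariant, right $(\Gamma_{x_0}\times\Gamma_{x_0})$-invariant positive definite kernel on $\Gamma\times\Gamma$, and smooth it by double convolution with a symmetric ($\rho_n(\gamma)=\rho_n(\gamma^{-1})$), $\Gamma_{x_0}$-biinvariant, $G$-conjugation-invariant continuous approximate identity $\rho_n$ on $\Gamma$:
\[
F_n(x,y):=\int_\Gamma\!\!\int_\Gamma\rho_n(\gamma_1)\rho_n(\gamma_2)\,F(\gamma_1^{-1}x,\gamma_2^{-1}y)\,d\gamma_1\,d\gamma_2.
\]
The factorization $T_{F_n}=C_{\rho_n}T_FC_{\rho_n}$ of the associated operator, with $C_{\rho_n}$ self-adjoint (from the symmetry of $\rho_n$), forces $F_n\succeq 0$; the invariance properties of $\rho_n$ together with $G$-invariance of $F$ yield $G$-invariance of $F_n$; uniform continuity of $F$ on the compact space $X^2$ and concentration of $\rho_n$ at the identity give $F_n\to F$ uniformly. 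If $\rho_n$ is chosen with Peter-Weyl support confined to the $\Gamma$-irreducibles whose restrictions to $X$ lie in $V_{d(n)}$ for some $d(n)\to\infty$, then convolution in either variable lands functions in $V_{d(n)}$, so $F_n\in V_{d(n)}\otimes V_{d(n)}$. Expressing this $G$-invariant positive definite kernel in the form \eqref{finite pdf} recovers the required matrices $F_{d(n),k}\succeq 0$ of size $m_{d(n),k}$.

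\textbf{Main technical point.} The substantive work is the construction of the approximate identity $(\rho_n)$ satisfying the three invariance conditions simultaneously with finite Peter-Weyl support. A classical Fej\'er-type construction on the compact group $\Gamma$ works: start from a positive continuous bump at the identity, replace it by $\phi\ast\phi^\vee$ to ensure the right symmetry, average over $\Gamma_{x_0}\times\Gamma_{x_0}$ and over $G$-conjugation to enforce the invariances, and truncate the Peter-Weyl expansion to sit inside the subspace corresponding to $V_{d(n)}$. The uniform density of $\cup_d V_d$ in $\CC(X)$ is what guarantees that the truncation does not destroy the approximate identity property, closing the argument.
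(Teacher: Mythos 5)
Your overall strategy --- pull $F$ back to $\Gamma$, smooth it by a double convolution with a symmetric kernel built from a bump function, use the self-adjoint factorization $T_{F_n}=C_{\rho_n}T_F C_{\rho_n}$ to preserve positive definiteness, and arrange the smoothing kernel to land in $V_d\otimes V_d$ --- is the same as the paper's. The paper's kernel $K((x,y),(z,t))=K_0(x,z)\overline{K_0(y,t)}$, with $K_0$ derived from a $\Gamma_0$-bi-invariant bump $\phi_0'$ on $\Gamma$, is precisely your double convolution in disguise, and both proofs rest on the observation that kernels of this ``square'' form send positive definite functions to positive definite functions while $\|T_{K}(F)-F\|_\infty$ can be made small by concentrating the bump at the identity.

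The genuine gap is in your ``main technical point.'' You need $\rho_n$ to be simultaneously an approximate identity, bi-invariant/symmetric/$G$-conjugation invariant, \emph{and} of finite Peter--Weyl support matched to $V_{d(n)}$. Your justification --- ``the uniform density of $\cup_d V_d$ in $\CC(X)$ is what guarantees that the truncation does not destroy the approximate identity property'' --- does not hold. Uniform density means that \emph{some} element of $V_d$ uniformly approximates a given continuous function; it says nothing about the Peter--Weyl truncation (an orthogonal $L^2$-projection), which in general does not converge uniformly, need not stay nonnegative, and need not concentrate near the identity. (The classical Fej\'er fix, $\phi\ast\phi^\vee$, produces nonnegativity but \emph{doubles} rather than truncates the Peter--Weyl support, so it does not by itself solve the problem of confining the support to a prescribed finite set of $\Gamma$-irreducibles.) The paper sidesteps this by reversing the order of operations: first build $K_0$ from $\phi_0$ with \emph{no} finite-support constraint, so that $\|T_K(F)-F\|_\infty<\epsilon$ is genuinely achieved; then invoke uniform density to pick $L_0\in V_d\otimes V_d$ with $\|L_0-K_0\|_\infty$ small (averaging $L_0$ over $G$ to restore $G$-invariance), and convolve with $L((x,y),(z,t)):=L_0(x,z)\overline{L_0(y,t)}$. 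Crucially, $L_0$ need not itself be an approximate identity --- it only needs to be uniformly close to one --- yet the ``square'' form still forces $T_L(F)\succeq 0$, and $T_L(F)\in V_d\otimes V_d$. Reorganizing your argument this way replaces the unproved truncation claim with a one-line appeal to the density hypothesis, and closes the gap.
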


\begin{proof}
We proceed like in the proof of Peter Weyl theorem. Compact 
self-adjoint Hilbert-Schmidt
operators on $\CC(X^2)$ are of the form
\begin{equation*}
T_K(F)(x,y)=\int_{X^2} K((x,y),(z,t))F(z,t)d\mu(z,t).
\end{equation*}
We start to construct $K$ such $T_K(F)\succeq 0$ and $\| T_K(F)-F\|_{\infty}$ is
arbitrary small. The first condition is
fulfilled 
if $K$ can be expressed in the form $K((x,y),(z,t))=K_0(x,z)\overline{K_0(y,t)}$ where
$K_0(x,z)=\overline{K_0(z,x)}$. 
We take $\phi_0$ a
continuous function  on $\Gamma$; if
$\phi_0'$
denotes the left and right average of $\phi_0$ over $\Gamma_0$ (where
$X=\Gamma/\Gamma_0$), we take
$K_0(x,y)=\phi_0'(\gamma^{-1}\delta)$
for any $\gamma\in x$, $\delta\in y$).
Then with a suitable choice of $\phi_0$, $\| T_K(F)-F\|_{\infty}\leq
\epsilon$
(thanks to uniform continuity of $F$, it is enough that $\phi_0$ has
support contained in some prescribed open neighborhood of $1$, takes
values between $0$ and $1$, satisfies
$\phi_0(\gamma)=\phi_0(\gamma^{-1})$
and $\int_{\Gamma} \phi_0=|\Gamma_0|$). Moreover, $K_0$ is $\Gamma$-invariant.

We
can find $d\geq 0$ and $L_0(x,y)\in V_d\otimes V_d$ such that
$L_0(x,z)=\overline{L_0(z,x)}$ and  $\|L_0-K_0\|_{\infty}$
is arbitrary small. Replacing $L_0$ by its average on $G$ will not
change these three properties of $L_0$. Then, if
$L((x,y),(z,t)):=L_0(x,z)\overline{L_0(y,t)}$,
$T_L(F)$ comes arbitrary close to $T_K(F)$ for $\|\,\|_{\infty}$ and $T_L(F)\in V_d\otimes V_d$.
Now, $T_L(F)\succeq 0$, is invariant under $G$  and belongs to the finite dimensional space
$V_d\otimes V_d$ thus it has the announced form from Theorem
\ref{Bochner pdf}.

\end{proof}

Now the main deal is to compute explicitly the matrices $E_k(x,y)$ for
a given space $X$. The next section gives explicit examples of such
computation.

\section{Explicit computations of the matrices $E_k(x,y)$}

We keep the same notations as in previous section. Since the matrices
$E_k(x,y)$ are $G$-invariant, their coefficients are functions of the
orbits of $G$ acting on $X^2$. So the first task is to describe
these orbits. Let us assume that these orbits are parametrized by some
variables
$u=(u_i)$. Then we seek for explicit expressions of the form
\begin{equation*}
E_k(x,y)=Y_k(u(x,y)).
\end{equation*}

The measure $\mu$ induces a measure on the variables
that describe these orbits, for which the coefficients of $E_k$ are
pairwise orthogonal. This property of orthogonality turns to be very
useful, if not enough, to calculate the matrices $E_k$.

The easiest case is when the space $X$ is $2$-point
homogeneous for the action of $G$, because in this case the orbits of pairs
are parametrized by a single variable $t:=d(x,y)$. Moreover we have
already seen that in this case, the decomposition of $\CC(X)$ is multiplicity free
so the matrices
$E_k(x,y)$ have a single coefficient. 

\subsection{$2$-point homogeneous spaces.}

We summarize the results we have obtained so far:
\begin{equation*}
\CC(X)=\oplus_{k\geq 0} H_k
\end{equation*}
where $H_k$ are pairwise orthogonal $G$-irreducible subspaces; to each $H_k$ is associated a
continuous function $P_k(t)$ such that $E_k(x,y)=P_k(d(x,y))$ and 
\begin{equation*}
F\succeq 0 \Longleftrightarrow F=\sum_{k\geq 0} f_k P_k(d(x,y)) \text{
  with } f_k\geq 0.
\end{equation*}
$P_k(t)$ is called the zonal function associated to $H_k$. 
Since the subspaces $H_k$ are pairwise orthogonal, the functions
$P_k(t)$ are pairwise orthogonal for the induced measure. This property of
orthogonality is in general enough to determine them in a unique way.
We can also notice here that $P_k(0)=d_k$. This value is obtained with
the integration on $X$ of the formula $P_k(0)=\sum_{s=1}^{d_k} e_{k,1,s}(x)\overline{e_{k,1,s}(x)}$.

\subsection{$X=\{1,\dots,q\}$ under the action of $S_q$}\label{Sq}
This is a very easy case, which will play a role in the study of the
$q$-Hamming space. Since the constant
function $\1$ is $S_q$-invariant, we have the $S_q$ decomposition
$\CC(X)=\C\1\perp L$. Obviously, the action of $S_q$ on $X^2$ has
two orbits: the set of pairs  $(i,i)$, and the set of pairs
$(i,j)$ for $i\neq j$. Thus, from Proposition \ref{End2} and
Remark \ref{Rem}, $L$ is irreducible.
We let $z_0:=\1$ and choose  an orthonormal basis $(z_1,\dots,z_{q-1})$
of $L$. We want to compute the zonal function $E_L$ associated to $L$. We
have by definition $E_L(x,y)=\sum_{i=1}^{q-1}
z_i(x)\overline{z_i(y)}$ and $E_L$ takes only two different values:
one for $x=y$ and one for $x\neq y$. We have $E_L(0,0)=\dim(L)=q-1$ and
we can compute $E_L(0,1)$ easily using the fact that $E_L(0,y)$ is
orthogonal to $z_0$ thus $\sum_{y=1}^q E_L(0,y)=0=E_L(0,0)+(q-1)E_L(0,1)$.
Thus $E_L(0,1)=-1$.

\subsection{The $q$-Hamming space}\label{qHamming 1}
In the binary case we have already calculated the
functions $P_k(t)$ in \ref{Hamming 1}. Indeed, the irreducible subspaces $P_k$
afford the orthonormal basis $\{\chi_z, wt(z)=k\}$. So,
\begin{equation*}
E_k(x,y)=\sum_{wt(z)=k} \chi_z(x)\chi_z(y)=\sum_{wt(z)=k} (-1)^{z\cdot
  (x+y)}=K_k(d_H(x,y))
\end{equation*}
from \eqref{Krawtchouk}.
Now we treat the more general $q$-Hamming space. 
This is the space $H_{n,q}=F^n$ where $F$ is a finite set with $q$
elements
denoted $F=\{a_0,a_1,\dots,a_{q-1}\}$. The semidirect product $G=S_q^n
\rtimes S_n$ acts on $H_{n,q}$ and leaves the Hamming distance
invariant.  Here the permutation group $S_q$ acts on $F$ by
$\tau a_i=a_{\tau(i)}$ while the permutation group $S_n$ acts on
$H_{n,q}$ by $\sigma(x_1,\dots,x_n)=(x_{\sigma^{-1}(1)},\dots,
x_{\sigma^{-1}(n)})$. Moreover $G$ acts on $H_{n,q}$ $2$-point
homogeneously. 
The action of $S_q$ on $\CC(F)$ is studied in \ref{Sq} and we take the
same notations. 
We define $\phi=(\phi_1,\dots, \phi_n)\in \CC(H_{n,q})$ where
$\phi_i\in \{z_0,z_1,\dots,z_{q-1}\}$ by:
$\phi(x)=\prod_{i=1}^n \phi_i(x_i)$. These elements $\phi$  form an orthonormal system:
it is easy to see that
\begin{equation*}
\langle \phi,\psi\rangle=\prod_{i=1}^n \langle \phi_i,\psi_i\rangle.
\end{equation*}
We define the weight of $\phi$ by: $wt(\phi):= |\{1\leq i\leq n :
\phi_i\neq z_0\}|$. For $0\leq k\leq n$, let $P_k$ be the
subspace generated by the set of $\phi$ with $wt(\phi)=k$. The
dimension of $P_k$ is the number of such $\phi$, which is equal to
$(q-1)^k\binom{n}{k}$
and we have the decomposition 
\begin{equation}\label{dec Hammingq}
\CC(H_n)=P_0\perp P_1\perp\dots \perp P_n.
\end{equation}
An element $\tau\in S_q$ act trivially on $z_0$ and sends $z_i$
for $i\neq 0$ to a linear combination of
$z_1,\dots,z_{q-1}$. Thus for all $g\in G$, $g\phi$ is a linear combination of $\psi$'s with the
same weight as $\phi$ and  $G$ stabilizes $P_k$. The action
of $G$ on pairs of elements of $H_{n,q}$ has exactly $(n+1)$ orbits 
 corresponding to the $(n+1)$ values $0,1,\dots,n$ that the Hamming
 distance takes thus we can conclude that $P_k$ is irreducible from Proposition \ref{End2}.
Now we compute the zonal function $E_k(x,y)$ attached to $P_k$. By
definition we have
\begin{equation*}
E_k(x,y)=\sum_{\phi, wt(\phi)=k} \phi(x)\overline{\phi(y)}
\end{equation*}
and we want to calculate $P_k$ such that $P_k(t)=E_k(x,y)$ for any
$(x,y)$ with $d(x,y)=t$. We set $x=(a_1,\dots,a_1,a_0,\dots,a_0)$
where $t$ coordinates of $x$ are equal to $a_1$  and
$y=(a_0,\dots,a_0)$. For all $\phi$, we let $i:=|\{ j: 1\leq j\leq n : x_j=a_1 \text{
    and }\phi_j\neq z_0\}$ and reorder the set of $\phi\in P_k$
  according to $i$.
\begin{align*}
P_k(t)&=\sum_{i=0}^ k \binom{t}{i}\binom{n-t}{k-i} \sum_{j_1,\dots,
  j_k\neq 0} \prod_{u=1}^i
z_{j_u}(a_1)\overline{z_{j_u}(a_0)}\prod_{u=i+1}^k z_{j_u}(a_0)\overline{z_{j_u}(a_0)}\\
&=\sum_{i=0}^ k \binom{t}{i}\binom{n-t}{k-i} \Big(\sum_{s=1}^{q-1}
z_s(a_1)\overline{z_s(a_0)}\Big)^i \Big(\sum_{s=1}^{q-1}
z_s(a_0)\overline{z_s(a_0)}\Big)^{k-i}\\
&=\sum_{i=0}^ k \binom{t}{i}\binom{n-t}{k-i}
{E_{L}(a_1,a_0)}^i{E_{L}(a_0,a_0)}^{k-i}\\
&=\sum_{i=0}^ k \binom{t}{i}\binom{n-t}{k-i} (-1)^i(q-1)^{k-i}
\end{align*}
with the notations and results of \ref{Sq}. $P_k(t)$ is equal to the Krawtchouck
 polynomial $K_k^{n,q}(t)$ of parameters $q$ and $n$ which satisfies the following
characteristic properties:

\begin{enumerate}
\item $\deg(K_k^{n,q})=k$
\item $K_k^{n,q}(0)=(q-1)^k\binom{n}{k}$
\item Orthogonality relations: for all $0\leq k\leq l\leq n$
\begin{equation*}
\frac{1}{q^n}\sum_{w=0}^n \binom{n}{w}K_k^{n,q}(w)K_l^{n,q}(w)=\delta_{k,l}\binom{n}{k}(q-1)^k.
\end{equation*}
\end{enumerate}
The orthogonality relations are direct consequences of the
orthogonality of the subspaces $P_k$.

\subsubsection{\bf The Johnson space $J_n^w$:}\label{Johnson 2} with
the notations of subsection \ref{Johnson 1}, we have shown
the decomposition
\begin{equation*}
\CC(J_n^w)\simeq H_w\perp H_{w-1}\perp\dots \perp H_0
\end{equation*}
but not yet the irreducibility of $H_i$. So far their might by several
$P_{i,j}$, $j=1,\dots$ associated to $H_i$.
The zonal functions express as functions of $t:=|x\cap y|$ the
number of common ones in $x$ and $y$. The orthogonality relation is
easy to compute:
\begin{align*}
\sum_{x\in X} f(|x\cap y|)\overline{f'(|x\cap y|)}
&=\sum_{i=0}^n \card\{x\ :\ |x\cap y|=i\}f(i)\overline{f'(i)}\\
&=\sum_{i=0}^w \binom{w}{i}\binom{n-w}{w-i}f(i)\overline{f'(i)}\\
&=\sum_{i=0}^w \binom{w}{i}\binom{n-w}{i}f(w-i)\overline{f'(w-i)}.
\end{align*}
By induction on $k$ one proves that $P_{k,j}$ has degree at most $k$
in $t$. The conditions:
\begin{enumerate}
\item $\deg(Q_k)=k$
\item $Q_k(0)=1$
\item for all $0\leq k < l\leq n$
\begin{equation*}
\sum_{i=0}^w \binom{w}{i}\binom{n-w}{i}Q_k(i)Q_l(i)=0
\end{equation*}
\end{enumerate}
determine a unique sequence
$(Q_0,Q_1,\dots,Q_w)$. Thus there is only one $P_{k,j}$ for each $k$
and it is equal to $h_kQ_k(w-t)$. The polynomials $Q_k$ defined above
belong to the family of Hahn polynomials.

\subsubsection{\bf The sphere $S^{n-1}$:}\label{sphere 2} the distance on the sphere is
the angular distance $\theta(x,y)$. It appears more convenient to
express the functions in the variable $t=x\cdot y=\cos \theta(x,y)$. A standard
calculation shows that 
\begin{equation*}
\int_{S^{n-1}} f(x\cdot y) d\mu(y)=c_n\int_{-1}^1 f(t) (1-t^2)^{\frac{n-3}{2}} dt 
\end{equation*}
for some irrelevant constant $c_n$.
The conditions:
\begin{itemize}
\item $\deg(P_k^n)=k$
\item $P_k^n(1)=1$
\item For all $k\neq l$, $\int_{-1}^1
  P_k^n(t)P_l^n(t)(1-t^2)^{\frac{n-3}{2}} dt =0$
\end{itemize}
define a unique sequence of polynomials by standard arguments
(i.e. obtained by Gram Schmidt orthogonalization of the basis
$(1,t,\dots,t^k,\dots)$),
it is the sequence of so-called Gegenbauer polynomials with parameter
$n/2-1$ \cite{Sz}. The decomposition \ref{dec 1} of $\CC(S^{n-1})$ shows
that, to each $k\geq 0$ the function $P_k(x\cdot y)$ associated to
$H_k^n\simeq \Harm_k^n$ is polynomial in $x\cdot y$ and satisfies the
above conditions
except the normalization of $P_k(1)$ thus we have
$P_k(t)=h_k^n P_k^n(t)$.

\subsubsection{\bf Other $2$-point homogeneous
  spaces:}\label{2-homogeneous 2} as it is shown
in the above examples,  a sequence of orthogonal polynomials in one
variable is associated to each such space. In the case of the
projective spaces, it is a sequence of Jacobi polynomials. We refer to 
\cite{KL}, \cite{L}, \cite{Vil} for their determination in many cases
and for the applications to coding theory.

\subsection{Other symmetric spaces}
Now we turn to other cases of interest in coding theory, where the
space $X$ is
symmetric but not necessarily $2$-point homogeneous. Since the
decomposition of $\CC(X)$ is multiplicity free, the matrices
$E_k(x,y)$ still have a single coefficient which is a member of a
sequence of orthogonal polynomials, but this time multivariate. 
The first case ever studied (at least to my knowledge) is the case of
the non binary Johnson spaces \cite{TAG}, its associated functions are two
variables polynomials,
a mixture of Hahn and Eberlein polynomials.
We briefly discuss a few of these cases.

\subsubsection{\bf The Grassmann spaces:}\label{Grassmann 2} \cite{B1}
the orbits of $X^2$ are parametrized by the principal angles
$(\theta_1,\dots,\theta_m)$
(\ref{Grassmann 1}). The appropriate variables are the $y_i:=\cos^2\theta_i$. The
decomposition of $\CC(\Gmn)$ under $O(\R^n)$ (respectively $U(\C^n)$) together with the computation of the corresponding
sequence of orthogonal polynomials was performed in \cite{JC}. 
We focus here on the real case.
We recall that the irreducible representations of $\On$ are (up to a power of the
determinant)
naturally indexed by partitions
$\kappa=(\kappa_1,\dots,\kappa_n)$, where $\kappa_1\geq \dots\geq
\kappa_n\geq 0$ (we may omit the last parts if they are equal to $0$).
Following \cite{GW}, let them be denoted by $V_n^{\kappa}$.
For example, $V_n^{()}=\C\1$, and $V_n^{(k)}=\Harm_k$.

The length $\ell(\kappa)$ of a partition $\kappa$ is the
number of its non zero parts, and its degree 
$\deg(\kappa)$ also denoted by $|\kappa|$ equals $\sum_{i=1}^n\kappa_i$.

\smallskip
Then, the decomposition of $\CC(\Gmn)$ is as follows:

\begin{equation*}
\CC(\Gmn)\simeq \oplus V_n^{2\kappa}
\end{equation*}
where $\kappa$ runs over the partitions of length at most $m$ and
$2\kappa$ stands for partitions with even parts.
We denote by $\Pk(y_1,\dots,y_m)$ the zonal function associated to $V_n^{2\kappa}$.
It turns out that the $\Pk$ are symmetric polynomials in the
$m$ variables $y_1,\dots,y_m$, of degree $|\kappa|$,
with rational coefficients once they are normalized by the condition  $\Pk(1,\dots,1)~=~1$.
Moreover, the set $(\Pk)_{|\kappa|\leq k}$ is a basis of the space 
of symmetric polynomials in the variables $y_1,\dots,y_m$ of degree at most
equal to $k$, which is orthogonal for the induced
inner product calculated in \cite{JC},

$$d\mu=
\lambda \prod
_{\substack{i,j=1\\i<j}}^m|y_i-y_j|\prod_{i=1}^my_i^{-1/2}(1-y_i)^{n/2-m-1/2} dy_i
$$
(One recognizes a special case of the orthogonal measure associated to {\em generalized
  Jacobi polynomials} (\cite{La2}).

\subsubsection{\bf The ordered Hamming space:}\label{ordered Hamming 2} it follows from the
discussion in \ref{ordered Hamming 1} that the variables of the zonal
functions are the $(e_0,e_1,\dots, e_r)$. Elaborating on the computation
explained  above for the
Johnson space, one can see that in the case of finite
spaces, the weights of the induced measure are given by the number of
elements of the orbits of $X$ under the action of $\Stab(e)$ for
any $e\in X$. Taking $e=0^{rn}$, thus $\Stab(e)=B^n\rtimes S_n$, and
the orbit of $x$ is the set of elements with the same shape
$(f_0,\dots, f_r)$ as $x$. 
The number of such elements is $\binom{n}{f_0 \dots
  f_r}2^{\sum_i(i-1)e_i}$.
These are the weights associated to the multivariate Krawtchouk
polynomials. 

\subsubsection{\bf The space $X=\Ga$ under the action of $G=\Ga\times \Ga$:}
we need an explicit parametrization of the conjugacy classes of
$\Ga$, which is afforded by very few groups. Famous examples (if not
the only ones) are provided by the permutation groups and the unitary
groups.
In the first case the parametrization is by the decomposition in
disjoint cycles and in the second case it is by the eigenvalues. The
decomposition of $\CC(X)$ is given by Peter Weyl theorem 
\begin{equation*}
\CC(\Ga)=\sum_{R\in \RR} R\otimes R^*
\end{equation*}
and the associated functions $P_R(x,y)$ are the characters:
\begin{equation*}
P_R(x,y)=\chi_R(xy^{-1}).
\end{equation*}
In both cases ($S_n$ and $U(\C^n)$) the irreducible representations
are indexed by partitions $\lambda$ and there are explicit expressions 
for $P_{\lambda}$. In the case of the unitary group $P_{\lambda}(xy^{-1})$ are
  the so-called Schur polynomials evaluated at the eigenvalues of $xy^{-1}$.

\subsection{Three cases with non trivial multiplicities}

So far the computation of the matrices $E_k(x,y)$ in cases of non
trivial multiplicities has been worked out in very few cases. 
We shall discuss three  very similar
cases, namely the unit
sphere
of the Euclidean sphere (\cite{BV1}), the Hamming space (\cite{V1}), and the projective
geometry over $\F_q$ (\cite{BV4}), where the group considered is the stabilizer of one point.
In the case of the Hamming space, this computation amounts to the
computation of the Terwilliger algebra of the association scheme and was performed
initially by A. Schrijver in  \cite{Schrijver}, who treated also  the non binary Hamming
space \cite{GST}. The framework of group representations was used in
\cite{V1} to obtain the semidefinite matrices of \cite{Schrijver} in
terms of orthogonal polynomials. We present here the uniform treatment
of the Hamming space and of the projective geometry in the spirit of
\cite{Del2} adopted in \cite{BV4}.
We also generalize to the case of the stabilizer of many points in
the spherical case and enlighten the connection with the positive
definite functions calculated in \cite{M2}.

\subsubsection{\bf The unit sphere $S^{n-1}$, with
  $G:=\Stab(e,O(\R^n))$}\label{sphere 4} We continue the discussion initiated in
\ref{sphere 3} and we follow \cite{BV1}. 
Let $E_k^n(x,y)$ be the zonal matrix associated to the isotypic
subspace $\I_k$ related to $\Harm_k^{n-1}$   and to its decomposition
described in \ref{sphere 3}:
\begin{equation*}
\I_k=H^n_{k,k}\perp H^n_{k,k+1}\perp\dots
\end{equation*}
We index $E_k^n$ with $i,j\geq 0$ so that $E_{k,i,j}^n(x,y)$ is
related to the spaces $H^n_{k,k+i}$, $H^n_{k,k+j}$.
The orbits of $G$ on pairs of points $(x,y) \in X^2$
are characterized by the values of the three inner products
$u:=e\cdot x$, $v:=e\cdot y$ and $t:=x\cdot y$. Thus $(u,v,t)$ are the
variables of the zonal matrices and we let:
\begin{equation*}
E_k^n(x,y)=Y_k^n(u,v,t).
\end{equation*}

\begin{theorem}\label{Th Y}[\cite{BV1}] 
\begin{equation}\label{Y}
Y_{k,i,j}^n(u,v,t)= \lambda_{k,i}\lambda_{k,j}P_{i}^{n+2k}(u)P_{j}^{n+2k}(v)Q_k^{n-1}(u,v,t),
\end{equation}
where 
\[
\displaystyle Q_k^{n-1}(u,v,t):=\big((1-u^2)(1-v^2)\big)^{k/2}P_k^{n-1}\Big(\frac{t-uv}{\sqrt{(1-u^2)(1-v^2)}}\Big),
\]
and $\lambda_{k,i}$ are some real constants.
\end{theorem}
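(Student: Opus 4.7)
The plan is to write down explicit orthonormal bases of the spaces $H^n_{k,k+i}$ realizing a common matrix representation of $G=\Stab(e,\On)\cong \Onn$, and then read off $E^n_{k,i,j}(x,y)$ directly from its definition. Parametrize $x\in\Sn$ by $u:=e\cdot x$ and $x'':=x-ue\in e^{\perp}$, and, when $|u|<1$, set $\widetilde x:=x''/\sqrt{1-u^2}\in\Snm$; similarly introduce $v,y'',\widetilde y$. Since $x''\cdot y''=(x-ue)\cdot(y-ve)=t-uv$, one gets the key identity
\begin{equation*}
\widetilde x\cdot\widetilde y=\frac{t-uv}{\sqrt{(1-u^2)(1-v^2)}},
\end{equation*}
which is precisely the argument of $P_k^{n-1}$ inside $Q_k^{n-1}(u,v,t)$.

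The central step is the classical description of the branching $\On\downarrow \Onn$ underlying \eqref{dec res}: there is a $G$-equivariant isomorphism
\begin{equation*}
\Harm_k^{n-1}\;\xrightarrow{\ \sim\ }\;H^n_{k,k+i},\qquad Y\ \longmapsto\ P_i^{n+2k}(u)\,Y(x''),
\end{equation*}
where $Y$ is viewed as a homogeneous harmonic polynomial on $e^{\perp}\simeq\R^{n-1}$ and $P_i^{n+2k}$ is the zonal polynomial of $\Harm_i^{n+2k}$, which is a Gegenbauer polynomial of parameter $(n-2)/2+k$ (up to normalization). The shift $n\mapsto n+2k$ arises from writing the harmonic extension of $Y(x'')$ to $\R^n$ in the form $\sum_{\ell}c_\ell\, u^{i-2\ell}|x''|^{2\ell}Y(x'')$ and solving the recursion on the $c_\ell$ forced by $\Delta_n=\partial_{x_n}^2+\Delta_{x''}$ together with $\Delta_{x''}Y=0$; on $\Sn$ the substitution $|x''|^2=1-u^2$ reduces the extension to the displayed product. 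The $G$-equivariance is immediate ($G$ fixes $u$ and acts naturally on $x''$), and surjectivity onto $H^n_{k,k+i}$ follows by dimension count from \eqref{dec res}. Fixing an orthonormal basis $(Y_s)_{s=1}^{d_k}$ of $\Harm_k^{n-1}$ in which $G$ acts by a prescribed unitary matrix representation therefore produces an orthonormal basis of $H^n_{k,k+i}$ realizing the same matrices, of the form $e_{k,k+i,s}(x)=\lambda_{k,i}\,P_i^{n+2k}(u)\,Y_s(x'')$ for a single real scalar $\lambda_{k,i}$ depending only on $(n,k,i)$.

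Plugging these bases into the definition of the zonal matrix and pulling the $(i,u)$- and $(j,v)$-dependent scalar factors outside the sum,
\begin{equation*}
E^n_{k,i,j}(x,y)=\lambda_{k,i}\lambda_{k,j}\,P_i^{n+2k}(u)\,P_j^{n+2k}(v)\sum_{s=1}^{d_k}Y_s(x'')\overline{Y_s(y'')}.
\end{equation*}
By homogeneity of $Y_s$ of degree $k$, $Y_s(x'')=(1-u^2)^{k/2}Y_s(\widetilde x)$ and likewise for $y$, so the remaining sum equals $((1-u^2)(1-v^2))^{k/2}\sum_s Y_s(\widetilde x)\overline{Y_s(\widetilde y)}$. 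That last sum is the reproducing kernel of $\Harm_k^{n-1}$ on $\Snm$; by the $2$-point homogeneous computation of \ref{sphere 2} applied to $(\Snm,\Onn)$ it equals $h_k^{n-1}P_k^{n-1}(\widetilde x\cdot\widetilde y)$. Absorbing the constant $h_k^{n-1}$ into $\lambda_{k,i}\lambda_{k,j}$ (which the statement permits), inserting the key identity of the first paragraph, and folding the prefactor $((1-u^2)(1-v^2))^{k/2}$ together with $P_k^{n-1}(\widetilde x\cdot\widetilde y)$ into $Q_k^{n-1}(u,v,t)$ yields \eqref{Y}. The main obstacle is the central step: establishing the Gegenbauer-based isomorphism $\Harm_k^{n-1}\xrightarrow{\sim}H^n_{k,k+i}$ and pinning down the parameter shift $n\mapsto n+2k$ via the harmonic-extension computation; the rest is bookkeeping using the spherical addition formula on $\Snm$.
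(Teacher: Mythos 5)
Your argument follows the same architecture as the paper's proof: parametrize $x\in\Sn$ by $u=e\cdot x$ and a unit vector $\zeta\in\Snm$, build orthonormal bases of the spaces $H^n_{k,k+i}$ realizing a common unitary matrix representation of $\Onn$, substitute into $E^n_{k,i,j}(x,y)=\sum_s e_{k,i,s}(x)\overline{e_{k,j,s}(y)}$, and identify the surviving sum over $s$ with the reproducing kernel $h_k^{n-1}P_k^{n-1}(\zeta\cdot\xi)$ on $\Snm$. Where you genuinely diverge is in how the degree-$i$ factor is identified with the Gegenbauer polynomial $P_i^{n+2k}$ of shifted parameter. The paper proceeds indirectly: it first asserts (by an induction on $k$ and $i$ inside $\Polki(\Sn)$) the existence of degree-$i$ polynomials $P_i$ with $\varphi(H^{n-1}_k)P_i(u)\subset H^n_{k+i}$, and then extracts from the pairwise orthogonality of the $H^n_{k,k+i}$ the relation $\int_{-1}^1(1-u^2)^{k+(n-3)/2}P_i(u)\overline{P_j(u)}\,du=0$ for $i\neq j$, which forces $P_i\propto P_i^{n+2k}$. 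You instead construct explicitly, for $Y\in\Harm_k^{n-1}$, the homogeneous harmonic polynomial of degree $k+i$ on $\R^n$ of the separated form $p_i(u,|x''|^2)\,Y(x'')$, and read off the Gegenbauer identification from the recursion on the coefficients of $p_i$ forced by $\Delta_n=\partial_u^2+\Delta_{x''}$ together with $\Delta_{x''}Y=0$. Your route is more constructive and yields the inclusion in $H^n_{k+i}$ as a byproduct, while the paper's is briefer on the page but relies on the separately established orthogonality of the subspaces; both are sound. Two small cautions on wording: calling $p_i(u,|x''|^2)Y(x'')$ the ``harmonic extension of $Y(x'')$ to $\R^n$'' is a slight abuse (it is the degree-$(k+i)$ harmonic polynomial on $\R^n$ of the prescribed separated form, not a Dirichlet-type extension), and the constant $h_k^{n-1}$ arising from the addition formula must indeed be folded into $\lambda_{k,i}\lambda_{k,j}$, exactly as the paper's statement implicitly permits.
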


\begin{proof} We need an explicit construction of the spaces
  $H_{k,k+i}^{n-1}$. We refer to \cite[Ch.~9.8]{AAR}. For $x\in \Sn$,
let
\[
x=ue+\sqrt{1-u^2}\zeta,
\]
where $u=x\cdot e$ and $\zeta$ belongs to the unit sphere
$\Snm$ of $(\R e)^{\perp}$. With $f\in H_k^{n-1} \subset \CC(\Snm)$ we
associate $\varphi(f)\in \CC(\Sn)$ defined by:
\[
\varphi(f)(x)=(1-u^2)^{k/2}f(\zeta).
\]
Moreover, we recall that $H_k^n$ is a subspace of the space
$\Polk(\Sn)$ of polynomial functions in the coordinates of degree at
most $k$. 
Note that the multiplication by $(1-u^2)^{k/2}$ forces $\varphi(f)$ to
be a polynomial function in the coordinates of $x$. Clearly $\varphi$
commutes with the action of $G$. Hence $\varphi(H_k^{n-1})$ is a
subspace of $\Polk(\Sn)$ which is isomorphic to $\Harm_k^{n-1}$. It is
clear that these spaces are pairwise orthogonal.  More
generally, the set $\{\varphi(f)P(u) : f \in \Harm_{k}^{n-1}, \deg P
\leq i\}$ is a subspace of $\Polki(\Sn)$ which is isomorphic to $i+1$
copies of $\Harm_{k}^{n-1}$.  By induction on $k$ and $i$ there exist
polynomials $P_i(u)$ of degree $i$ such that
$H_{k,k+i}^{n-1}:=\varphi(H_k^{n-1})P_i(u)$ is a subspace of $H_{k+i}^n$. 
This construction proves the decomposition \eqref{dec res}.
Moreover, 
we can exploit the fact that the subspaces $H_{k,l}^{n-1}$ are
pairwise orthogonal to prove an orthogonality relation between the
polynomials $P_i$. Then this orthogonality relation will enable us to
identify the polynomials $P_i$ with  Gegenbauer polynomials, up to the
multiplication by a constant factor.
Let us recall that the measures on $\Sn$ and on $\Snm$ are related by:
\[
d\omega_n(x)=(1-u^2)^{(n-3)/2}du d\omega_{n-1}(\zeta).
\]
Whenever $i \neq j$ we have for all $f\in H_k^{n-1}$
\begin{align*}
0&=\frac{1}{\omega_n}\int_{\Sn} \varphi(f)P_i(u)\overline{\varphi(f)P_j(u)} d\omega_n(x)\\
&=\frac{1}{\omega_n}\int_{\Sn} |f(\zeta)|^2(1-u^2)^kP_i(u)\overline{P_j(u)} d\omega_n(x)\\
&=\frac{1}{\omega_n}\int_{\Snm} |f(\zeta)|^2d\omega_{n-1}(\zeta)\int_{-1}^1 (1-u^2)^{k+(n-3)/2}P_i(u)\overline{P_j(u)} du,
\end{align*}
from which we derive that 
\[
\int_{-1}^1 (1-u^2)^{k+(n-3)/2}P_i(u)\overline{P_j(u)}du=0;
\]
hence the polynomials $P_i(u)$ are proportional to $P_i^{n+2k}(u)$
(thus with real coefficients..).
We obtain an orthonormal basis of $H_{k,k+i}^{n-1}$ from an
orthonormal basis $(f_1,\ldots,f_h)$ of $H_k^{n-1}$ by taking
$e_{k,i,s}=\lambda_{k,i}\varphi(f_s)P_i^{n+2k}(u)$ for a suitable normalizing
factor $\lambda_{k,i}>0$. With these  basis we can compute
$E^n_{k,i,j}$:

\begin{eqnarray*}
& & E^n_{k,i,j}(x,y)=\sum_{s=1}^{h_k^{n-1}} e_{k,i,s}(x)\overline{e_{k,j,s}(y)}\\
&=&\sum_{s=1}^{h_k^{n-1}}  \lambda_{k,i}(1-u^2)^{k/2}f_s(\zeta)P_i^{n+2k}(u)\lambda_{k,j}(1-v^2)^{k/2}\overline{f_s(\xi)}P_j^{n+2k}(v)\\
&=&\lambda_{k,i}\lambda_{k,j} P_i^{n+2k}(u)P_j^{n+2k}(v)\big((1-u^2)(1-v^2)\big)^{k/2}\sum_{s=1}^{h_k^{n-1}} f_s(\zeta)\overline{f_s(\xi)}\\
&=& \lambda_{k,i}\lambda_{k,j}
  P_i^{n+2k}(u)P_j^{n+2k}(v)\big((1-u^2)(1-v^2)\big)^{k/2}h_k^{n-1}P_k^{n-1}(\zeta
  \cdot \xi),
\end{eqnarray*}
where we have written $y=ve+\sqrt{1-v^2}\xi$ and where the last
equality results from the analysis of zonal functions of $S^{n-1}$.
Since
\[
\zeta\cdot \xi=(t-uv
)/\sqrt{(1-u^2)(1-v^2)},
\]
we have completed the proof.
\end{proof}

\subsubsection{\bf The unit sphere $S^{n-1}$ with the action of
  $G:=\Stab(e_1,\dots,e_s,O(\R^n))$.}
We assume that $(e_1,\dots,e_s)$ is a set of orthonormal vectors. 
The group $G:=\Stab(e_1,\dots,e_s,O(\R^n))$ is isomorphic to
$O(\R^{n-s})$.
The orbit of a pair $(x,y)\in X^2$ under $G$ is characterized by the
data:
$t:=x\cdot y$, $u:=(x\cdot e_1,\dots,x\cdot e_s)$, 
$v:=(y\cdot e_1,\dots, y\cdot e_s)$. 
The decomposition \eqref{dec res} applied recursively shows that 
$\CC(S^{n-1})$ decomposes as the sum of $G$-irreducible subspaces
$H_{\underline{k}}$ where $\underline{k}=(k_0,\dots, k_{s})$,
$k_0\leq k_1\leq\dots \leq k_s$, with the properties:
\begin{equation*}
H_{\underline{k}}\subset H_{\underline{k}^{(r)}} \subset \operatorname{Pol}_{k_s},
\quad  H_{\underline{k}}\simeq \Harm_{k_0}^{n-s}
\end{equation*}
where $\underline{k}^{(r)}=(k_{s-r+1},\dots, k_s)$.
Thus, for a given $k_0$, the multiplicity of the isotypic  component $\I_{k_0}^d$
associated to $\Harm_{k_0}^{n-s}$ in $\Pold$ is the number of elements of
\begin{equation*}
K_d:=\{ (k_1,\dots, k_{s})\ : \  k_0\leq   k_1\leq\dots \leq k_s\leq d\}.
\end{equation*}
We construct the spaces $H_{\underline{k}}$ like in the proof of
Theorem \ref{Th Y}:
for $x\in S^{n-1}$, let
\begin{equation*}
x=u_1e_1+\dots+u_se_s+\sqrt{1-|u|^2}\zeta
\end{equation*}
where $u=(u_1,\dots,u_s)$ and $|u|^2=\sum_{i=1}^s u_i^2$.
Let $\varphi: H_{k_0}^{n-s}\to \CC(S^{n-1})$ be defined by
$\varphi(f)(x)=(1-|u|^2)^{k_0/2}f(\zeta)$.
Then $\varphi(H_{k_0}^{n-s})=H_{k_0^{s+1}}$ where
$k_0^{s+1}=(k_0,k_0,\dots,k_0)$
and we set, for $\underline{l}=(l_1,\dots,l_s)$,  
$H_{k_0,\underline{l}}:=u_1^{l_1}\dots u_s^{l_s}H_{k_0^{s+1}}$.
It is clear that $H_{k_0,\underline{l}}\simeq_G \Harm_{k_0}^{n-s}$ and
that $H_{k_0,\underline{l}}\subset \operatorname{Pol}_{d}$ if $l_1+\dots
+l_s\leq d-k_0$ thus, since
\begin{equation*}
K'_d:=\{l=(l_1,\dots,l_s) \ :\  l_i\geq 0,\  l_1+\dots+l_s\leq d-k_0\}
\end{equation*}
has the same number of elements as $K_d$,
\begin{equation*}
\I_{k_0}^d= \oplus_{\underline{l}\in K'_d} H_{k_0,l}.
\end{equation*}
This sum is not orthogonal but we can still use it to calculate
$E_{k_0}$, the change will be to $AE_k(x,y)A^*$ for some invertible
matrix $A$. The same calculation as in Theorem \ref{Th Y} shows that,
(up to a change to some $AY_kA^*$):
\begin{equation*}
Y_{k,
  \underline{i},\underline{j}}(u,v,t)=u^{\underline{i}-k}v^{\underline{j}-k}Q_k^{n-s}(u,v,t)
\end{equation*}
with the notations: $u^{\underline{i}-k}:=u_1^{i_1-k}u_2^{i_2-k}\dots
u_s^{i_s-k}$ and
\begin{equation*} 
Q_k^{n-s}(u,v,t)=\big((1-|u|^2)(1-|v|^2\big)^{k/2}P_k^{n-s}\Big(\frac{t-(u\cdot  v)}{\sqrt{(1-|u|^2)(1-|v|^2)}}\Big).
\end{equation*}
With Bochner Theorem \ref{Bochner pdf} we recover the description of the
multivariate positive definite functions on the sphere given in \cite{M2}.

\subsubsection{\bf The Hamming space and the projective
  geometry}\label{Hamming 3}
The
set of all $\F_q$-linear subspaces of $\F_q^n$, also called the
projective geometry,  is denoted by
$\PP(n,q)$. The linear group $\Gl(n,\F_q)$ acts on $\PP(n,q)$. The orbits of this
action are the subsets of subspaces of fixed dimension, i.e. the
$q$-Johnson spaces. 
If the Hamming space $\F_2^n$ is considered together with the  action
of the symmetric group $S_n$,
the orbits of this action are the Johnson spaces. In \cite{Del2} the
Johnson space and the $q$-Johnson spaces are treated in a uniform way
from the point of view of the linear programming method, the latter
being viewed as $q$-analogs of the former. Thus the Johnson space
corresponds to the value $q=1$. In particular the zonal
polynomials are computed and they turn to be $q$-Hahn polynomials.
Here we want to follow the same line for the determination of the
zonal matrices $E(x,y)$ in both cases.

We take the following notations: if $q$ is a power of a prime
number, we let $X= \PP(n,q)$ and $G=\Gl(n,\F_q)$, and, if $q=1$, we let $X$ be the Hamming space, identified with
the set of subsets of $\{1,\dots,n\}$,
and $G=S_n$ the symmetric group with its standard action on $X$.
Let
\begin{equation*}
|x|:=\Big\{
\begin{array}{ll}
wt(x) & \text{if } q=1\\
\dim(x) &\text{if } q>1
\end{array}
\end{equation*}
For all $w=0,\dots, n$, the space $X_w$ is defined by
\begin{equation*}
X_w=\{x\in X : |x|=w\}.
\end{equation*}
These subsets of $X$ are exactly the orbits of $G$.
The distance on $X$ is given in every case by the formula
\begin{equation}\label{distance}
d(x,y)=|x| +|y|-2|x\cap y|.
\end{equation}
The restriction of the distance $d$ to $X_w$ equals $d(x,y)=2(w-|x\cap
y|)$ and it is  a well known fact that $G$ acts 2-points homogeneously
on $X_w$. It is not difficult to see that the orbit of a pair $(x,y)$
under the action of $G$ is characterized by the triple $(|x|, |y|,
|x\cap y|)$.

Following the notations of \cite{Del2},
the $q$-binomial coefficient $\qbinom{n}{w}$ expresses the cardinality
of $X_w$. We have 

\begin{equation*}
\qbinom{n}{w}=\left\{
\begin{array}{ll}
\displaystyle \prod_{i=0}^{n-1} \frac{n-i}{w-i}=\binom{n}{w} & \text{if } q=1\\
\displaystyle\prod_{i=0}^{n-1} \frac{q^{n-i}-1}{q^{w-i}-1} &\text{if } q>1
\end{array}
\right.
\end{equation*}
In terms of the variable 
$$[x]=q^{1-x}\qbinom{x}{1}=\left\{
\begin{array}{ll}
x &\text{if } q=1\\
\displaystyle \frac{q^{-x}-1}{q^{-1}-1}&\text{if }q>1
\end{array}
\right.,$$
we have 
\begin{equation*}
\qbinom{n}{w}=q^{w(n-w)}\prod_{i=0}^{w-1} \frac{[n-i]}{[w-i]}=q^{w(n-w)}\frac{[n]!}{[w]![n-w]!}.
\end{equation*}

We have the obvious decomposition into pairwise orthogonal $G$-invariant subspaces:
\begin{equation*}
\CC(X)=\CC(X_0)\perp \CC(X_1)\perp\dots\perp \CC(X_n).
\end{equation*}
The decomposition of $\CC(X_w)$ into $G$-irreducible subspaces is
described in \cite{Del2}. We have 
\begin{equation*}
\CC(X_w)=H_{0,w}\perp H_{1,w}\perp\dots \perp H_{\min(w,n-w),w}
\end{equation*}
where the $H_{k,w}$ are pairwise  isomorphic for equal $k$ and different $w$. and 
pairwise non  isomorphic for different $k$.
The picture looks like:

\begin{equation*}
\begin{array}{cccccccc}
\CC(X)=&\CC(X_0) \perp &\CC(X_1)\perp &\dots  &\perp \CC(X_{\lfloor \frac{n}{2}\rfloor})\perp  &\dots &\perp \CC(X_{n-1})&\perp
\CC(X_n)\\
&&&&&&&\\
       & H_{0,0} \perp &H_{0,1} \perp      &\dots  &\perp H_{0,\lfloor         \frac{n}{2}\rfloor}\perp  &\dots &\perp H_{0,n-1}&\perp H_{0,n}\\
       &               & H_{1,1}\perp       &\dots  &             &     & \perp H_{1,n-1}&\\
&&&\ddots &\vdots &&&\\
&&&& H_{\lfloor \frac{n}{2}\rfloor,\lfloor \frac{n}{2}\rfloor}
\end{array}
\end{equation*}
where the columns represent the decomposition of $\CC(X_w)$ and 
the rows  the isotypic components of $\CC(X)$, i.e. 
the subspaces $\I_k:=H_{k,k}\perp H_{k,k+1}\perp\dots\perp H_{k,n-k}$, $0\leq k\leq
\lfloor \frac{n}{2}\rfloor$, with multiplicity $m_k=(n-2k+1)$.

Let, for all $(k,i)$ with $0\leq k\leq i\leq n-k$,
\begin{equation*}
\begin{array}{llll}
\psi_{k,i}: & \CC(X_k) & \to &\CC(X_i)\\
& f &\mapsto  &\psi_{k,i}(f) : 
\psi_{k,i}(f)(y)=\sum_{\substack{|x|=k\\x\subset y}}f(x)
\end{array}
\end{equation*}
and 
\begin{equation*}
\begin{array}{llll}
\delta_{k}: & \CC(X_k) & \to &\CC(X_{k-1})\\
& f &\mapsto  &\delta_{k}(f) : 
\delta_{k}(f)(z)=\sum_{\substack{|x|=k\\z\subset x}}f(x)
\end{array}
\end{equation*}
Obviously, these transformations commute with the action of $G$. The
spaces $H_{k,i}$ are defined by:
$H_{k,k}=\ker\delta_k$ and $H_{k,i}=\psi_{k,i}(H_{k,k})$. Moreover,

\begin{equation*}
h_k:=\dim(H_{k,k})= \qbinom{n}{k}-\qbinom{n}{k-1}.
\end{equation*}
We need later the following properties of $\psi_{k,i}$:
\begin{lemma}If $f,g\in H_{k,k}$,
\begin{equation}\label{e1}
\langle \psi_{k,i}(f),\psi_{k,i}(g)\rangle
=\qbinom{n-2k}{i-k}q^{k(i-k)}\langle f,g\rangle.
\end{equation}
Moreover, 
\begin{equation}\label{e2}
\psi_{i,j}\circ \psi_{k,i}=\qbinom{j-k}{i-k}\psi_{k,j}
\end{equation}
\end{lemma}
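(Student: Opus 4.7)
My plan is to prove the two identities separately: (e2) by a direct combinatorial count, and (e1) by combining a Schur-type spectral reduction with a $q$-M\"obius inversion and a $q$-Vandermonde identity.

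For (e2) I would swap the order of summation. For $f\in\CC(X_k)$ and $z\in X_j$,
\begin{equation*}
(\psi_{i,j}\circ\psi_{k,i})(f)(z)=\sum_{\substack{y\subset z\\|y|=i}}\;\sum_{\substack{x\subset y\\|x|=k}}f(x)=\sum_{\substack{x\subset z\\|x|=k}}f(x)\cdot M(x,z),
\end{equation*}
where $M(x,z)=|\{y:x\subset y\subset z,\ |y|=i\}|$. For $q=1$ this is the number of $(i-k)$-subsets of $z\setminus x$, and for $q>1$ it is the number of $(i-k)$-dimensional subspaces of the $(j-k)$-dimensional quotient $z/x$; either way $M(x,z)=\qbinom{j-k}{i-k}$, which gives (e2).

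For (e1) I would work with the unnormalized inner products (the formula has the correct homogeneity, as one already checks at $k=0$). The adjoint of $\psi_{k,i}$ is then $\psi_{k,i}^*(h)(x)=\sum_{y\supset x,\,|y|=i}h(y)$, and $\langle\psi_{k,i}(f),\psi_{k,i}(g)\rangle=\langle f,\psi_{k,i}^*\psi_{k,i}(g)\rangle$. Since $\psi_{k,i}^*\psi_{k,i}$ is $G$-equivariant and $\CC(X_k)=H_{0,k}\oplus\cdots\oplus H_{k,k}$ is multiplicity-free, Schur's lemma forces it to act on $H_{k,k}$ as a single scalar $\alpha_{k,i}$; so (e1) reduces to showing $\alpha_{k,i}=q^{k(i-k)}\qbinom{n-2k}{i-k}$. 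Grouping pairs $(x,x')\in X_k^2$ by $l=|x\cap x'|$ and using that the number of $y\in X_i$ containing both $x$ and $x'$ equals $\qbinom{n-2k+l}{i-2k+l}$ yields
\begin{equation*}
(\psi_{k,i}^*\psi_{k,i})(g)(x)=\sum_{l=2k-i}^{k}\qbinom{n-2k+l}{i-2k+l}\,S_l(g)(x),\qquad S_l(g)(x):=\sum_{\substack{x'\in X_k\\|x\cap x'|=l}}g(x').
\end{equation*}

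The heart of the argument is the evaluation of $S_l(g)$ for $g\in\ker\delta_k$. I would introduce $D^{(l)}(g)(z):=\sum_{x\supset z,\,|x|=k}g(x)$ on $X_l$; from the recursion $\sum_{y\supset z,\,|y|=l+1}D^{(l+1)}(g)(y)=\qbinom{k-l}{1}\,D^{(l)}(g)(z)$ and the base case $D^{(k-1)}(g)=\delta_k(g)=0$, a descending induction gives $D^{(l)}(g)\equiv 0$ for every $l<k$. Summing $D^{(l)}(g)(z)$ over $z\subset x$ with $|z|=l$ and reordering (using that the number of common $z$'s is $\qbinom{|x\cap x'|}{l}$) produces the triangular system $\sum_{m\geq l}\qbinom{m}{l}\,S_m(g)(x)=0$ for $l<k$ together with $S_k(g)(x)=g(x)$. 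Inverting this system by the $q$-M\"obius formula---the inverse of the matrix $(\qbinom{m}{l})$ has entries $(-1)^{m-l}q^{\binom{m-l}{2}}\qbinom{m}{l}$, as one checks from the $q$-binomial theorem $(1;q)_{m-l}=0$---gives $S_l(g)(x)=(-1)^{k-l}q^{\binom{k-l}{2}}\qbinom{k}{l}\,g(x)$. Substituting back and setting $s=k-l$, the claim reduces to the $q$-convolution identity
\begin{equation*}
\sum_{s=0}^{i-k}(-1)^sq^{\binom{s}{2}}\qbinom{k}{s}\qbinom{n-k-s}{i-k-s}=q^{k(i-k)}\qbinom{n-2k}{i-k},
\end{equation*}
a $q$-analog of Vandermonde's identity $\sum_s(-1)^s\binom{k}{s}\binom{N-s}{r-s}=\binom{N-k}{r}$; I would verify this by induction on $k$ using the $q$-Pascal rule $\qbinom{n}{r}=\qbinom{n-1}{r-1}+q^r\qbinom{n-1}{r}$. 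The principal obstacle I anticipate is the $q$-M\"obius step---recognizing that the triangular $S_l$-system has only a single nonzero right-hand side---after which both the Schur reduction and the final $q$-Vandermonde verification are routine.
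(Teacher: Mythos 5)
Your proof of the composition identity \eqref{e2} is essentially the paper's own proof: interchange the two sums and observe that the number of $y\in X_i$ with $x\subset y\subset z$ is the $q$-binomial $\qbinom{j-k}{i-k}$ in both the set case $q=1$ and the subspace case $q>1$.

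For the norm formula \eqref{e1}, the paper does not give a proof at all; it cites \cite[Theorem~3]{Del2}. Your argument therefore supplies content the paper delegates to a reference, and it is correct. The chain of ideas --- pass to $\psi_{k,i}^*\psi_{k,i}$ and use that it acts on $H_{k,k}=\ker\delta_k$ as a scalar, expand it by intersection type as $\sum_l \qbinom{n-2k+l}{i-2k+l}S_l$, show that $g\in\ker\delta_k$ kills all lower harmonics $D^{(l)}(g)$ for $l<k$, invert the triangular system $\sum_{m\geq l}\qbinom{m}{l}S_m=\delta_{l,k}\,g(x)$ by $q$-M\"obius inversion to get $S_l=(-1)^{k-l}q^{\binom{k-l}{2}}\qbinom{k}{l}g(x)$, and finish with the $q$-Vandermonde convolution --- checks out. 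Two small remarks only. First, the appeal to Schur's lemma is logically superfluous: your pointwise computation already establishes $\psi_{k,i}^*\psi_{k,i}(g)=\alpha_{k,i}\,g$ directly, so the multiplicity-free fact is not needed. Second, the lower summation limit $l=2k-i$ should be read as $\max(0,2k-i)$ (when $i>2k$ the sum starts at $l=0$), but since $\qbinom{n-2k+l}{i-2k+l}$ vanishes for $l<2k-i$ this is purely notational. Compared to the paper's citation, your route is longer but self-contained, and it makes visible the combinatorial origin of the scalar $q^{k(i-k)}\qbinom{n-2k}{i-k}$ as the surviving term of a $q$-M\"obius inversion along the flag $z\subset x$ inside a fixed $x\in X_k$.
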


\begin{proof} \cite[Theorem 3]{Del2} proves \eqref{e1}. The relation
  \eqref{e2} is straightforward:
if $|z|=j$,
\begin{align*}
\psi_{i,j}(\psi_{k,i}(f))(z) &= \sum_{\substack{|y|=i\\y\subset z}}\psi_{k,i}(f)(y)
= \sum_{\substack{|y|=i\\y\subset z}}
\Big(\sum_{\substack{|x|=k\\x\subset y}} f(x) \Big)\\
&= \sum_{\substack{|x|=k\\x\subset z}}
\Big(\sum_{\substack{|y|=i\\x\subset y\subset  z}} 1\Big)f(x)
=\sum_{\substack{|x|=k\\x\subset z}}\qbinom{j-k}{i-k}f(x)\\
&=\qbinom{j-k}{i-k}\psi_{k,j}(f)(z).
\end{align*}
\end{proof}

Now we want to calculate the matrices $E_k$ of size $m_k=(n-2k+1)$
associated to each isotypic space $\I_k$. We fix an orthonormal basis
$(e_{k,k,1},\dots, e_{k,k,h_k})$ of $H_{k,k}$  and we define
$e_{k,i,s}:=\psi_{k,i}(e_{k,k,s})$. It is clear from the definitions
above that $e_{k,i,s}$ can be assumed to take real values.
From \eqref{e1}, for fixed $k$ and $i$, they form an orthogonal basis 
of $H_{k,i}$ with square norm equal to
$\qbinom{n-2k}{i-k}q^{k(i-k)}$. Normalizing them would
conjugate $E_{k}$ by a diagonal matrix, so we can omit to do it.
The matrix  $E_k$ is indexed with 
$i,j$ subject to  $k\leq i,j\leq n-k$. From the construction, we have
$E_{k,i,j}(x,y)=0$ if $|x|\neq i$ or $|y|\neq j$; 
since the matrix $E_k$ is zonal, we can define $P_{k,i,j}$ by
\begin{equation*}
E_{k,i,j}(x,y)=P_{k,i,j}(i-|x\cap y|)
\end{equation*}
and our goal is to calculate
the
 $P_{k,i,j}$. It turns out that these functions
express in terms of the so-called $q$-Hahn polynomials. 

We define the $q$-Hahn polynomials associated to the parameters
$n,i,j$ with $0\leq i\leq j\leq n$  to be the polynomials 
$Q_k(n,i,j; x)$ with $0\leq k\leq \min(i,n-j)$ uniquely determined by
the properties:
\begin{itemize}
\item $Q_k$ has degree $k$ in the variable $[x]$.
\item $(Q_k)_k$ is a sequence of polynomials orthogonal for the weights
\begin{equation*}
0\leq u\leq i \quad w(n,i,j; u)=\qbinom{i}{u}\qbinom{n-i}{j-i+u}q^{u(j-i+u)}
\end{equation*}
\item $Q_k(0)=1$
\end{itemize}
The polynomials $Q_k$ defined in \cite{Del2} and \ref{Johnson 2} correspond up to 
multiplication by $h_k$ to the parameters $(n,w,w)$ and,
with the notations of \cite{Du}, according to Theorem 2.5, again up to a multiplicative factor,
$Q_k(n,i,j;x)=E_m(i,n-i,j,i-x;q^{-1})$.
The combinatorial meaning of the above weights is the following:

\begin{lemma}\cite[Proposition 3.1]{Du}\label{l2}
Given $x\in X_i$, the number of elements $y\in X_j$ such that $|x\cap
y|=i-u$ is equal to $w(n,i,j;u)$.
\end{lemma}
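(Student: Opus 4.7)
The plan is to count the pairs $(z,y)$ with $z := x\cap y$ of ``size'' $i-u$ in two stages: first choose $z$ inside $x$, then count the extensions of $z$ to a $y\in X_j$ with $x\cap y = z$ \emph{exactly}. The first stage is the same in both cases: the number of subobjects $z$ of $x$ of size $i-u$ is $\qbinom{i}{i-u}=\qbinom{i}{u}$, directly from the definition of the (possibly $q$-)binomial coefficient.

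For the second stage in the case $q=1$, the condition $x\cap y=z$ forces $y=z\sqcup y'$ with $y'\subseteq\{1,\dots,n\}\setminus x$ of cardinality $j-(i-u)=j-i+u$, giving $\binom{n-i}{j-i+u}$ choices. Since the factor $q^{u(j-i+u)}$ equals $1$ in this case, the product with stage one is exactly $w(n,i,j;u)$.

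For the second stage in the case $q>1$, I would pass to the quotient $V:=\F_q^n/z$, of dimension $n-(i-u)=n-i+u$. Containment correspondence identifies subspaces $y\supseteq z$ with subspaces $\bar y\subseteq V$, and the condition $x\cap y=z$ translates to $(x/z)\cap\bar y=0$, where $x/z$ is a fixed $u$-dimensional subspace of $V$. Thus I need to count $(j-i+u)$-dimensional subspaces of $V$ meeting $x/z$ trivially. Choosing any linear complement $U$ of $x/z$ in $V$ (so $\dim U=n-i$), every such $\bar y$ is the graph of a unique linear map $f:W\to x/z$ for a unique $(j-i+u)$-dimensional subspace $W\subseteq U$. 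The number of choices of $W$ is $\qbinom{n-i}{j-i+u}$ and the number of choices of $f$ is $q^{u(j-i+u)}$, giving $q^{u(j-i+u)}\qbinom{n-i}{j-i+u}$ extensions. Multiplying by the stage-one count $\qbinom{i}{u}$ recovers $w(n,i,j;u)$.

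The main obstacle, such as it is, is the graph-parametrization step for $q>1$: one must verify that $\bar y\mapsto(W,f)$, where $W$ is the projection of $\bar y$ onto $U$ along $x/z$ and $f$ records the $x/z$-component, is bijective onto pairs $(W,f)$ with $W\subseteq U$ of the correct dimension and $f$ linear. The transversality hypothesis $\bar y\cap(x/z)=0$ is precisely what makes the projection $\bar y\to W$ an isomorphism, so this is standard linear algebra and poses no real difficulty. Finally, for $u$ outside the range $0\le u\le\min(i,n-j)$ both sides vanish automatically under the convention $\qbinom{a}{b}=0$ when $b<0$ or $b>a$, so no boundary cases need separate treatment.
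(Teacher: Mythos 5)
Your proof is correct. Note, however, that the paper supplies no argument of its own for this lemma: it is stated with a citation to Proposition~3.1 of an external reference, so there is nothing in the text to compare against. Your two-stage count --- first choosing $z = x\cap y$ of size $i-u$ inside $x$ (giving $\qbinom{i}{u}$ choices via the symmetry $\qbinom{i}{i-u}=\qbinom{i}{u}$), then counting $y\supseteq z$ with $x\cap y=z$ exactly by passing to the quotient $\F_q^n/z$ and parametrizing the $(j-i+u)$-dimensional subspaces transversal to $x/z$ as graphs of linear maps from a subspace of a fixed complement $U$ --- is the natural $q$-analog argument, correctly executed, and it handles $q=1$ and $q>1$ uniformly. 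The translation $x\cap y = z \Leftrightarrow (x/z)\cap(y/z)=0$ is valid since both $x$ and $y$ contain $z$, the graph parametrization is a bijection precisely because transversality makes the projection onto $U$ injective on $\bar y$, and your remark that both sides vanish outside $0\le u\le\min(i,n-j)$ under the convention $\qbinom{a}{b}=0$ for $b<0$ or $b>a$ is also right.
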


\begin{theorem}\label{p1} If $k\leq i\leq j\leq n-k$, $|x|=i$, $|y|=j$,
\begin{equation*} 
E_{k,i,j}(x,y)= |X| h_k\frac{\qbinom{j-k}{i-k}\qbinom{n-2k}{j-k}}{\qbinom{n}{j}\qbinom{j}{i}}q^{k(j-k)}Q_k(n,i,j; i-|x\cap y|)
\end{equation*}
If $|x|\neq i$ or $|y|\neq j$, $E_{k,i,j}(x,y)=0$.
\end{theorem}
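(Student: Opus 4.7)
My plan proceeds in three stages; the vanishing part of the theorem is immediate since $e_{k,i,s}\in\CC(X_i)$ forces $E_{k,i,j}(x,y)=0$ unless $|x|=i$ and $|y|=j$.

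First, using the definition $e_{k,i,s}=\psi_{k,i}(e_{k,k,s})$ I would expand
\[
E_{k,i,j}(x,y)=\sum_s\psi_{k,i}(e_{k,k,s})(x)\,\psi_{k,j}(e_{k,k,s})(y)
=\sum_{\substack{z_1\subset x,\,|z_1|=k\\z_2\subset y,\,|z_2|=k}}E_{k,k,k}(z_1,z_2),
\]
reducing the theorem to (i) identifying $E_{k,k,k}$ on $X_k\times X_k$ and (ii) controlling the resulting combinatorial sum.

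Second, on $X_k$ the group $G$ acts $2$-point homogeneously, and $H_{k,k}$ is the top piece of the multiplicity-free decomposition $\CC(X_k)=H_{0,k}\perp\cdots\perp H_{k,k}$. Hence $E_{k,k,k}(z_1,z_2)$ is a function of $|z_1\cap z_2|$ alone, polynomial of degree $k$ in $[k-|z_1\cap z_2|]$, and orthogonal under the measure of Lemma \ref{l2} to the zonal functions of $H_{k',k}$ for $k'<k$, which have strictly smaller degree. The three defining properties of $Q_k(n,k,k;\cdot)$ then identify $E_{k,k,k}$ with a scalar multiple of $Q_k(n,k,k;k-|z_1\cap z_2|)$, and the scalar is pinned down by a trace computation: orthonormality $\langle e_{k,k,s},e_{k,k,s}\rangle=1$ gives $\sum_{z\in X}|e_{k,k,s}(z)|^2=|X|$, and $G$-invariance forces $E_{k,k,k}(z,z)$ constant on $X_k$, whence $E_{k,k,k}(z,z)=h_k|X|/\qbinom{n}{k}$. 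Together with $Q_k(0)=1$ this yields the scalar $|X|h_k/\qbinom{n}{k}$ and matches the theorem at $i=j=k$, $|x\cap y|=k$.

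Third, for the general case I argue by orthogonality rather than directly evaluating the double sum. Grouping the sum by $l=|z_1\cap z_2|$ and noting that the $q$-binomial count of pairs $(z_1,z_2)$ subject to $z_1\subset x$, $z_2\subset y$, $|z_1|=|z_2|=k$, $|z_1\cap z_2|=l$ depends on $(x,y)$ only through $|x\cap y|$, one concludes that $E_{k,i,j}(x,y)=P_{k,i,j}(i-|x\cap y|)$ for some polynomial $P_{k,i,j}$ of degree at most $k$ in $[i-|x\cap y|]$, the polynomial nature being inherited from $Q_k(n,k,k;\cdot)$. For each fixed $y\in X_j$, the function $x\mapsto E_{k,i,j}(x,y)$ lies in $H_{k,i}$, which is orthogonal to $H_{k',i}$ for every $k'<k$. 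Transferring this orthogonality to the combinatorial variable $|x\cap y|$ through the $q$-binomial count of $x\in X_i$ with prescribed $|x\cap y|$ yields orthogonality of $P_{k,i,j}$ against all polynomials of degree $<k$, under a weight that by Lemma \ref{l2} and a double-counting argument is proportional to $w(n,i,j;u)$. Uniqueness of $Q_k(n,i,j;\cdot)$ therefore identifies $P_{k,i,j}$ with a scalar multiple of $Q_k(n,i,j;\cdot)$, and the scalar is determined by evaluating both sides at one convenient configuration (for instance specializing to $x\subset y$ and reducing via \eqref{e2} and a $q$-binomial identity back to the already-known $E_{k,k,k}$).

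The main obstacle will be this last stage: reconciling the $q$-binomial description of the measure on $X_i\times X_j$ induced by $|x\cap y|$ with the weight $w(n,i,j;u)$ defining $Q_k(n,i,j;\cdot)$, and carrying out the constant-matching so that the overall scalar indeed equals $|X|h_k\qbinom{j-k}{i-k}\qbinom{n-2k}{j-k}q^{k(j-k)}/(\qbinom{n}{j}\qbinom{j}{i})$. Both pieces are standard $q$-binomial identities of the same flavor as those used in \cite{Del2} for the $q$-Johnson case, so the difficulty is one of bookkeeping rather than of conceptual substance.
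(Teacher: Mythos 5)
Your argument is correct in outline and relies on the same underlying facts as the paper (the maps $\psi_{k,i}$ and relations \eqref{e1}--\eqref{e2}, orthogonality of the subspaces $H_{k,i}$, and the degree bound on $P_{k,i,j}$ borrowed from \cite{Del2}), but it is organized along a genuinely different route. The paper never passes through $E_{k,k,k}$: it works with $E_{k,i,j}$ directly, computes $P_{k,i,j}(0)$ by averaging over flags $x\subset y$ and using $\psi_{i,j}\circ\psi_{k,i}=\qbinom{j-k}{i-k}\psi_{k,j}$, and then obtains the full orthogonality relation (including the $k=l$ diagonal value) in one stroke by evaluating $\sum_{y\in X}E_{k,i,j}(x,y)E_{l,i',j'}(y,z)$ from the orthonormality of the $e_{k,i,s}$. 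You instead first identify $E_{k,k,k}$ on $X_k$ via two-point homogeneity, then push forward through $\psi$, and finally argue orthogonality of $P_{k,i,j}$ against lower-degree terms from $H_{k,i}\perp H_{k',i}$, normalizing at $u=0$. Your route buys a conceptually clean picture (everything is built from the ``ground'' space $X_k$), at the price of an extra reduction and a double-counting step the paper avoids. One caveat: your remark that the polynomial nature of $P_{k,i,j}$ of degree $\leq k$ in $[u]$ is ``inherited from $Q_k(n,k,k;\cdot)$'' through the double sum over $(z_1,z_2)$ is not self-evident -- the paper (and you, implicitly) must in the end appeal to the argument of \cite{Del2} for this degree bound, so you should cite it directly rather than suggest it follows from the $i=j=k$ case.
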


\begin{proof} We proceed in two steps: the first step \eqref{e3} calculates
  $P_{k,i,j}(0)$  and the second step \eqref{e4}  obtains the orthogonality
  relations. 

\begin{lemma} With the above notations,
\begin{equation}\label{e3}
P_{k,i,j}(0)=|X|h_k
\frac{\qbinom{j-k}{i-k}\qbinom{n-2k}{j-k}}{\qbinom{n}{j}\qbinom{j}{i}}q^{k(j-k)}.
\end{equation}
\end{lemma}

\begin{proof} We have $P_{k,i,j}(0)=E_{k,i,j}(x,y)$ for all $x,y$ with
  $|x|=i$, $|y|=j$, $x\subset y$. Hence
\begin{align*}
P_{k,i,j}(0) &= \frac{1}{\qbinom{n}{j}\qbinom{j}{i}}
\sum_{\substack{|x|=i, |y|=j\\x\subset y}} E_{k,i,j}(x,y)\\
&= \frac{1}{\qbinom{n}{j}\qbinom{j}{i}}
\sum_{\substack{|x|=i, |y|=j\\x\subset y}} 
\sum_{s=1}^{h_k} e_{k,i,s}(x)e_{k,j,s}(y)\\
&= \frac{1}{\qbinom{n}{j}\qbinom{j}{i}} 
\sum_{s=1}^{h_k}\sum_{|y|=j} \Big(\sum_{\substack{|x|=i \\x\subset y}}  e_{k,i,s}(x)\Big)e_{k,j,s}(y)\\
&= \frac{1}{\qbinom{n}{j}\qbinom{j}{i}} 
\sum_{s=1}^{h_k}\sum_{|y|=j} \psi_{i,j}(e_{k,i,s})(y) e_{k,j,s}(y)\\
\end{align*}
Since, from \eqref{e2}
$$ \psi_{i,j}(e_{k,i,s})=\psi_{i,j}\circ
\psi_{k,i}(e_{k,k,s})=\qbinom{j-k}{i-k}\psi_{k,j}(e_{k,k,s})
=\qbinom{j-k}{i-k}e_{k,j,s},$$
we obtain 
\begin{align*}
P_{k,i,j}(0) &= 
\frac{1}{\qbinom{n}{j}\qbinom{j}{i}} 
\sum_{s=1}^{h_k}\sum_{|y|=j} \qbinom{j-k}{i-k}e_{k,j,s}(y) e_{k,j,s}(y)\\
&= 
\frac{\qbinom{j-k}{i-k}}{\qbinom{n}{j}\qbinom{j}{i}}  
\sum_{s=1}^{h_k} |X|\langle e_{k,j,s}, e_{k,j,s} \rangle
=
|X|h_k\frac{\qbinom{j-k}{i-k}\qbinom{n-2k}{j-k}}{\qbinom{n}{j}\qbinom{j}{i}}q^{k(j-k)}
\end{align*}
from \eqref{e1}.
\end{proof}

\begin{lemma} With the above notations,
\begin{equation}\label{e4}
\sum_{u=0}^i w(n,i,j;u)P_{k,i,j}(u)P_{l,i,j}(u)= \delta_{k,l}|X|^2h_k
\frac{\qbinom{n-2k}{i-k}
\qbinom{n-2k}{j-k}q^{k(i+j-2k)}}{\qbinom{n}{i}}.
\end{equation}
\end{lemma}

\begin{proof} We compute $\Sigma:=\sum_{y\in X} E_{k,i,j}(x,y)
  E_{l,i',j'}(y,z)$.

\begin{align*}
\Sigma &= 
\sum_{y\in X} \sum_{s=1}^{h_k}\sum_{t=1}^{h_l}e_{k,i,s}(x)e_{k,j,s}(y)e_{l,i',t}(y)e_{l,j',t}(z)\\
&= \sum_{s=1}^{h_k}\sum_{t=1}^{h_l}e_{k,i,s}(x)e_{l,j',t}(z)\Big(\sum_{y\in  X}e_{k,j,s}(y)e_{l,i',t}(y)\Big)\\
&= \sum_{s=1}^{h_k}\sum_{t=1}^{h_l}e_{k,i,s}(x)e_{l,j',t}(z)|X|\langle
e_{k,j,s},e_{l,i',t}\rangle\\
&= \sum_{s=1}^{h_k}\sum_{t=1}^{h_l}
e_{k,i,s}(x)e_{l,j',t}(z)|X|\qbinom{n-2k}{j-k}q^{k(j-k)}\delta_{k,l}\delta_{j,i'}\delta_{s,t}\\
&=
\delta_{k,l}\delta_{j,i'}|X|\qbinom{n-2k}{j-k}q^{k(j-k)}\sum_{s=1}^{h_k}
e_{k,i,s}(x)e_{l,j',s}(z)\\
&= \delta_{k,l}\delta_{j,i'}|X|\qbinom{n-2k}{j-k}q^{k(j-k)}E_{k,i,j'}(x,z).
\end{align*}
We obtain, with $j=i'$, $j'=i$, $x=z\in X_i$, taking account of
$E_{l,j,i}(y,x)=E_{l,i,j}(x,y)$, 

\begin{equation*}
\sum_{y\in X_j} E_{k,i,j}(x,y)E_{l,i,j}(x,y)
=\delta_{k,l}|X|\qbinom{n-2k}{j-k}q^{k(j-k)} E_{k,i,i}(x,x).
\end{equation*}
The above identity becomes in terms of $P_{k,i,j}$
\begin{equation*}
\sum_{y\in X_j} P_{k,i,j}(i-|x\cap y|)P_{l,i,j}(i-|x\cap y|)
=\delta_{k,l}|X|\qbinom{n-2k}{j-k}q^{k(j-k)} P_{k,i,i}(0).
\end{equation*}
Taking account of \eqref{e3} and Lemma \ref{l2}, we 
obtain \eqref{e4}.
\end{proof}

To finish the proof of Proposition \ref{p1}, it remains to prove
that $P_{k,i,j}$ is a polynomial of degree at most $k$ in the
variable $[u]=[|x\cap y|]$. It follows from the reasons invoked
in \cite{Del2} in the case $i=j$ (see the proof of Theorem 5).
\end{proof}

\begin{remark}\label{remark2}
In the case $q=1$, i.e. the Hamming space, we could have followed the
same line as for the sphere in order to decompose $\CC(H_n)$ under the
action of $G$. We could have started from the
decomposition
of $\CC(H_n)$ \eqref{dec Hamming} under the action of
$\Gamma:=T\rtimes S_n=\Aut(H_n)$ and then we could  have decomposed each space $P_k$
under the action of $G=\Stab(0^n, \Gamma)$. But we have a
$G$-isomorphism from $\CC(X_w)=\CC(J_n^w)$ to $P_w$  given by:
\begin{align*}
\CC(J_n^w) &\to P_w \\
f &\mapsto \sum_{wt(y)=w} f(y)\chi_y
\end{align*}
Note that the inverse isomorphism is the Fourier transform on
$(\Z/2\Z)^n$. So we  pass from  one to the other decomposition of
$\CC(H_n)$ through Fourier transform.
\end{remark}

\section{An SDP upper bound for codes from positive definite functions}

In this section we want to explain how the computation of
the continuous $G$-invariant positive definite functions
on $X$ can be used for applications to coding theory.
In coding theory, it is of great importance to estimate the maximal
number of elements of a finite subset  $C$ of a space $X$, where $C$
is submitted to some constraints. Typically $X$ is a metric space
with $G$-invariant distance $d(x,y)$ and the constraints are related
to the values taken by the distance on pairs of elements of $C$. In
the following we concentrate on the basic case where the requirement
is that the distance takes non zero values at least equal to some minimum $\delta$.
We denote by $D$ the set of all values taken by $d(x,y)$ and we define $D_{\geq \delta}=D\cap [\delta,+\infty[$ and
\begin{equation*}
A(X,\delta):=\max\{ \card(C) \ :\ d(c,c')\geq \delta \text{ for all
}c\neq c', (c,c')\in C^2\}.
\end{equation*}
We first focus on an upper bound for $A(X,\delta)$, which is
obtained very obviously from the optimal value of the following program:

\begin{definition}
\begin{equation}\label{mXdelta}
\begin{array}{lll} 
m(X,\delta) & = \inf\big\{ \quad t : &  F\in \CC(X^2),\ \overline{F}=F,\ F \succeq 0 \\
&&    F(x,x)\leq t-1,\\
&&   F(x,y)\leq -1 \quad  d(x,y)\geq \delta \big\}\\
\end{array}
\end{equation}
\end{definition}
Then we obtain an upper bound for $A(X,\delta)$:

\begin{theorem}\label{bound}
\begin{equation*}
A(X,\delta)\leq m(X,\delta).
\end{equation*}
\end{theorem}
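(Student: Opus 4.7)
The plan is the standard SDP duality-style argument: test a feasible $F$ from the program \eqref{mXdelta} against the indicator data of an optimal code and exploit positive definiteness.

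First, I would fix an arbitrary code $C \subset X$ with $d(c,c') \geq \delta$ for all distinct pairs $(c,c') \in C^2$, and an arbitrary feasible $F$ for the program, with associated value $t$. The key quantity to look at is the double sum
\begin{equation*}
S := \sum_{(c,c') \in C^2} F(c,c').
\end{equation*}
Because $F$ is a (continuous, hermitian) positive definite function on $X$, applying property (1) of Definition \ref{pdf} with the points $x_i$ enumerating $C$ and all coefficients $\alpha_i = 1$ gives $S \geq 0$.

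Next I would bound $S$ from above using the constraints of \eqref{mXdelta}. Splitting the sum into diagonal and off-diagonal parts:
\begin{equation*}
S = \sum_{c \in C} F(c,c) + \sum_{\substack{(c,c') \in C^2 \\ c \neq c'}} F(c,c').
\end{equation*}
On the diagonal, $F(c,c) \leq t - 1$, and off the diagonal $d(c,c') \geq \delta$, so $F(c,c') \leq -1$. Hence
\begin{equation*}
S \leq |C|(t-1) - |C|(|C|-1) = |C|\bigl(t - |C|\bigr).
\end{equation*}
Combining with $S \geq 0$ yields $|C| \leq t$.

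Finally, this inequality holds for every feasible $F$ and every admissible $C$, so taking the infimum over $F$ (equivalently, over admissible values $t$) gives $|C| \leq m(X,\delta)$, and taking the supremum over $C$ gives $A(X,\delta) \leq m(X,\delta)$. I do not foresee a real obstacle here; the only thing to be careful about is that the program's infimum need not be attained, but the argument above bounds $|C|$ by every feasible $t$, which is exactly what the infimum requires.
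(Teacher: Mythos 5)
Your proposal is correct and is essentially the same as the paper's proof: test the positive definiteness of a feasible $F$ against the indicator vector of a code $C$ (all coefficients equal to $1$), split the resulting double sum into diagonal and off-diagonal terms, and use the two constraints to get $0 \leq |C|(t-1) - |C|(|C|-1)$, hence $|C|\leq t$. Your added remark about the infimum not necessarily being attained is a fair point of care, but it is handled implicitly in the paper the same way you handle it.
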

\begin{proof}
For a feasible solution  $F$, and for $C\subset X$ with $d(C)\geq
\delta$ we have 
\begin{equation*}
0\leq \sum_{(c,c')\in C^2} F(c,c') \leq (t-1)|C| -|C|(|C|-1)
\end{equation*}
thus $|C|\leq t$.
\end{proof}

Now the group $G$ comes into play. From a feasible solution $F$ one
can construct a $G$-invariant feasible solution $F'$ with
the same objective value:
\begin{equation*}
F'(x,y)=\int_G F(gx,gy) dg
\end{equation*}
thus we can add to the conditions defining the feasible solutions of
$m(X,\delta)$
that $F$ is $G$-invariant. Then we can apply Bochner characterization
of the $G$-invariant positive definite functions (Theorem \ref{Bochner
  pdf}). Moreover we have also seen in Theorem \ref{th uniform approx} that
if $X$ is a homogeneous space, the finite sums of type \eqref{finite pdf}
are arbitrary close for $\|\|_{\infty}$ to the $G$-invariant positive definite functions
on $X$, so we can replace $F$ by an expression of the form
\eqref{finite pdf}
in the SDP $m(X,\delta)$. Moreover, we replace $E_k(x,y)$ with its
expression $Y_k(u(x,y))$ in terms of the orbits of pairs and we take
account of the fact that $\overline{F}=F$.
All together, with the notations of subsection \ref{subsection Bochner pdf}  we obtain the (finite) semidefinite programs:

\begin{equation}\label{mXdelta-d}
\begin{array}{lll} 
m^{(d)}(X,\delta) & = \inf\big\{ \quad t : &  F_0\succeq 0,\dots,
F_k\succeq 0 ,\dots\\
&&   \sum_{k\geq 0} \langle F_k, \tilde{Y}_k(u(x,x)) \leq t-1,\\
&&  \sum_{k\geq 0} \langle F_k, \tilde{Y}_k(u(x,y))  \leq -1 \quad  d(x,y)\geq \delta \big\}\\
\end{array}
\end{equation}
where the matrices $F_k$ are real symmetric, with size $m_{d,k}$, 
and $\tilde{Y}_k(u(x,y))=Y_k(u(x,y))+\overline{Y_k(u(x,y))}$. We
insist that in the above program only a finite number of integers $k$
are to be taken account of because $m_{d,k}\neq 0$ for a finite number of integers $k$.
Thus we have $m(X,\delta)\leq m^{(d)}(X,\delta)$ and 
\begin{equation*}
\lim_{d\to+\infty}m^{(d)}(X,\delta) =m(X,\delta).
\end{equation*}

\subsection{The $2$-point homogeneous spaces}\label{2-homogeneous 3}
We recall that  a sequence of orthogonal functions $(P_k)_{k\geq 0}$
is associated to $X$ such that the $G$-invariant positive definite
functions 
have the expressions
\begin{equation*}
F(x,y)=\sum_{k\geq 0} f_k P_k(d(x,y)) \text{ with } f_k\geq 0.
\end{equation*}
Then 
\begin{equation*}
\begin{array}{lll}
& m(X,\delta)=\inf \ \{\ 1+\sum_{k\geq 1} f_k \ : \ & f_k\geq 0,\\
        &&1+\sum_{k\geq 1} f_kP_k(i) \leq 0 \text{ for all }i\in
D_{\geq \delta}\ 
 \}
\end{array}
\end{equation*}

We restate Theorem \ref{bound}
 in the classical form of Delsarte linear programming bound:

\begin{theorem}\label{lp-2}
Let $F(t)=f_0+f_1P_1(t)+\dots +f_dP_d(t)$. If $f_k\geq 0$ for all
$0\leq k\leq d$ and $f_0>0$, and if $F(t)\leq 0$ for all $t\in D_{\geq
  \delta}$,
then 
\begin{equation*}
A(X,\delta)\leq \frac{f_0+f_1+\dots +f_d}{f_0}.
\end{equation*}
\end{theorem}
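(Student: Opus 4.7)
The plan is to carry out the standard Delsarte double-counting argument, which specializes the general SDP bound of Theorem \ref{bound} to the $2$-point homogeneous setting, where $G$-invariant positive definite functions reduce by Bochner to non-negative combinations of the zonal functions $P_k$. Let $C \subset X$ be a code of size $N$ with $d(c,c') \geq \delta$ for all distinct $c, c' \in C$. The quantity to estimate in two ways is
$$S := \sum_{(c,c') \in C^2} F(d(c,c')) = \sum_{k=0}^d f_k \sum_{(c,c') \in C^2} P_k(d(c,c')).$$

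For the lower bound, I would isolate the $k=0$ term, which contributes exactly $f_0 N^2$ since $P_0 \equiv 1$; for each $k \geq 1$, the fundamental positive-definiteness inequality $\sum_{(c,c') \in C^2} P_k(d(c,c')) \geq 0$ recalled in the introduction (and which is precisely the statement that $P_k(d(x,y))$ is a positive definite kernel), combined with the hypothesis $f_k \geq 0$, makes each remaining term non-negative. Hence $S \geq f_0 N^2$. For the upper bound I would split $S$ into diagonal ($c = c'$) and off-diagonal ($c \neq c'$) parts: the diagonal yields $N \cdot F(0) = N(f_0 + f_1 + \cdots + f_d)$ under the normalization $P_k(0) = 1$ implicit in the statement (the same convention tacitly used in subsection \ref{2-homogeneous 3} where the SDP appears as $m(X,\delta) = \inf\{1 + \sum_{k \geq 1} f_k : \ldots\}$), while every off-diagonal pair satisfies $d(c,c') \in D_{\geq \delta}$, whence $F(d(c,c')) \leq 0$ by hypothesis; the off-diagonal contribution is thus non-positive. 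Combining gives $f_0 N^2 \leq S \leq N(f_0 + \cdots + f_d)$, and dividing by $f_0 N > 0$ (using $f_0 > 0$ and the vacuous case $N = 0$) yields the bound.

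There is no genuine obstacle; the only point requiring care is the normalization convention for the $P_k$, so that $F(0)$ indeed equals $f_0 + f_1 + \cdots + f_d$. One could instead state the conclusion invariantly as $A(X,\delta) \leq F(0)/f_0$ and then specialize using $P_k(0) = 1$. As a sanity check, the same bound can be recovered by exhibiting an explicit feasible solution of the SDP \eqref{mXdelta}: rescale $F$ so that $\max_{t \in D_{\geq \delta}} F(t) = -1$ and observe that $F$ is then feasible with objective value $F(0) + 1$, reproducing the inequality via Theorem \ref{bound}.
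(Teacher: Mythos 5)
Your proof is correct and takes essentially the same route as the paper: Theorem \ref{lp-2} is presented there as a "restatement" of Theorem \ref{bound}, whose proof is precisely the double-counting inequality $0\le\sum_{(c,c')\in C^2}F(c,c')\le(t-1)|C|-|C|(|C|-1)$, which in the multiplicity-free ($2$-point homogeneous) setting unpacks to the lower bound $f_0|C|^2$ from the $k\ge 1$ positive-definiteness inequalities and the upper bound $|C|F(0)$ from the diagonal/off-diagonal split you give. You are also right to flag the normalization: the paper's earlier remark $P_k(0)=d_k$ conflicts with the formula $m(X,\delta)=\inf\{1+\sum_{k\ge1}f_k:\dots\}$ and with the stated conclusion, which both tacitly take $P_k(0)=1$ (as in the $S^7$ example where the bound appears as $P(1)/f_0$ with Gegenbauer normalization); your invariant formulation $A(X,\delta)\le F(0)/f_0$ is the clean way to state it.
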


\noindent {\bf Example:} $X=S^7$, $d(x,y)=\theta(x,y)$, $d(C)=\pi/3$. This
  value of the minimal angle corresponds to the kissing number
  problem.
A very good kissing configuration is well known: it is the root system
$E_8$, also equal to the set of minimal vectors of the $E_8$
lattice. It has $240$ elements and the inner products take the values
$\pm 1$, $0$, $\pm 1/2$. We recall that the zonal polynomials associated to
the unit sphere are proportional to the Gegenbauer polynomials
$P_k^n$ in the variable $x\cdot y$. If $P(t)$ obtains the tight bound
$240$ in Theorem \ref{lp-2}, then we must have $P(t)\leq 0$ for $t\in
[-1,1/2]$ and $P(-1)=P(\pm 1/2)=P(0)=0$ (as part of the {\it complementary
slackness conditions}).
The simplest possibility is $P=(t-1/2)t^2(t+1/2)^2(t+1)$. One can
check that 
\begin{equation*}
\frac{320}{3}P=P^8_0 +\frac{16}{7}P^8_1 +\frac{200}{63}P^8_2
+\frac{832}{231}P^8_3 +\frac{1216}{429}P^8_4+\frac{5120}{3003}P^8_5
+\frac{2560}{4641}P^8_6
\end{equation*}
and that 
\begin{equation*}
\frac{P(1)}{f_0}=240.
\end{equation*}

Thus the kissing number in dimension $8$ is equal to $240$. This
famous proof
is due independently to Levenshtein \cite{Le} and Odlysko and Sloane \cite{OS}. A proof of
uniqueness  derives from the analysis of this bound (\cite{BS}).
For the kissing number problem, this miracle reproduces only for
dimension $24$ with the set of shortest  vectors of the Leech
lattice. For the other similar cases in $2$-point homogeneous spaces
we refer to \cite{L}.

\smallskip
It is not always possible to apply the above ``guess of a good
polynomial'' method.
In order to obtain a more systematic way to apply Theorem \ref{lp-2},
one can of course restrict the degrees of the  polynomials
to some reasonable value, but needs also to overcome
the problem that the conditions $F(t)\leq 0$ for $t\in [-1,1/2]$
represent infinitely many linear inequalities. One possibility is to
sample the
interval and then a posteriori study the extrema of the approximated optimal solution
found by an algorithm that solves the  linear program with finitely
many unknowns and inequalities. It is the method adopted in
\cite{OS}, where upper bounds for the kissing number in dimension $n\leq
30$ have been computed. We want to point out that polynomial
optimization methods using SDP give another way to handle this
problem. A polynomial $Q(t)\in \R[t]$ is said to be a sum of squares
if $Q=\sum_{i=1}^r Q_i^2$ for some $Q_i\in \R[t]$. Being a sum of
squares is a SDP condition since it amounts to ask that 
\begin{equation*}
Q=(1,t,\dots, t^k) F (1,t,\dots, t^k)^*\text{ with } F\succeq 0.
\end{equation*} 
Here $k$ is an upper bound for the degrees of the polynomials $Q_i$.
Now we can relax the condition that $F(t)\leq 0$ for $t\in [-1,1/2]$
to
$F(t)=-Q(t)-Q'(t)(t+1)(t-1/2)$ with $Q$ and $Q'$ being sums of
squares.
A theorem of Putinar claims that in fact the two conditions are
equivalent (but the degree of the polynomials
under the squares are unknown).

\smallskip
A very nice achievement of the linear programming method in $2$-point
homogeneous spaces is the derivation of an asymptotic upper bound for the rate of codes 
(i.e. for the quotient $\log \card(C)/\dim(X)$) obtained from 
the so-called Christoffel-Darboux kernels. This method was first
discovered for the Hamming and Johnson spaces \cite{MRRW} and then
generalized to the unit sphere \cite{KL} and to all other $2$-point
homogeneous spaces
\cite{L}. It happens to be the best known upper bound for the
asymptotic range. In \cite{KL} an asymptotic bound is derived for the
density of sphere packings in Euclidean space which is also the best known.

\subsection{Symmetric spaces} 
For these spaces, which are not $2$-point homogeneous, there may be
several distance functions of interest  which are $G$-invariant. For example, the
analysis of performance of codes in the Grassmann spaces for the MIMO
channel
\cite{Creignou} involves both the
chordal distance:
\begin{equation*}
d_c(p,q):=\sqrt{\sum_{i=1}^m \sin^2 \theta_i(p,q)}
\end{equation*}
and the product pseudo distance (it is not a distance in the metric sense):
\begin{equation*}
d_p(p,q):=\prod_{i=1}^m \sin \theta_i(p,q).
\end{equation*}
The reformulation of Theorem \ref{bound} leads to a theorem
of the type \ref{lp-2} for any symmetric function
of the $y_i:=\cos\theta_i$ with the Jacobi polynomials
$P_{\mu}(y_1,\dots,y_m)$ instead of the $P_k$.
For a general symmetric space, a theorem
of the type \ref{lp-2}  is obtained, where the sequence of polynomials
$P_k(t)$ is replaced by a sequence of multivariate polynomials, and
the set $D_{\delta}$ is replaced by some compact subspace of the
domain of 
the variables of the zonal functions, i.e. of the orbits of $G$ acting
on pairs.
Then one can derive
explicit upper bounds, see \cite{T} for the permutation codes,
\cite{B1}
for the real Grassmann codes, \cite{Roy2} and \cite{Creignou} for the
complex Grassmann codes, \cite{CD} for the unitary codes,
\cite{Barg} and 
\cite{MS} for the ordered codes. Moreover an asymptotic bound is
derived in \cite{B1} and \cite{Barg}.

\subsection{Other spaces with true SDP bounds}
An example where the bound \eqref{mXdelta} does not boil down to an LP
is provided by the spaces $\PP(n,q)$ endowed with the distance
\eqref{distance}
for which the matrices $E_k$ are computed in section \ref{Hamming 3}
(see \cite{BV4}). 
In this case the group $G$ is the largest group that acts on the SDP.

Indeed, it is useless to restrict the symmetrization of the program
\eqref{mXdelta} to some subgroup of the largest group $G$ that preserves
$(X,d)$.
However, another interesting possibility is to change the restricted
condition
$d(x,y)\geq \delta$ in $A(X,\delta)$ for the conditions:
\begin{equation}\label{cap}
d(x,y)\geq \delta, \ d(x,e)\leq r, \ d(y,e)\leq r
\end{equation}
where $e\in X$ is a fixed point. Then the new $A(X,e,r,\delta)$ is the
maximal number of elements of a code with minimal distance $\delta$ in the ball
$B(e,r)\subset X$. Here the group that leaves the program invariant is
$\Stab(e,G)$. The corresponding bounds for codes in spherical caps
where computed in \cite{BV3} using the expressions of the zonal
matrices of \ref{sphere 4}.

\medskip
We end this section with some comments on these SDP bounds. We have
indeed generalized the framework of the classical LP bounds but the
degree of understanding of the newly defined bounds is far from the
one of the classical LP bounds after the work done since \cite{Del2},
see e.g. \cite{L}.
It would be very interesting to have a better understanding of the
best functions $F$ that give the best bounds, to analyse explicit
bounds and to analyse the asymptotic range, although partial results
in these directions have already been obtained. The fact that one has
to deal with multivariate polynomials introduces great difficulties
when one tries to follow the same lines as for the classical one variable cases.
A typical example is provided by the configuration of $183$ points on
the half sphere that seems numerically to be an optimal configuration
for the one sided kissing number, and for which we failed to find the
proper function $F$ leading to a tight bound (see \cite{BV4}).

\section{Lov\'asz theta}

In this section we want to establish a link between the program
\eqref{mXdelta} and the so-called Lov\'asz theta number. This number
  was introduced by Lov\'asz in the seminal
paper \cite{Lovasz} in order to compute the capacity of the
pentagon. This remarkable result is the first of a long list of
applications. This number is the optimal  solution of a semidefinite
program, thus is ``easy to calculate'', and offers an approximation of
invariants of graphs that are ``hard to calculate''. Since then many
other SDP relaxations of hard problems have been proposed 
in graph theory and in other domains.

\subsection{Introduction to Lov\'asz theta number}

A graph $\Gamma=(V,E)$ is a finite set $V$ of vertices together with a
finite set $E$ of edges, i.e. $E\subset V^2$. An independence set $S$ is a
subset of $V$ such that $S^2\cap E=\emptyset$. The independence number
$\alpha(\Gamma)$ is the maximum of the number of elements of an independence
set.
It is a hard problem to determine the independence number of a graph. The
connection with coding theory is as follows: a code $C$ of a finite
space $X$ with minimal
distance $d(C)\geq \delta$ is an independence set of the graph $\Gamma(X,\delta)$ 
which vertex set is equal to $X$ and which edge set is equal to
$E_{\delta}:=\{(x,y)\in X^2\ :\ d(x,y)\in ]0,\delta[\}$.
Thus the determination of $A(X,\delta)$ is the same as the
determination of the independence number of this graph.

Among the many definitions of Lov\'asz theta, we choose one which
generalizes nicely to infinite graphs. For $S\subset V$, let $\1_S$ be
the characteristic function of $S$. Let
\begin{equation*}
M(x,y):=\frac{1}{|S|} \1_S(x)\1_S(y).
\end{equation*}
The following properties hold for $M$:

\begin{enumerate}
\item $M\in \R^{n\times n}$, where $|V|=n$, and $M$ is symmetric
\item $M\succeq 0$
\item $\sum_{x\in V} M(x,x)=1$
\item $M(x,y)=0$ if $(x,y)\in E$
\item $\sum_{(x,y)\in V^2} M(x,y)=|S|$.
\end{enumerate}

\begin{definition}
The theta number of the graph $\Gamma=(V,E)$ with $V=\{1,2,\dots,n\}$  is 
\begin{equation}\label{theta primal}
\begin{array}{lll} 
  \vartheta(\Gamma) & = \max\big\{  \sum_{i,j} B_{i,j} : &
  B\in \R^{n\times n},\ B \succeq 0 \\
&&   \sum_i B_{i,i}=1,\\
&&   B_{i,j}=0 \quad  (i,j)\in E\big\}\\
\end{array}
\end{equation}
\end{definition}

The dual program for $\vartheta$ has the same optimal value and is
equal to: 
\begin{equation}\label{theta dual}
\begin{array}{lll} 
\vartheta(\Gamma) & = \min\big\{ \quad t : &  B \succeq 0 \\
&&    B_{i,i}=t-1,\\
&&   B_{i,j}=-1 \quad  (i,j)\notin E\big\}\\
\end{array}
\end{equation}
The complementary graph of $\Gamma$ is denoted
$\overline{\Gamma}$. The chromatic number $\chi(\Gamma)$ is the
minimum number of colors needed to color the vertices so that no two
connected vertices receive the same color. In other words it is a
minimal partition of the vertex set with  independence sets.
Then the so-called Sandwich theorem holds:
\begin{theorem}
\begin{equation*}
\alpha(\Gamma)\leq \vartheta(\Gamma)\leq \chi(\overline{\Gamma})
\end{equation*}
\end{theorem}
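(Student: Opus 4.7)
The plan is to prove the two inequalities independently, each by exhibiting a feasible solution to one of the two equivalent SDPs \eqref{theta primal} and \eqref{theta dual} (their equality of optimal value being granted by the excerpt).

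For the lower bound $\alpha(\Gamma)\leq \vartheta(\Gamma)$, I would work with the primal formulation \eqref{theta primal}. Starting from an independent set $S\subset V$ of maximum size, the matrix $M(x,y)=\tfrac{1}{|S|}\1_S(x)\1_S(y)$ already introduced just before the definition of $\vartheta$ is the natural candidate. It is symmetric and positive semidefinite as the rank-one matrix $\tfrac{1}{|S|}\1_S\1_S^\top$; its trace equals $\tfrac{1}{|S|}\sum_x \1_S(x)^2 = 1$; and if $(x,y)\in E$ then $x$ and $y$ cannot both belong to $S$ (since $S^2\cap E=\emptyset$), so $M(x,y)=0$. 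Its objective value equals $\sum_{x,y}M(x,y)=\tfrac{1}{|S|}\bigl(\sum_x \1_S(x)\bigr)^2 = |S|$, giving $\vartheta(\Gamma)\geq |S|=\alpha(\Gamma)$.

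For the upper bound $\vartheta(\Gamma)\leq \chi(\overline{\Gamma})$, I would use the dual formulation \eqref{theta dual}. Setting $k:=\chi(\overline{\Gamma})$ and fixing a proper $k$-coloring of $\overline{\Gamma}$ with color classes $C_1,\dots,C_k$, observe that each $C_c$ is independent in $\overline{\Gamma}$, so any two distinct vertices not joined by an edge of $\Gamma$ must belong to different color classes. Associating to each vertex $i$ of color $c(i)$ the vector $v_i\in \R^k$ whose $c(i)$-th coordinate equals $k-1$ and whose other $k-1$ coordinates all equal $-1$, a short inner-product computation yields $v_i\cdot v_j = k(k-1)$ when $c(i)=c(j)$ and $v_i\cdot v_j = -k$ otherwise. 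The matrix $B_{i,j}:=\tfrac{1}{k}v_i\cdot v_j$ is then a Gram matrix, hence $B\succeq 0$, with $B_{i,i}=k-1$ and $B_{i,j}=-1$ whenever $c(i)\neq c(j)$, in particular whenever $(i,j)\notin E$ and $i\neq j$. This makes $B$ feasible for \eqref{theta dual} at $t=k$, which gives $\vartheta(\Gamma)\leq k=\chi(\overline{\Gamma})$.

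The only nontrivial step is guessing the vectors $v_i$ for the second inequality: the Gram-matrix idea is standard, but one has to translate the combinatorial data of the coloring into geometry with exactly the right diagonal value $k-1$ and off-color inner product $-k$. Once the recipe is written down, all verifications collapse to elementary arithmetic together with the definitions of the two programs.
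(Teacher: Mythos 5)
Your proof is correct and follows the same two-sided strategy as the paper: the rank-one matrix $M=\tfrac{1}{|S|}\1_S\1_S^{\top}$ for the primal program \eqref{theta primal} on the left, and a coloring-built matrix for the dual \eqref{theta dual} on the right. For the first inequality you and the paper do the same thing.

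For the second inequality your construction is actually \emph{cleaner and, strictly speaking, corrects the paper}. The paper writes down a matrix $C$ with $C_{i,i}=k-1$, $C_{i,j}=-1$ when $c(i)\neq c(j)$, and $C_{i,j}=0$ when $c(i)=c(j)$, $i\neq j$. But that matrix need not be positive semidefinite: take $k=2$ and color classes $\{1,2\}$ and $\{3,4\}$, so
\begin{equation*}
C=\begin{pmatrix}1&0&-1&-1\\0&1&-1&-1\\-1&-1&1&0\\-1&-1&0&1\end{pmatrix},
\qquad \1^{\top}C\1=-4<0.
\end{equation*}
The issue is precisely the free entries on the $\Gamma$-edges inside a color class. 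Your choice is to set those entries to $k-1$ instead of $0$, which yields $B=kP-J$ where $P_{ij}=1$ iff $c(i)=c(j)$; your Gram-matrix realization with $v_i=ke_{c(i)}-\1$ exhibits $B=\tfrac1k V^{\top}V\succeq 0$ immediately, while the constraints $B_{i,i}=k-1$ and $B_{i,j}=-1$ for $(i,j)\notin E$ (since a non-edge of $\Gamma$ is an edge of $\overline{\Gamma}$, forcing $c(i)\neq c(j)$) are exactly met. So the objective value $t=k=\chi(\overline{\Gamma})$ is attained by a genuinely feasible point. In short, you followed the paper's plan but supplied the correct feasible matrix and a transparent PSD certificate where the paper's stated matrix fails.
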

\begin{proof} The discussion prior to the theorem proves the first
  inequality.
For the second inequality, let $c:V \to \{1,\dots,k\}$ be a coloring
of $\overline{\Gamma}$. Then the matrix $C$ with $C_{i,j}=-1$ if
$c(i)\neq c(j)$, $C_{i,i}=k-1$ and $C_{i,j}=0$ otherwise provides a
feasible solution of \eqref{theta dual}.
\end{proof}

\subsection{Symmetrization and the $q$-gones}
Now we assume that $G$ is (a subgroup of) the automorphism group
$\Aut(\Gamma)$ of the graph. Then, $G$ acts also on the above defined
semidefinite programs. Averaging on $G$ allows to construct a
$G$-invariant optimal feasible solution $B'$ from any optimal feasible
solution $B$ with the same objective value:
\begin{equation*}
B'_{i,j}:=\frac{1}{|G|}\sum_{g\in G} B_{g(i),g(j)}.
\end{equation*}
Thus one can restrict in the above programs to the $G$-invariant
matrices. Then one can exploit the method developed in previous
sections, in order to obtain a description of the $G$-invariant
$B\succeq 0$ form the decomposition of the space $\CC(V)$ under the
action of $G$. We illustrate the method in the case of the $q$-gone $C_q$.
There we have $V=G=\Z_q$ the group of integers modulo  $q$. Let $\zeta_q$
be a fixed primitive root of $1$ in $\C$. Let $\chi_k:\Z_q\to \C^*$ be
defined  
by $\chi_k(x)=\zeta_q^{kx}$. The characters
of $\Z_q$ are the $\chi_k$ for $0\leq k\leq q-1$ and we have the
decomposition
\begin{equation*}
\CC(\Z_q)=\oplus_{k=0}^{q-1} \C \chi_k.
\end{equation*}
According to Theorem \ref{Bochner pdf}, the $G$-invariant positive definite functions on $V$ are
exactly the functions $F(x,y)$ of the form:
\begin{equation*}
F(x,y)=\sum_{k=0}^{q-1} f_k
\chi_k(x)\overline{\chi_k(y)}=\sum_{k=0}^{q-1} f_k \zeta_q^{k(x-y)}
\end{equation*}
with $f_k\geq 0$. The ones taking real values have the form
\begin{equation*}
F(x,y)=\sum_{k=0}^{\lfloor q/2\rfloor} f_k \cos((x-y)2k\pi/q),\quad f_k\geq 0.
\end{equation*}
When one replaces in $\vartheta$ the expression $B_{i,j}=F(i,j)$, the
SDP transforms into a LP on the variables $f_k$. More
precisely,
we compute $\sum_{(x,y)\in V^2} F(x,y)=q^2f_0 $ and $\sum_{x\in V}
F(x,x)=q\sum_k f_k$. Thus we obtain (after a change of $qf_k$ to
$f_k$):
\begin{equation*}
\begin{array}{lll} 
  \vartheta(C_q) & = \max\big\{  qf_0 : &
  f_k\geq 0,\ 0\leq k\leq \lfloor q/2\rfloor, \\
&&  \displaystyle \sum_{k=0}^{\lfloor q/2\rfloor} f_k=1,\\
&&  \displaystyle \sum_{k=0}^{\lfloor q/2\rfloor} f_k\cos(2k\pi/q)=0\\
\end{array}
\end{equation*}
The optimal value of this very simple linear program, is obtained for 
$f_1=f_2=\dots=f_{\lfloor q/2\rfloor-1}=0$, and equals
\begin{equation*}
\vartheta(\C_q)=\left\{ \begin{array}{ll}
\frac{q}{2} &\text{ if } q \text{ is even }\\
\frac{q\cos (\pi/q)}{1+\cos(\pi/q)} &\text{ if }q \text{ is odd }.
\end{array}
\right.
\end{equation*}
Note that when $q$ is even, the independence number of the $q$-gone is
exactly $q/2$. If the independence number of a graph as simple as the
$q$-gone is not a great deal (it is of course equal to $\lfloor
q/2\rfloor$), a more challenging issue is to determine its capacity. In
general, the capacity $C(\Ga)$ of a graph $\Ga$ is defined to be
\begin{equation*}
C(\Ga)=\lim_{n\to+\infty} {\alpha(\Ga^n)}^{1/n}.
\end{equation*}
Here the graph  $\Ga^n$ is defined as follows:  its vertex set is equal
to $V^n$ and an edge connects $(x_1,\dots,x_n)$ and $(y_1,\dots,y_n)$
iff
for all $1\leq i\leq n$ either $x_i=y_i$ or $(x_i,y_i)\in E$. 
Introduced by Shannon in 1956, this number represents the effective
size of an alphabet used to transmit information through the channel associated to the graph $\Ga$ (where
two symbols are undistinguable if they are connected by an edge). If
the capacity of a graph is in general very difficult to calculate, the
theta number of a graph provides an upper bound for it 
because $\vartheta(\Ga^n)= \vartheta(\Ga)^n$ (see \cite{Lovasz}). This upper
bound is an equality for the pentagon since on one hand $\vartheta(C_5)=\sqrt{5}$
from our previous computation, and on the other hand it is easy to see that
$\alpha((C_5)^2)=5$ (while $\alpha(C_5)=2$); this is the way taken by
Lov\'asz in \cite{Lovasz} to prove that $C(C_5)=\sqrt{5}$. 
The determination of the capacity of the $q$-gone for $q$ odd and
greater than $5$ is still opened.

\subsection{Relation with Delsarte bound and with $m(X,\delta)$}

We introduce a slightly stronger bound for $\alpha(\Gamma)$ with $\vartheta'$ and its dual form:
\begin{equation}\label{theta prime primal}
\begin{array}{lll} 
  \vartheta'(\Gamma) & = \max\big\{  \sum_{i,j} B_{i,j} : &  B \succeq
  0,\ B\geq 0 \\
&&   \sum_i B_{i,i}=1,\\
&&   B_{i,j}=0 \quad  (i,j)\in E\big\}
\end{array}
\end{equation}

\begin{equation}\label{theta prime dual}
\begin{array}{lll} 
\vartheta'(\Gamma) & = \min\big\{ \quad t : &  B \succeq 0 \\
&&    B_{i,i}\,\leq\, t-1,\\
&&   B_{i,j}\,\leq \,-1 \quad  (i,j)\notin E\big\}\\
\end{array}
\end{equation}
Since $M(x,y)\geq 0$, we still have that $\alpha(\Gamma)\leq
\vartheta'(\Gamma)$.
Again one can restrict in the above programs to the $G$-invariant
matrices. It was recognized independently by 
McEliece, Rodemich, Rumsey, and Schrijver \cite{Schrijver2} that
Delsarte bound of Theorem \ref{lp-2} for $A(H_n,\delta)$ is equal to $\vartheta'$ for the
graph $\Gamma(X,\delta)$, once the feasible set is restricted to the
$\Aut(H_n)$-invariant matrices, and similarly for the other finite $2$-point
homogeneous spaces. Indeed, by virtue of Theorem \ref{Bochner pdf}, the
matrices $B$ turn to be of the form
$B(x,y)=\sum_{k\geq 0} f_k P_k(d(x,y))$. This symmetrization process
is of great importance, not only because it has the great advantage to
change an SDP to an LP, but also because it does change the
complexity of the problem. Indeed, there are algorithms with
polynomial complexity  that do compute approximations of the optimal
value of SDP's, thus algorithms with polynomial complexity {\em in the
  number of vertices} of $\Gamma$ for $\vartheta$. But the graphs
arising from coding theory have in general an exponential number of
vertices, e.g. $2^n$ for the Hamming graph. It is important to insist
that the symmetrized theta has polynomial complexity in $n$.

Now we can see that the program $m(X,\delta)$ \eqref{mXdelta} is a natural
generalization of $\vartheta'$ for  metric spaces under the assumptions of
Section \ref{Harmonic analysis}. We refer to \cite{BNOV}
for a more general discussion  about generalized theta where also
chromatic numbers are involved.

\section{Strengthening the LP bound for binary codes}

In this section we explain how the zonal matrices $E_k(x,y)$ related
to the binary Hamming space computed
in \ref{Hamming 3} are exploited in \cite{Schrijver} in order to
strengthen the LP bound. We shall work with the primal programs so we
start to recall the primal version of \eqref{mXdelta} in the case of
the Hamming space.

We recall that  the sequence of orthogonal functions $(P_k)_{0\leq
  k\leq n}$
with $P_k=K_k$ the Krawtchouk polynomials 
is associated to $H_n$ such that $P_k(d(x,y))\succeq 0$. As a
consequence, we have for all $k\geq 0$
\begin{equation*}
\sum_{(c,c')\in C^2} P_k(d(c,c'))\geq 0.
\end{equation*}
We introduce the variables $x_i$, for $i\in [0\dots n]$
\begin{equation}\label{var}
x_i:=\frac{1}{\card(C)}\card\{(c,c')\in C^2 \ :\ d(c,c')=i\}.
\end{equation}
They satisfy the properties:
\begin{enumerate}
\item $x_0=1$
\item $x_i\geq 0$
\item $\sum_{i} x_iP_k(i)\geq 0$ for all $k\geq 0$
\item $x_i=0$ if $i\in [1\dots \delta-1]$
\item $\card(C)=\sum_{i} x_i$.
\end{enumerate}
With these properties which are linear inequalities, we obtain the following
linear program which is indeed the dual of \eqref{mXdelta}:

\begin{equation*}
\begin{array}{lll}
& \sup\  \{\  1+\sum_{i=\delta}^n x_i \ : \ & x_i\geq 0,\\
        &&1+\sum_{i=\delta}^n x_iP_k(i)\geq 0\text{  for all } 1\leq
  k\leq n\ \}\\
\end{array}
\end{equation*}
where we have taken into account $P_0=1$.

We recall that to every $0\leq k\leq
\lfloor \frac{n}{2}\rfloor$,   we have associated a  matrix
$E_k(x,y)\succeq 0$ of size $n-2k+1$. In particular, for all
$C\subset H_n$ (see the remark \ref{remark}),
\begin{equation*}
\sum_{(c,c')\in C^2} E_k(c,c')\succeq 0.
\end{equation*}
These constraints are not interesting for pairs because they are not
stronger than the linear inequalities coming from the Krawtchouk polynomials. They are
only interesting if triples of points are involved: namely we associate
to
$(x,y,z)\in H_n^3$ the matrices 
\begin{equation*}
F_k(x,y,z):=E_k(x-z,y-z).
\end{equation*}
We have for all
$C\subset H_n$, and for all $z\in H_n$,
\begin{equation*}
\sum_{(c,c')\in C^2} F_k(c,c',z)\succeq 0
\end{equation*}
which leads to the two positive semidefinite conditions:
\begin{equation}\label{eq1}
\left\{
\begin{array}{ll}
&\sum_{(c,c',c'')\in C^3} F_k(c,c',c'')\succeq 0\\
&\sum_{(c,c')\in C^2, \ c''\notin  C} F_k(c,c',c'')\succeq 0
\end{array}
\right.
\end{equation}
Theorem  \ref{p1}, expresses the coefficients of $E_k(x-z,y-z)$ in
terms of $wt(x-z)$, $wt(y-z)$, $wt(x-y)$; so with 
$a:=d(y,z)$, $b:=d(x,z)$,
$c:=d(x,y)$, we have for some matrices $T_k(a,b,c)$,
\begin{equation*}
F_k(x,y,z)=T_k(a,b,c).
\end{equation*}
We introduce the unknowns $x_{a,b,c}$ of the SDP.
Let
\begin{equation*}
\Omega:=\Big\{(a,b,c) \in [0\dots n]^3 : 
\begin{array}{ll}
& a+b+c\equiv 0\mod 2\\
&a+b+c\leq 2n\\
&c\leq a+b\\
&b\leq a+c\\
&a\leq b+c
\end{array}\Big\}
\end{equation*}
It is easy to check that $\Omega=\{(d(y,z), d(x,z),
d(x,y))\ :\ (x,y,z)\in H_n^3\}$. Let, for $(a,b,c)\in \Omega$, 
\begin{equation*}
x_{a,b,c}:=\frac{1}{\card(C)} \card\{(x,y,z)\in C^3:
d(y,z)=a, d(x,z)=b, d(x,y)=c\}.
\end{equation*}
Note that 
\begin{equation*}
x_{0,c,c}=\frac{1}{\card(C)} \card\{(x,y)\in C^2:
d(x,y)=c\}
\end{equation*}
thus the old  variables $x_i$  \eqref{var} of the linear program are part
of these new variables.
We need a last notation: let 
\begin{equation*}
\begin{array}{ll}
t(a,b,c)&:=\card\{z \in H_n : d(x,z)=b \text{ and }d(y,z)=a\}\text{ for
} d(x,y)=c\\
&=\binom{c}{i}\binom{n-c}{a-i}
\text{ where } a-b+c=2i 
\end{array}
\end{equation*}
Then, if $C$ is a binary code with minimal distance at least equal to
$\delta$,
the following inequalities hold for $x_{a,b,c}$ :
\begin{enumerate}
\item $x_{0,0,0}=1$
\item $x_{a,b,c}\geq 0$
\item $x_{a,b,c}=x_{\tau(a),\tau(b),\tau(c)}$ for all permutation
  $\tau$ of $\{a,b,c\}$
\item $x_{a,b,c}\leq t(a,b,c)x_{0,c,c}$, $x_{a,b,c}\leq
  t(b,c,a)x_{0,a,a}$, $x_{a,b,c}\leq t(c,a,b)x_{0,b,b}$.
\item $\sum_{a,b,c}  T_k(a,b,c)x_{a,b,c}\succeq 0$ for all $0\leq
  k\leq \lfloor \frac{n}{2}\rfloor$
\item $\sum_{a,b,c} T_k(a,b,c)(t(a,b,c)x_{0,c,c}- x_{a,b,c}) \succeq
  0$
for all $0\leq
  k\leq \lfloor \frac{n}{2}\rfloor$
\item $x_{a,b,c}=0$ \text{ if } $a$, $b$ or $c\in ]0,\delta[$.
\item $\card(C)=\sum_{c} x_{0,c,c}$.
\end{enumerate}
Conditions (5) and (6) are equivalent to \eqref{eq1}.
Condition (7) translates the assumption that $d(C)\geq \delta$.
Thus an upper bound on $\card(C)$ is obtained with the optimal value
of the program that maximizes $\sum_{c} x_{0,c,c}$ under the
constraints
(1) to (7). 
This upper bound is at least as good as the LP bound because the
SDP program does contain the LP program of \ref{2-homogeneous
  3}. Indeed,
the sum of the two SDP conditions \eqref{eq1} is equivalent to
\begin{equation*}
\sum_{z\in H_n} E_k(x-z,y-z)\succeq 0.
\end{equation*}
We claim that this set of conditions when $k=0,1,\dots,
\lfloor\frac{n}{2} \rfloor $  is equivalent to the set of conditions
$P_k(d(x,y))\succeq 0$
for $k=0,\dots,n$. 
Indeed let $B_k(x,y):=\sum_{z\in H_n} E_k(x-z,y-z)$. Up to a change of
$B_k(x,y)$ to $AB_k(x,y)A^*$, we assume that $E_k$ was constructed
using the decomposition of $\CC(H_n)$ first under $\Gamma:=T\rtimes S_n=\Aut(H_n)$ then under
$G$ (see Remark \ref{remark2}). Clearly $B_k$ is
$\Gamma$-invariant. Since $x\to E_{k,i,j}(x,y)\in P_i$ and $P_i$ is a
$\Gamma$-module,
also $x\to B_{k,i,j}(x,y)\in P_i$ and similarly $y\to
\overline{B_{k,i,j}(x,y)}\in P_j$. But $P_i$ and $P_j$ are non
isomorphic $\Ga$-modules for $i\neq j$ thus $B_{k,i,j}(x,y)=0$ for
$i\neq j$. Since $P_i$ is $\Ga$-irreducible,
$B_{k,i,i}(x,y)=\lambda_iP_i(d(x,y))$ for some $\lambda_i>0$ that can
be computed with $B_k(x,x)$.
So we have proved that the linear program associated to $H_n$ like in
\ref{2-homogeneous 3} is contained in the SDP program obtained from
the above conditions (1) to (7).
Moreover it turns out that in some explicit cases of small dimension 
 the SDP bound is strictly better than the LP bound (see
\cite{Schrijver}).

\smallskip
A similar strengthening of the LP bound for the Johnson space and for
the spaces of non binary codes where obtained in \cite{Schrijver} and \cite
{GST}. In the case of the spherical codes, for the same reasons as for
the LP bound, one has to deal with the dual program, see  \cite{BV1}.

\end{document}